\newcommand{\RN}[1]{%
  \textup{\uppercase\expandafter{\romannumeral#1}}%
}
\newcommand{\pl}{\partial}
\newcommand{\rw}{\rightarrow}
\newcommand{\Lb}{\underline{L}}
\newcommand{\chib}{\underline{\chi}}
\newcommand{\R}{\mathbb{R}}
\newcommand{\lt}{\left}
\newcommand{\rt}{\right}
\newcommand{\na}{\nabla}
\renewcommand{\hat}{\widehat}
\newcommand{\cb}{\underline{c}}
\newcommand{\nb}{\underline{n}}
\newcommand{\tr}{\mbox{tr}}
\newcommand{\Fb}{\underline{F}}
\newtheorem{theorem}{Theorem}
\newtheorem{proposition}[theorem]{Proposition}
\newtheorem{prop}[theorem]{Proposition}
\newtheorem{corollary}[theorem]{Corollary}
\newtheorem{lemma}[theorem]{Lemma}
\theoremstyle{definition}
\newtheorem{definition}[theorem]{Definition}
\newtheorem{remark}[theorem]{Remark}
\title[Evolution of conserved quantities at null infinity]{Evolution of angular momentum and center of mass at null infinity}
\author[P.-N. Chen, J. Keller, M.-T. Wang, Y.-K. Wang, and S.-T. Yau]{Po-Ning Chen, Jordan Keller, \\ Mu-Tao Wang, Ye-Kai Wang, and Shing-Tung Yau}
\numberwithin{theorem}{section}
\numberwithin{equation}{section}
\begin{document}

\begin{abstract}
We study how conserved quantities such as angular momentum and center of mass evolve with respect to the retarded time at null infinity, which is described in terms of a Bondi-Sachs coordinate
system. These evolution formulae complement the classical Bondi mass loss formula for gravitational radiation. They are further expressed in terms of the potentials of the shear and news tensors.
The consequences that follow from these formulae are (1) Supertranslation invariance of the fluxes of the CWY conserved quantities.  
(2) A conservation law of angular momentum  \`a la Christodoulou. (3) A duality paradigm for null infinity. In particular, the supertranslation invariance distinguishes the CWY angular momentum and center of mass from the classical definitions.
\end{abstract}

\thanks{P.-N. Chen is supported by NSF grant DMS-1308164 and Simons Foundation collaboration grant \#584785, M.-T. Wang is supported by NSF grant DMS-1810856, Y.-K. Wang is supported by MOST Taiwan grant 107-2115-M-006-001-MY2, 109-2628-M-006-001 -MY3 and S.-T. Yau is supported by NSF grants  PHY-0714648 and DMS-1308244. The authors would like to thank the National Center for Theoretical Sciences at National Taiwan University where part of this research was carried out. This material is based upon work supported by the National Science Foundation under Grant No. DMS-1810856.} 

\maketitle

\section{Introduction}
In this article, we study the evolution of angular momentum and center of mass at null infinity of asymptotically flat vacuum spacetimes. These evolution formulae complement the classical Bondi mass loss formula for gravitational radiations. We are particularly interested in the total flux of angular momentum and center of mass.  

For a good notion of conserved quantities, one expects that the total flux is independent of the choice of coordinate systems. However, as indicated by Penrose \cite{Penrose2}, the notion of ``angular momentum carried away by gravitational radiation"
can be shifted by supertranslations, an infinite dimensional symmetry at null infinity. Such ambiguity has been a crucial obstacle to a clear understanding of conserved quantities at null infinity. In this article, we consider both the classical and the Chen-Wang-Yau (CWY) \cite{CWY3} definitions for angular momentum and center of mass at null infinity. A key result is the supertranslation invariance of the flux of the 
CWY angular momentum and center of mass. This invariance distinguishes the CWY definitions from the classical definitions.

Consider the future null infinity $\mathscr{I}^+$  of an asymptotically flat spacetime, which is described in terms of a Bondi-Sachs coordinate system. $\mathscr{I}^+$ is identified with $I \times S^2$, where $I\subset (-\infty, +\infty)$ is an interval parametrized by the retarded time $u$ and $S^2$ is the standard unit 2-sphere equipped with the standard round metric $\sigma_{AB}$.  Let $m$ denote the mass aspect, $N_A$ the angular momentum aspect, $C_{AB}$ the shear tensor, and $N_{AB}$ the news tensor of $\mathscr{I}^+$.  One can view $m$ as a smooth function, $N_A$ a smooth one-form, and $C_{AB}$ and $N_{AB}$ smooth symmetric traceless 2-tensors (with respect to $\sigma_{AB}$) on $S^2$ that depend on $u$. In particular, $\partial_u C_{AB}=N_{AB}$. See a brief description of $\mathscr{I}^+$ in the Bondi-Sachs coordinates and the definitions of these quantities in Section 2.

All integrals in this paper on the sphere are taken over the standard two-sphere $S^2$ with the standard round metric $\sigma_{AB}$. 
We take the standard formulae for energy and linear momentum:
\begin{equation}\begin{split} E &= \int_{S^2} 2m\\
 P^k &= \int_{S^2} 2m \tilde{X}^k, k=1, 2, 3\end{split}\end{equation}
where $\tilde{X}^k, k=1, 2 ,3$ are the standard coordinate functions on $\R^3$ restricted to the unit sphere $S^2$. 

Furthermore, we consider the classical angular momentum 
\begin{equation}\label{angmom}
\tilde J^k=\int_{S^2} \epsilon^{AB}\nabla_{B}\tilde{X}^{k}[N_A-\frac{1}{4}C_{A}^{\,\,\,\,D}\nabla^B C_{DB}], 
\end{equation}
and the classical center of mass 
\begin{equation}\label{com}
\tilde C^k = \int_{S^2} \nabla^{A}\tilde{X}^{k}[N_A- u \nabla_A m-\frac{1}{4}C_{A}^{\,\,\,\,D}\nabla^B C_{DB} - \frac{1}{16}\nabla_{A}(C_{DE}C^{DE})],
\end{equation}
where $\nabla_A$ denotes the covariant derivative with respect to $\sigma_{AB}$, and $\epsilon_{AB}$ denotes  the volume form of $\sigma_{AB}$ and $k=1, 2, 3$. The indexes are raised, lowered, and contracted with respect to $\sigma_{AB}$. Our definition is that of Dray-Streubel \cite{DS}. See Section III.B of Flanagan-Nichols \cite{Flanagan-Nichols} for details.


\begin{remark}
In the above definitions of conserved quantities, we omit the constant $\frac{1}{8\pi}$.
\end{remark}

Furthermore, we consider the CWY angular momentum $J^k$ and center of mass $C^k$ as the limits of the CWY quasi-local angular momentum and center of mass \cite{CWY3,CWY4} on $\mathscr{I}^+$  evaluated in \cite{KWY}.
\[ J^{k} = \int_{S^2}  \epsilon^{AB} \na_B \tilde X^k \lt( N_{A} -\frac{1}{4}C_{AB}\nabla_{D}C^{DB} - c\nabla_{A}m   \rt)
\]
\[
\begin{split}
 C^k &= \int_{S^2} \nabla^{A}\tilde{X}^{k} \Bigg[ N_A - u\na_A m- \frac{1}{4} C_{AB} \na_D C^{DB} - \frac{1}{16} \na_A \lt( C^{DE}C_{DE}\rt) \\
&\hspace{2.5cm} - c \na_A m  + 2 \epsilon_{AB} (\na^B \cb) m \Bigg]\\
&\quad  +  \int_{S^2}  3 \tilde X^k cm - \frac{1}{4} \tilde X^k \na_A \Fb^{AB} \na^D \Fb_{DB}
\end{split}
\]
where $c$ and $\underline{c}$ are the potentials of $C_{AB}$, as given in \eqref{CAB} and $\Fb_{AB} = \frac{1}{2} (\epsilon_{AD}\na_B\na^D \cb + \epsilon_{BD}\na_A\na^D \cb).$ 
For definiteness, the potentials are assumed to be supported in the $\ell\ge 2$ modes.

In Theorem 11 and Theorem 16 of \cite{KWY}, it is shown that $J^k$ and $\frac{ C^k}{E}$ are the limit of the Chen-Wang-Yau quasi-local angular momentum and center of mass (omitting constant $1/8\pi$) under the zero linear momentum assumption 
\begin{align}\label{zero momentum}
\int_{S^2} m(u,x) \tilde X^i =0.
\end{align}
The CWY angular momentum and center of mass modify the classical definitions as follows:
\begin{equation}
\begin{split}
 J^{k} = & \tilde J^k - \int_{S^2}  \epsilon^{AB} \na_B \tilde X^k   c\nabla_{A}m  \\
 C^{k}= & \tilde C^k +\int_{S^2} \nabla^{A}\tilde{X}^{k} \lt ( - c \na_A m  + 2 \epsilon_{AB} (\na^B \cb) m \rt)\\
 & \quad +  \int_{S^2}  3 \tilde X^k cm - \frac{1}{4} \tilde X^k \na_A \Fb^{AB} \na^D \Fb_{DB}
\end{split}
\end{equation}
The  correction terms come from solving the optimal isometric embedding equation in the theory of Wang-Yau quasilocal mass \cite{Wang-Yau1,Wang-Yau2} and are non-local. They provide the reference terms  that are critical in the Hamiltonian approach of defining conserved quantities. See \cite{KS} for a definition of angular momentum in the context of perturbations of Kerr, in which the referencing is achieved by the uniformization theorem.

The ten conserved quantities $(E, P^k, \tilde J^k, \tilde C^k)$, or $(E, P^k, J^k, C^k)$, are functions on $I$ that depend on the retarded time $u$. We compute the derivatives of these conserved quantities with respect to $u$. In particular, for the classical angular momentum and center of mass, we obtain

 \begin{theorem}\label{evolution1} The classical angular momentum $\tilde J^k$ and center of mass $\tilde C^k$, $k=1,2,3$ evolve according to the following:
\begin{align}
\partial_u \tilde J^k &=\frac{1}{4}\int_{S^2} \lt[ \epsilon^{AE} \nabla_E \tilde{X}^k  (C_{AB}\nabla_{D}N^{BD} -N_{AB}\nabla_{D}C^{BD}) +  \tilde{X}^k  \epsilon^{AB} (C_{A}^{\,\,\,\,D} N_{DB}) \rt],
\label{dujk1}\\
\partial_u \tilde C^k &= \frac{1}{4}\int_{S^2} \lt[ \nabla^{A}\tilde{X}^{k} \lt( \frac{u}{2}\nabla_A|N|^2 + C_{AB}\nabla_{D}N^{BD} - N_{AB}\nabla_{D}C^{BD} \rt) \rt]. \label{duck1}
\end{align}
\end{theorem}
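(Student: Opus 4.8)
The plan is to differentiate \eqref{angmom} and \eqref{com} under the integral sign. This is legitimate because $S^2$, the round metric $\sigma_{AB}$, the functions $\tilde X^k$, and hence the fields $\nabla^A\tilde X^k$ and the rotation fields $\epsilon^{AB}\nabla_B\tilde X^k$, are all independent of $u$. One then substitutes the Bondi--Sachs evolution equations from Section 2: the relation $\partial_u C_{AB}=N_{AB}$, the mass loss formula $\partial_u m=\tfrac14\nabla^A\nabla^B N_{AB}-\tfrac18 N_{AB}N^{AB}$, and the evolution equation for the angular momentum aspect $\partial_u N_A$, whose right-hand side consists of $\nabla_A m$, a term that is cubic in derivatives of and linear in $C_{AB}$, and several terms quadratic in $C_{AB}$ and $N_{AB}$ (among them a ``curl-type'' divergence $\nabla_D(C_A{}^{B}N_B{}^{D}-N_A{}^{B}C_B{}^{D})$, which in two dimensions is proportional to $\epsilon_{AD}$ times a scalar, and gradient-type pieces). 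Everything after substitution is integration by parts on $S^2$, governed by $\nabla_A\nabla_B\tilde X^k+\sigma_{AB}\tilde X^k=0$ (equivalently $\Delta\tilde X^k=-2\tilde X^k$), the covariant constancy of $\epsilon_{AB}$, the commutation identities for second covariant derivatives on $S^2$ (where the Gauss curvature $1$ enters), and the trace freeness of $C_{AB}$ and $N_{AB}$.

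I would first dispose of every term that is linear in $m$ or linear in $C_{AB}$. For $\tilde J^k$, the contribution $\int\epsilon^{AB}\nabla_B\tilde X^k\,\nabla_A m$ vanishes after one integration by parts, since $\nabla^A(\epsilon_{AB}\nabla^B\tilde X^k)=\epsilon_{AB}\nabla^A\nabla^B\tilde X^k=0$. For $\tilde C^k$, the two occurrences of $\nabla_A m$ --- one from $\partial_u N_A$, one from $-\partial_u(u\nabla_A m)=-\nabla_A m-u\nabla_A\partial_u m$ --- cancel, leaving only $-u\nabla_A\partial_u m$; inserting the mass loss formula, its linear part $-\tfrac{u}{4}\nabla_A\nabla^B\nabla^C N_{BC}$ integrates to zero against $\nabla^A\tilde X^k$ (use $\nabla^B\nabla^C\tilde X^k=-\sigma^{BC}\tilde X^k$ and the trace freeness of $N_{BC}$), while its quadratic part contributes precisely $\tfrac{u}{8}\nabla_A|N|^2$, that is, the $\tfrac14\cdot\tfrac{u}{2}\nabla_A|N|^2$ of \eqref{duck1}. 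The cubic-in-$C$ term of $\partial_u N_A$ is handled the same way: iterated integration by parts turns it into contractions of $\nabla_A\nabla_B\tilde X^k=-\sigma_{AB}\tilde X^k$ against $C_{AB}$ and its trace, which vanish.

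What survives are the quadratic terms: those from $\partial_u N_A$, the two terms in $-\tfrac14\partial_u(C_A{}^{D}\nabla^B C_{DB})=-\tfrac14 N_A{}^{D}\nabla^B C_{DB}-\tfrac14 C_A{}^{D}\nabla^B N_{DB}$, and, for $\tilde C^k$ only, the term $-\tfrac{1}{16}\partial_u\nabla_A(C_{DE}C^{DE})=-\tfrac18\nabla_A(C_{DE}N^{DE})$. The strategy is to integrate by parts so that all derivatives land on $\nabla^A\tilde X^k$ or $\epsilon^{AB}\nabla_B\tilde X^k$. The gradient-type pieces $\nabla_A(\,\cdot\,)$ collapse against $\nabla^A\tilde X^k$ to multiples of $\int\tilde X^k(\,\cdot\,)$ by $\Delta\tilde X^k=-2\tilde X^k$, and combine with the $\tfrac{1}{16}$-term and the gradient pieces of $\partial_u N_A$; against the divergence-free field $\epsilon^{AB}\nabla_B\tilde X^k$ they vanish, so they never appear in \eqref{dujk1}. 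For the curl-type piece $\int\epsilon^{AB}\nabla_B\tilde X^k\,\nabla_D M_A{}^{D}$ with $M_A{}^{D}=C_A{}^{B}N_B{}^{D}-N_A{}^{B}C_B{}^{D}$, integration by parts together with $\nabla_D(\epsilon^{AB}\nabla_B\tilde X^k)=-\epsilon^{A}{}_{D}\tilde X^k$ produces exactly the term $\int\tilde X^k\,\epsilon^{AB}C_A{}^{D}N_{DB}$ of \eqref{dujk1} (using $\epsilon^{AB}M_{AB}=2\,\epsilon^{AB}C_A{}^{D}N_{DB}$); against $\nabla^A\tilde X^k$ the same piece gives $\int\tilde X^k\,\tr M$, which is zero because $M$ is trace free --- which is why \eqref{duck1} has no analogous term. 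Collecting the remaining quadratic terms then produces the common factors $\tfrac14\big(C_{AB}\nabla_D N^{BD}-N_{AB}\nabla_D C^{BD}\big)$ in both formulae.

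The argument is conceptually short, and the real obstacle is entirely in the last step: checking that the several quadratic terms in $\partial_u N_A$, the shear correction $-\tfrac14 C_A{}^{D}\nabla^B C_{DB}$, and the $C_{DE}C^{DE}$ term recombine with exactly the coefficients displayed in \eqref{dujk1}--\eqref{duck1}, and that no linear remainder survives. This is where one must be most careful with the commutation of covariant derivatives on $S^2$ and with the precise normalizations of the Bondi--Sachs evolution equations of Section 2, since any sign or factor slip there propagates straight into the final coefficients.
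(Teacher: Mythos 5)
Your proposal is correct and follows essentially the same route as the paper: differentiate under the integral, substitute $\partial_u C_{AB}=N_{AB}$, the mass loss formula, and the evolution equation for $N_A$; kill the linear terms (the third-derivative term in $C$ and the $\nabla_A m$ contributions) by integration by parts against the Killing/conformal Killing fields; and decompose $\nabla_B(C^{BD}N_{DA})$ into its trace part and its antisymmetric part proportional to $\epsilon_{AB}$, exactly as in the paper's Proposition \ref{AM_aspect_evol}. The coefficient bookkeeping you defer to the end works out as you describe (e.g.\ $\tfrac12 C_{AB}\nabla_DN^{DB}-\tfrac14 C_A^{\;\,D}\nabla^BN_{DB}=\tfrac14 C_{AB}\nabla_DN^{BD}$, and the gradient-type pieces cancel against the $\tfrac1{16}$-term via $\partial_u(C_{DE}C^{DE})=2C_{BE}N^{BE}$), matching the paper's computation.
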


The evolution formulae \eqref{dujk1}  and \eqref{duck1} can be further expressed in  terms of the potentials  of $C_{AB}$ and $N_{AB}$:
\begin{theorem}\label{evolution2_bracket}
Suppose $c$ and $\underline{c}$ are the potentials of $C_{AB}$ and $n$ and $\underline{n}$ are the potentials of  $N_{AB}$, as given in \eqref{CAB} and \eqref{NAB}, then 
\begin{equation}\label{dubracket}\begin{split}\partial_u \tilde J^k&=\frac{1}{8} \int_{S^2} \tilde{X}^k ( [ c, \Delta(\Delta+2) n]_1+[ \underline{c}, \Delta(\Delta+2) \underline{n}]_1)\\
\partial_u\tilde  C^k&=\frac{1}{8}\int_{S^2} \tilde{X}^k \Big ( u [((\Delta+2)n)^2 +((\Delta+2) \underline{n})^2- 4\epsilon^{AB} \nabla_A  n \nabla_B (\Delta+2)\underline{n}] \\
& \qquad  \qquad  \quad + [ (\Delta+2) c, (\Delta+2) n]_2+[ (\Delta+2) \underline{c}, (\Delta+2) \underline{n}]_2\Big),\end{split}\end{equation} where $[\cdot, \cdot]_1$ is the Poisson bracket on $S^2$
defined in \eqref{bracket1} and $[\cdot, \cdot]_2$ is another bracket  on $S^2$ defined in \eqref{bracket2}. 
\end{theorem}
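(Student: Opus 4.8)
The plan is to start from the evolution formulae \eqref{dujk1} and \eqref{duck1} of Theorem~\ref{evolution1}, which are already stated in terms of $C_{AB}$ and $N_{AB}$, and to translate every term into the scalar potentials. Recall that a symmetric traceless 2-tensor $C_{AB}$ on $S^2$ admits a unique decomposition into an ``electric'' potential $c$ and a ``magnetic'' potential $\underline{c}$ (each supported in $\ell\ge 2$ modes) via the operator appearing in \eqref{CAB}, and similarly $N_{AB}$ is built from $n$ and $\underline n$ via \eqref{NAB}. The first step is therefore to record the explicit formulae for $\nabla_D C^{DB}$, $\nabla_D N^{DB}$, and the divergence $\na_A\na_D C^{DB}$ in terms of $c,\underline c$: each such divergence is again a one-form built from gradients of $(\Delta+2)c$ and $(\Delta+2)\underline c$ (together with an $\epsilon$-rotation for the magnetic part). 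Substituting these into the integrands of \eqref{dujk1}--\eqref{duck1} turns each quadratic expression in $(C,N)$ into a quadratic expression in the scalars $c,n$ and $\underline c,\underline n$ and their covariant derivatives.

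The second step is to integrate by parts on $S^2$ to move all derivatives off $\tilde X^k$ (or off $\nabla^A\tilde X^k$) and onto the potentials, using the key fact that $\tilde X^k$ is an $\ell=1$ eigenfunction of $\Delta$, so $(\Delta+2)\tilde X^k=0$ and $\na_A\na_B\tilde X^k=-\sigma_{AB}\tilde X^k$. This kills many terms and, for the cross terms between the electric and magnetic sectors, produces exactly the structure of the two brackets: the Poisson bracket $[f,g]_1 = \epsilon^{AB}\na_A f\,\na_B g$ of \eqref{bracket1} and the symmetric second-order bracket $[f,g]_2$ of \eqref{bracket2}. Concretely, one shows that the terms involving $C_{AB}\na_D N^{BD}-N_{AB}\na_D C^{BD}$ contracted against $\tilde X^k$ collapse, after integration by parts, to $\tilde X^k$ times $[c,\Delta(\Delta+2)n]_1+[\underline c,\Delta(\Delta+2)\underline n]_1$ up to the overall constant; the purely electric and purely magnetic self-interaction terms (which individually would be total derivatives against an $\ell=1$ function) cancel, leaving only the genuinely mixed brackets. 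For $\partial_u\tilde C^k$ the same mechanism applies, with the additional term $\tfrac{u}{2}\na_A|N|^2$ contributing the $u$-dependent piece $u[((\Delta+2)n)^2+((\Delta+2)\underline n)^2-4\epsilon^{AB}\na_A n\,\na_B(\Delta+2)\underline n]$ once $|N|^2=N_{AB}N^{AB}$ is rewritten via the potentials.

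The main obstacle I anticipate is bookkeeping the divergence and curl identities for the potential representation carefully enough that the second-order operators land with the right powers of $(\Delta+2)$ and $\Delta$: the tensor-to-scalar dictionary introduces factors of $\Delta$, $\Delta+2$, and commutators $[\na_A,\na_B]$ that must be tracked through two integrations by parts, and it is easy to be off by such a factor or to mishandle the $\ell=1$ projection (the potentials are only defined modulo $\ell\le 1$, which is precisely why they are stipulated to be supported in $\ell\ge2$). A secondary subtlety is verifying that the ``diagonal'' terms — electric against electric, magnetic against magnetic — really do integrate to zero against $\tilde X^k$ and $\nabla^A\tilde X^k$; this uses the antisymmetry built into the Poisson bracket and, for $[\cdot,\cdot]_2$, the specific definition in \eqref{bracket2} together with $(\Delta+2)\tilde X^k=0$. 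Once these identities are in hand, the computation is a (lengthy but mechanical) reduction, and matching constants at the end fixes the overall factor $\tfrac18$.
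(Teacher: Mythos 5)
Your overall strategy---substitute the potential representation of the divergences $\nabla_D C^{DB}$, $\nabla_D N^{BD}$ into the formulae of Theorem \ref{evolution1}, integrate by parts against $\tilde X^k$ using $\Delta\tilde X^k=-2\tilde X^k$ and $\nabla_A\nabla_B\tilde X^k=-\tilde X^k\sigma_{AB}$, and rewrite $|N|^2$ via the potentials---is exactly the paper's route (packaged there as \eqref{YCN}, \eqref{YCN_bracket}, \eqref{polarize} and Lemma \ref{news_square_integral}). However, your description of the cancellation mechanism is backwards, in a way that would derail the computation if you followed it. The surviving brackets $[c,\Delta(\Delta+2)n]_1$ and $[\underline c,\Delta(\Delta+2)\underline n]_1$ pair the \emph{electric} potential of $C$ with the \emph{electric} potential of $N$, and likewise in the magnetic sector; these are precisely the ``diagonal'' self-interaction terms that you claim cancel and that you propose to verify integrate to zero. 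It is the mixed electric--magnetic pairings ($c$ against $\underline n$, $\underline c$ against $n$) that drop out: for $\partial_u\tilde J^k$ by cancellation between the $C_{AB}\nabla_D N^{BD}-N_{AB}\nabla_D C^{BD}$ contribution and the $\tilde X^k\epsilon^{AB}C_A^{\ \,D}N_{DB}$ contribution, and for $\partial_u\tilde C^k$ because $\int_{S^2}\nabla^A\tilde X^k\,\epsilon_A^{\ \,D}\nabla_D(\cdot)=0$ by symmetry of the Hessian of $\tilde X^k$. The only mixed term that survives is the one sitting inside $u|N|^2$. Your proposed sanity check (``diagonal terms integrate to zero'') would falsely flag the correct answer as an error.

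A related gap: you attribute the entire Poisson-bracket expression for $\partial_u\tilde J^k$ to the term $\epsilon^{AE}\nabla_E\tilde X^k\,(C_{AB}\nabla_D N^{BD}-N_{AB}\nabla_D C^{BD})$ and never say how you will handle $\tilde X^k\epsilon^{AB}C_A^{\ \,D}N_{DB}$. That term is not negligible: the first term alone produces $\tfrac12\int_{S^2}\tilde X^k\epsilon^{AB}[\nabla_A(\Delta+2)c\,\nabla_B(\Delta+2)n+\nabla_A(\Delta+2)\underline c\,\nabla_B(\Delta+2)\underline n]$ plus nonvanishing mixed terms, and only after adding the second term---most cleanly via the observation that $\epsilon_A^{\ \,D}C_{DB}$ has potentials $(-\underline c,c)$, so the polarized identity \eqref{polarize} applies---do the mixed terms cancel and the remaining coefficients collapse to $[c,\Delta(\Delta+2)n]_1+[\underline c,\Delta(\Delta+2)\underline n]_1$ via \eqref{ibp1}. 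As written, your plan either omits this term or mis-assigns its role, so the reduction you call ``lengthy but mechanical'' would not close without correcting both points.
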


The Bondi-Metzner-Sachs (BMS) group acts on $\mathscr{I}^+$. It includes supertranslations which we will review in further details in Section 5. The ambiguity of supertranslations has presented an essential difficulty to understanding the structure of $\mathscr{I}^+$ since the 1960s. Among $(m, N_A, C_{AB}, N_{AB})$, only $N_{AB}$ is a supertranslation invariant quantity.  
It is natural to ask whether total flux of angular momentum is invariant under a supertranslation. For the classical angular momentum, we prove that
\begin{corollary}[Theorem \ref{supertranslation invariance of total flux}]
Suppose $\mathscr{I}^+$ extends from $u=-\infty$ to $u=+\infty$ and the news tensor decays as
\[ N_{AB}( u,x) = O(|u|^{-1-\varepsilon}) \mbox{ as } u \rw \pm\infty,
\]
then the total flux of the classical angular momentum $\tilde J^k$ is supertranslation invariant if and only if
\begin{align}
\lim_{u \rw +\infty} m(u,x) - \lim_{u \rw -\infty}m(u,x)
\end{align}
is supported in the $l\le 1$ modes.
\end{corollary}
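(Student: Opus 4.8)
The plan is to express the total flux as $\tilde J^k(+\infty)-\tilde J^k(-\infty)$ and to compute the same difference for the supertranslated data. The decay hypothesis $N_{AB}=O(|u|^{-1-\varepsilon})$ is used first to make this legitimate: by \eqref{dujk1} the flux integrand is $O(|u|^{-1-\varepsilon})$ and hence integrable, and, propagating the decay through the Bondi evolution and constraint equations recalled in Section 2 (so that $\partial_u m$ and $\partial_u\tilde N_A$ are themselves $O(|u|^{-1-\varepsilon})$, where $\tilde N_A:=N_A-\tfrac14 C_A{}^D\nabla^B C_{DB}$ is the Dray--Streubel angular momentum aspect appearing in \eqref{angmom}), one obtains finite limits $m(\pm\infty,\cdot)$, $C_{AB}(\pm\infty,\cdot)$ with $N_{AB}\to 0$ at $u=\pm\infty$. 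Write $\Delta m:=m(+\infty,\cdot)-m(-\infty,\cdot)$.

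Next I would insert the supertranslation laws from Section 5. For a pure supertranslation with parameter $f$ supported in $\ell\ge 2$: $u\mapsto u-f$; the news is merely reparametrized, $N_{AB}(u,x)\mapsto N_{AB}(u-f,x)$; and $C_{AB}(u,x)\mapsto C_{AB}(u-f,x)-(2\nabla_A\nabla_B f-\sigma_{AB}\Delta f)$. The mass aspect and $N_A$ acquire genuine anomalies, but the point of including the quadratic term $-\tfrac14 C_A{}^D\nabla^B C_{DB}$ in \eqref{angmom} (following Dray--Streubel \cite{DS}, see also \cite{Flanagan-Nichols}) is that it arranges the transformation of the full aspect $\tilde N_A$ so that only $3m\nabla_A f$ survives, the remaining anomalous terms all being proportional to $N_{AB}$. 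Therefore, evaluating at the cuts $u=\pm\infty$ where $N_{AB}=0$, the supertranslated value of $\tilde J^k$ at those cuts differs from the original one by the addition of $3m(\pm\infty,\cdot)\nabla_A f$ inside the integral, so the change in the total flux is
\[
\delta(f)\;=\;3\int_{S^2}\epsilon^{AB}\nabla_B\tilde X^k\;\Delta m\;\nabla_A f .
\]
Integrating by parts on $S^2$ and using $\nabla_A\nabla_B\tilde X^k=-\sigma_{AB}\tilde X^k$ removes the term in which the derivative falls on $\tilde X^k$, giving $\delta(f)=3\int_{S^2} f\,[\tilde X^k,\Delta m]_1$ with $[\cdot,\cdot]_1$ the Poisson bracket of \eqref{bracket1}; in particular the memory tensor $C_{AB}(+\infty)-C_{AB}(-\infty)$ does not enter.

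Finally, a mode argument on $S^2$: $[\tilde X^k,\Delta m]_1$ is, up to sign, the derivative of $\Delta m$ along the $k$-th rotation generator, and rotations preserve each eigenspace of $\Delta$. Hence $\delta(f)=0$ for every supertranslation parameter (every $f$ supported in $\ell\ge 2$) and every $k=1,2,3$ if and only if each $[\tilde X^k,\Delta m]_1$ is supported in $\ell\le 1$, which holds if and only if the $\ell\ge 2$ part of $\Delta m$ is annihilated by all three rotations, i.e.\ is rotation invariant, i.e.\ vanishes; conversely, if $\Delta m$ is supported in $\ell\le 1$ then $[\tilde X^k,\Delta m]_1$ is too and pairs to zero against every $f$ supported in $\ell\ge 2$. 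This yields the asserted equivalence. The main obstacle is the middle step: one must have the explicit supertranslation transformation of the angular momentum aspect and verify the cancellation --- that on a cut with $N_{AB}=0$ the quadratic correction $-\tfrac14 C_A{}^D\nabla^B C_{DB}$ exactly removes the non-$m$ anomaly, so that $\delta(f)$ collapses to the clean formula above; keeping track of the signs and carrying out the $S^2$ integrations by parts there is where essentially all of the work lies.
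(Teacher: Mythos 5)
Your overall strategy is genuinely different from the paper's: you evaluate the transformed angular momentum at the limiting cuts using the supertranslation law for the angular momentum aspect, whereas the paper never touches that law --- it integrates the flux formula \eqref{dujk} over all of $u$, changes variables $u=\bar u+f$ using only the (much simpler) transformation rules \eqref{supertranslation shear}--\eqref{supertranslation news} for the shear and news, and then invokes the mass--loss formula. Your route is viable in principle (it is essentially what the paper does in Section 6 for the vanishing-news case), and your final mode argument on $S^2$ is correct, but the middle step --- which you yourself flag as ``where essentially all of the work lies'' --- contains a genuine gap, and the formula you extract from it is wrong.

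Two things go wrong. First, the non-$m$ anomalies in the transformation of $N_A$ are \emph{not} all proportional to $N_{AB}$: there is a term $-\frac{3}{4}P_{BA}\na^B f$ with $P_{BA}=\na_B\na^D C_{DA}-\na_A\na^D C_{DB}$ built from the shear alone, which survives at the cuts; cancelling it against the variation of the quadratic term $-\frac14 C_A{}^D\na^B C_{DB}$ requires the nontrivial integral identity of Theorem \ref{thm_int_1}, not a pointwise cancellation. Second, the aspect $\tilde N_A$ does not have a pointwise limit at $u=\pm\infty$: from $\pl_u N_A=\na_A m-\frac14\na^B P_{BA}+\cdots$ it grows linearly in $u$ (only its integral against the rotation field converges), so the reparametrization $N_A(\bar u+f,x)$ versus $N_A(\bar u,x)$ contributes $f\,\pl_u\tilde N_A\to f\lt(\na_A m(\pm)-\tfrac14\na^B P_{BA}(\pm)\rt)$ at the cut. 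You drop this term; its $\int f\,Y^A\na_A m$ piece shifts your coefficient, and its $P_{BA}$ piece is exactly what Theorem \ref{thm_int_1} needs. The correct result --- confirmed both by the paper's flux computation \eqref{diff_angmom_flux_supertranslation} and by its Section 6 cut computation --- is $\delta(f)=-2\int_{S^2}f\,Y^A\na_A\big(m(+)-m(-)\big)$, not $-3\int_{S^2}f\,Y^A\na_A\big(m(+)-m(-)\big)$. The if-and-only-if conclusion happens to be insensitive to the coefficient, but as written your proof does not establish the formula it rests on, and the claimed cancellation mechanism is incorrect.
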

In particular, if $\lim_{u \rw +\infty} m(u,x) - \lim_{u \rw -\infty}m(u,x)$ contains $l\ge 2$ modes, the total flux of the classical angular momentum will depend on the supertranslation. This demonstrates how the total flux of the classical angular momentum can be shifted by a supertranslation. On the other hand, we show that the CWY angular momentum is free of such supertranslation ambiguity. 

\begin{theorem}[Theorem \ref{total_flux_CWY_angular}]
Suppose the news tensor decays as
\[ N_{AB}( u,x) = O(|u|^{-1-\varepsilon}) \mbox{ as } u \rw \pm\infty.
\]
Then the total flux of $J^k$ is  supertranslation invariant.
\end{theorem}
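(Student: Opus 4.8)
The plan is to rewrite the total flux entirely in terms of the data at $u=\pm\infty$ and then to compute how that data transforms under a supertranslation, using the classical result of Theorem~\ref{supertranslation invariance of total flux} for the classical part. Write $m^\pm=\lim_{u\to\pm\infty}m$, $c^\pm=\lim_{u\to\pm\infty}c$, and let $\mathcal F[Q]$ denote the total flux $\lim_{u\to+\infty}Q-\lim_{u\to-\infty}Q$ of a quantity $Q$; recall $J^k=\tilde J^k-\int_{S^2}\epsilon^{AB}\nabla_B\tilde X^k\,c\,\nabla_A m$. Under the decay hypothesis the relevant fields stabilize: integrating $\partial_u C_{AB}=N_{AB}$ gives the limits $C^\pm_{AB}$, hence $c^\pm$ and $\underline c^\pm$, and combining the Bondi mass loss formula with $\partial_u C_{AB}=N_{AB}$ shows $\partial_u(m-\frac14\nabla_A\nabla_B C^{AB})$ is integrable, so $m$ has limits $m^\pm$. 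Inspecting the integrand of Theorem~\ref{evolution1} together with $\partial_u$ of the correction term, every summand is a bounded factor times one of $N_{AB}$, $\nabla N_{AB}$, or the news potential, so $\partial_u J^k=O(|u|^{-1-\varepsilon})$, the total flux is the absolutely convergent integral $\int_{-\infty}^{+\infty}\partial_u J^k\,du$, and since the correction term carries no explicit factor of $u$, it converges as $u\to\pm\infty$; therefore
\[
\mathcal F[J^k]=\mathcal F[\tilde J^k]-\Big(\int_{S^2}\epsilon^{AB}\nabla_B\tilde X^k\,c^+\nabla_A m^+\;-\;\int_{S^2}\epsilon^{AB}\nabla_B\tilde X^k\,c^-\nabla_A m^-\Big).
\]
(One caveat: $N_A$ itself need not converge since $\partial_u N_A$ contains $\nabla_A m$, but the possibly divergent part is $\sim u\,\nabla_A m^\pm$, a pure gradient, which is killed upon contraction with $\epsilon^{AB}\nabla_B\tilde X^k$ and integration over $S^2$, so $\tilde J^k(\pm\infty)$ and the displayed identity remain meaningful.)

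Next I would invoke the supertranslation transformation laws from Section~5. Under a supertranslation by $f$ the news $N_{AB}$ is invariant; $C_{AB}$ shifts by $-(2\nabla_A\nabla_B-\sigma_{AB}\Delta)f$, so $\underline c$ is unchanged while $c$ shifts by a fixed multiple of $f$ that does not depend on $u$, in particular $\delta_f c$ is the same at $u=+\infty$ and $u=-\infty$. And $\delta_f m$ is a combination of $f\,\partial_u m$ and terms linear in $N_{AB}$ and in $\nabla f$, all of which vanish as $u\to\pm\infty$; hence $\delta_f m^\pm=0$, i.e.\ the limiting mass aspects $m^\pm$, and thus $m^+-m^-$, are supertranslation invariant.

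Varying the displayed identity and using $\delta_f m^\pm=0$ together with the $u$-independence of $\delta_f c$ yields
\[
\delta_f\mathcal F[J^k]=\delta_f\mathcal F[\tilde J^k]-\int_{S^2}\epsilon^{AB}\nabla_B\tilde X^k\,(\delta_f c)\,\nabla_A(m^+-m^-).
\]
On the other hand, the computation in the proof of Theorem~\ref{supertranslation invariance of total flux} shows that after taking the difference $(+\infty)-(-\infty)$ all shear-quadratic contributions cancel and one is left with precisely $\delta_f\mathcal F[\tilde J^k]=\int_{S^2}\epsilon^{AB}\nabla_B\tilde X^k\,(\delta_f c)\,\nabla_A(m^+-m^-)$; indeed, with $Y_k^A=\epsilon^{AB}\nabla_B\tilde X^k$ the $k$-th rotation field one has $\int_{S^2}\epsilon^{AB}\nabla_B\tilde X^k\,g\,\nabla_A h=-\int_{S^2}(\mathcal L_{Y_k}g)\,h$, so this functional of $f$ vanishes identically in $f$ exactly when $m^+-m^-$ is supported in $\ell\le 1$, which reproduces the ``if and only if'' of that theorem. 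Substituting, the two right-hand terms cancel and $\delta_f\mathcal F[J^k]=0$.

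The main obstacle is that this cancellation is quantitative: the correction term $-\int_{S^2}\epsilon^{AB}\nabla_B\tilde X^k\,c\,\nabla_A m$ distinguishing the CWY angular momentum must be \emph{exactly} --- including the numerical constant --- the term whose supertranslation variation cancels $\delta_f\mathcal F[\tilde J^k]$. This is not accidental (it reflects the origin of the correction in the solution of the optimal isometric embedding equation), but making it rigorous requires matching the constant in $c\mapsto c+(\mathrm{const})\,f$ against the constant extracted from the proof of Theorem~\ref{supertranslation invariance of total flux}, and checking that no residual terms involving $\underline c^\pm$ or quadratic in the limiting shears survive in $\delta_f\mathcal F[\tilde J^k]$. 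A secondary technical burden, flagged above, is controlling the non-convergence of $N_A$ so that ``total flux'' unambiguously means $\int_{-\infty}^{+\infty}\partial_u J^k\,du$ and equals the difference of the limiting expressions.
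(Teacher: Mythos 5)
Your overall route is the same as the paper's: write $J^k=\tilde J^k-\int_{S^2}Y^A c\,\na_A m$, observe that under the news decay the limits $m(\pm)$ are unchanged by a supertranslation while the closed potential of the limiting shear shifts by a $u$-independent multiple of $f$, and then play the variation of the correction term off against the classical result \eqref{diff_angmom_flux_supertranslation}. Your handling of the convergence issues (the $\na_A m$ term in $\pl_u N_A$ being annihilated by the rotation field, the integrability of $\pl_u J^k$) also matches the paper.

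There is, however, one genuine error in the final cancellation. Since $F_{AB}=2\na_A\na_B f-\Delta f\,\sigma_{AB}$ annihilates the $\ell\le 1$ part of $f$, and the potentials are by convention supported in the $\ell\ge 2$ modes, the actual shift of the potential is $\delta_f c=-2f_{\ell\ge 2}$, \emph{not} $-2f$. On the other hand, \eqref{diff_angmom_flux_supertranslation} gives
\begin{equation*}
\delta_f\mathcal F[\tilde J^k]=-2\int_{S^2} f\, Y^A\na_A\big(m(+)-m(-)\big)
\end{equation*}
with the \emph{full} $f$. So the two terms you subtract cancel only in their $\ell\ge 2$ parts, and the residual is
\begin{equation*}
\delta_f\mathcal F[J^k]=-2\int_{S^2} f_{\ell\le 1}\, Y^A\na_A\big(m(+)-m(-)\big)=\alpha_i\,\varepsilon^{ik}_{\;\;j}\big(P^j(+)-P^j(-)\big),
\end{equation*}
which is generically nonzero whenever the linear momentum flux is nonzero; this is just the usual equivariance of angular momentum under a shift of origin. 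Your conclusion $\delta_f\mathcal F[J^k]=0$ is therefore false for supertranslations containing an $\ell=1$ (ordinary translation) component; it is correct for $f$ supported in $\ell\ge 2$, which is what "supertranslation invariant" means in the paper's convention (see the remark following the theorem in the introduction and the precise formula in Theorem \ref{total_flux_CWY_angular}). The fix is simply to track the two projections $f_{\ell\ge 2}$ and $f_{\ell\le 1}$ separately, as the paper does; no new idea is needed.
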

\begin{remark}
In the above statement,  supertranslation invariant means that it is equivariant under ordinary ($l=1$) translation and is invariant under higher mode ($l \ge 2$) of the supertranslation. See the statement of Theorem \ref{total_flux_CWY_angular} for further details.
\end{remark}

We also show that the invariance under supertranslation distinguishes the CWY center of mass from the classical center of mass. Indeed, the total flux of the classical center of mass is invariant under supertranslation if and only if $\lim_{u \rw +\infty} m(u,x) - \lim_{u \rw -\infty}m(u,x)$ is a constant function on $S^2$. On the other hand, the total flux of the CWY center of mass is always supertranslation invariant. See the statement of Theorem \ref{total_flux_CWY_COM}.

Next, we show that if a spacetime admits a Bondi-Sachs coordinate system with vanishing news tensor, then $(E, P^k, J^k, C^k)$ are constant (independent of the retarded time $u$) and supertranslation invariant. See the statement of Theorem \ref{invariant_CWY_vanish_news} for further details.

While our focus is on the study of angular momentum and center of mass in a Bondi-Sachs coordinate system, we show that the evolution formulae for the classical angular momentum can be carried over to the framework of the stability of Minkowski spacetime \cite{CK} if we take Rizzi's definition of angular momentum \cite{Rizzi, Rizzi_thesis}. This provides a conservation law of angular momentum that complements the conservation law for linear momentum of Christodoulou \cite[Equation (13)]{Christo1991}.

Another natural consequence of \eqref{dubracket} is a duality paradigm among sets of null infinity data $(m, N_A, C_{AB}, N_{AB})$, through replacing the potentials $(c,\cb,n,\nb)$ by $(-\cb, c, -\nb, n)$.

\begin{corollary}\label{duality}
Given a set of null infinity data $(m, N_A, C_{AB}, N_{AB})$ defined on $[u_1,u_2] \times S^2$,  there exists a dual set of null infinity data  $({m}^{*}, {N}^{*}_A, {C}^{*}_{AB}, {N}^{*}_{AB})$ that has the same (classical) energy, linear momentum, angular momentum, and center-of-mass.
\end{corollary}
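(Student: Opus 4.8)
The plan is to let the dual shear and news carry the swapped potentials and to recover $m^{*}$ and $N^{*}_{A}$ from the Bondi--Sachs equations. Set $(c^{*},\cb^{*}):=(-\cb,c)$ and $(n^{*},\nb^{*}):=(-\nb,n)$, and let $C^{*}_{AB}$, $N^{*}_{AB}$ be the symmetric traceless $2$-tensors on $[u_{1},u_{2}]\times S^{2}$ defined by \eqref{CAB} and \eqref{NAB} with these potentials; equivalently, $C^{*}_{AB}=\star C_{AB}$ and $N^{*}_{AB}=\star N_{AB}$, where $\star$ rotates each tangent plane by $\pi/2$, an operation that preserves symmetric traceless $2$-tensors, is a pointwise isometry, squares to $-\mathrm{id}$, and interchanges electric and magnetic parity. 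Since the potential decomposition is a $u$-independent linear isomorphism onto the $\ell\ge 2$ modes, it commutes with $\pl_u$, so $\pl_u c^{*}=-\nb=n^{*}$, $\pl_u\cb^{*}=n=\nb^{*}$, and hence $\pl_u C^{*}_{AB}=N^{*}_{AB}$. Let $m^{*}$ be the solution of the Bondi mass loss formula with news $N^{*}_{AB}$ and initial value $m^{*}(u_{1},\cdot):=m(u_{1},\cdot)$, and let $N^{*}_{A}$ be the solution of the evolution equation for the angular momentum aspect, with an initial value $N^{*}_{A}(u_{1},\cdot)$ to be chosen below; this makes $(m^{*},N^{*}_{A},C^{*}_{AB},N^{*}_{AB})$ a genuine set of null infinity data. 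I will show that all ten classical conserved quantities of the dual data coincide, as functions of $u$ on $[u_{1},u_{2}]$, with those of the original.

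First consider $E$ and $P^{k}$. Since $\star$ is a pointwise isometry, $|N^{*}|^{2}=|N|^{2}$ and $|C^{*}|^{2}=|C|^{2}$ pointwise. The term $\na^{A}\na^{B}N^{*}_{AB}$ in the Bondi mass loss formula is supported in $\ell\ge 2$ modes and therefore drops out of the $\ell\le 1$ integrals defining $E$ and $P^{k}$; consequently $\pl_u E^{*}$ is the same multiple of $\int_{S^{2}}|N^{*}|^{2}$ that $\pl_u E$ is of $\int_{S^{2}}|N|^{2}$, and $\pl_u P^{k*}$ the same multiple of $\int_{S^{2}}\tilde X^{k}|N^{*}|^{2}$ that $\pl_u P^{k}$ is of $\int_{S^{2}}\tilde X^{k}|N|^{2}$, so $\pl_u E^{*}=\pl_u E$ and $\pl_u P^{k*}=\pl_u P^{k}$. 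With $E^{*}(u_{1})=E(u_{1})$ and $P^{k*}(u_{1})=P^{k}(u_{1})$ this yields $E^{*}\equiv E$ and $P^{k*}\equiv P^{k}$ on $[u_{1},u_{2}]$. In particular $m^{*}-m$ is supported in $\ell\ge 2$ modes for every $u$.

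For $\tilde J^{k}$ and $\tilde C^{k}$ I would apply Theorem \ref{evolution2_bracket} to the dual data, so that $\pl_u\tilde J^{k*}$ and $\pl_u\tilde C^{k*}$ are the right-hand sides of \eqref{dubracket} evaluated at $(c^{*},\cb^{*},n^{*},\nb^{*})=(-\cb,c,-\nb,n)$. By bilinearity of $[\cdot,\cdot]_{1}$ and $[\cdot,\cdot]_{2}$ and linearity of $\Delta+2$, the two bracket terms in each line and the squares $((\Delta+2)n)^{2}+((\Delta+2)\nb)^{2}$ are manifestly unchanged by the swap; the only term not obviously invariant is the cross term $-4\,\epsilon^{AB}\na_{A}n\,\na_{B}(\Delta+2)\nb$, which becomes $+4\,\epsilon^{AB}\na_{A}\nb\,\na_{B}(\Delta+2)n$, and its integral against $\tilde X^{k}$ equals the original one because of the identity
\[
\int_{S^{2}}\tilde X^{k}\big([f,\Delta g]_{1}+[g,\Delta f]_{1}\big)=0\qquad\text{for all }f,g\in C^{\infty}(S^{2}),
\]
which follows from $[f,\tilde X^{k}]_{1}=R_{k}f$ for the rotation Killing field $R_{k}$ generated by $\tilde X^{k}$, the commutation $[\Delta,R_{k}]=0$, the vanishing divergence of $R_{k}$, and two integrations by parts. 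Hence $\pl_u\tilde J^{k*}=\pl_u\tilde J^{k}$ and $\pl_u\tilde C^{k*}=\pl_u\tilde C^{k}$ on $[u_{1},u_{2}]$. It remains to match on the initial cut: in \eqref{angmom} and \eqref{com} the $u\na_{A}m$ terms agree because $m^{*}-m$ is supported in $\ell\ge 2$ modes, and the $\na_{A}(C^{DE}C_{DE})$ terms agree because $|C^{*}|^{2}=|C|^{2}$, so $\tilde J^{k*}(u_{1})-\tilde J^{k}(u_{1})$ and $\tilde C^{k*}(u_{1})-\tilde C^{k}(u_{1})$ reduce to six explicit linear functionals of $N^{*}_{A}(u_{1},\cdot)-N_{A}(u_{1},\cdot)$ plus terms quadratic in $C$; since the one-forms $\epsilon^{AB}\na_{B}\tilde X^{k}$ and $\na^{A}\tilde X^{k}$ ($k=1,2,3$) are linearly independent, one can choose $N^{*}_{A}(u_{1},\cdot)$ to annihilate all six differences. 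Then $\tilde J^{k*}\equiv\tilde J^{k}$ and $\tilde C^{k*}\equiv\tilde C^{k}$ on $[u_{1},u_{2}]$, completing the argument.

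The crux is the swap-invariance of the right-hand sides of \eqref{dubracket}; since the bracket and square terms are invariant by bilinearity, this collapses entirely to the cross-term identity displayed above, which is where I expect the real work to lie. The rest is bookkeeping: confirming that $m^{*}$ and $N^{*}_{A}$ can be chosen so that the dual tuple is honest null infinity data, and that neither the $\ell\ge 2$ discrepancy $m^{*}-m$ nor the choice of $N^{*}_{A}$ on the initial cut perturbs any of the four conserved quantities.
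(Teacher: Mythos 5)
Your proposal is correct and follows essentially the same route as the paper: define the dual shear and news by the potential swap $(c,\cb,n,\nb)\mapsto(-\cb,c,-\nb,n)$ (the paper's $\varepsilon_2$), integrate the mass-loss and angular-momentum-aspect equations for $m^*$ and $N^*_A$ from $u_1$, and verify that the flux formulae in potential form are invariant under the swap, with the cross term handled via \eqref{ibp1} exactly as you do. The only (inessential) difference is at the initial cut: the paper simply takes $N^*_A(u_1,\cdot)=N_A(u_1,\cdot)$, because $\varepsilon_2$ leaves the quadratic terms $C_A^{\;\;D}\nabla^B C_{DB}$ and $C_{DE}C^{DE}$ pointwise unchanged, so the extra freedom you reserve in choosing $N^*_A(u_1,\cdot)$ is not actually needed.
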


These are dual sets of null infinity data that are indistinguishable in terms of the classical conserved quantities.

The paper is organized as follows. In Section 2, we introduce the definitions and integration by parts formulae used throughout the paper. The flux of classical conserved quantities is computed in Section 3 and is rewritten in terms of the potentials in Section 4. The aforementioned consequences of flux formulae are presented in Section 5 to Section 7. In the last section, we consider the case of quadrupole moment radiation. With the future theoretical and numerical investigation in mind, we express the flux formulae in terms of the spherical harmonics expansion of potentials explicitly.

\section{Background information}
In this section, we describe the Bondi-Sachs coordinate system and recall several useful formulae for functions and tensors on $S^2$.
\subsection{Bondi-Sachs coordinates}
In terms of a Bondi-Sachs coordinate system $(u, r,  x^2, x^3)$, near $\mathscr{I}^+$ of a vacuum spacetime, the metric takes the form
\begin{equation}\label{spacetime_metric}g_{\alpha\beta}dx^\alpha dx^\beta= -UV du^2-2U dudr+r^2 h_{AB}(dx^A+W^A du)(dx^B+W^B du).\end{equation} 

The index conventions here are $\alpha, \beta=0,1, 2, 3$, $A, B=2, 3$, and $u=x^0, r=x^1$. See \cite{BVM, MW} for more details of the construction of the coordinate system. 

The metric coefficients $U, V, h_{AB}, W^A$  of \eqref{spacetime_metric} depend on $u, r, \theta, \phi$, but $\det h_{AB}$ is independent of $u$ and $r$. These gauge conditions thus reduce the number of metric coefficients of a Bondi-Sachs coordinate system to six (there are only two independent components in $h_{AB}$). On the other hand, the boundary conditions $U\rightarrow 1$, $V\rightarrow 1$, $W^A\rightarrow 0$, $h_{AB}\rightarrow \sigma_{AB}$ are imposed as $r\rightarrow \infty$ (such boundary conditions may not be satisfied in a radiative spacetime). Here $\sigma_{AB}$ denotes a standard round metric on $S^2$. The special gauge choice implies a hierarchy among the vacuum Einstein equations, see \cite{MW, HPS}.

 Assuming the outgoing radiation condition \cite{BVM, Sachs, MW}, the boundary condition and the vacuum Einstein equation imply that as $r\rightarrow \infty$, all metric coefficients can be expanded in inverse integral powers of $r$.\footnote{The outgoing radiation condition assumes the traceless part of the $r^{-2}$ term in the expansion of $h_{AB}$ is zero. The presence of this traceless term will lead to a logarithmic term in the expansions of $W^A$ and $V$. Spacetimes with metrics which admit an expansion in terms of $r^{-j}\log^i r$ are called ``polyhomogeneous" and are studied in \cite{CMS}. They do not obey the outgoing radiation condition or the peeling theorem \cite{VK}, but they do appear as perturbations of the Minkowski spacetime by the work of Christodoulou-Klainerman \cite{CK}.} In particular (see Chrusciel-Jezierski-Kijowski \cite[(5.98)-(5.100)]{CJK} for example), 
\[\begin{split} U&=1 - \frac{1}{16r^2} |C|^2 + O(r^{-3}),\\
V&=1-\frac{2m}{r}+ \frac{1}{r^2}\lt( \frac{1}{3}\na^A N_A + \frac{1}{4} \na^A C_{AB} \na_D C^{BD} + \frac{1}{16}|C|^2 \rt) + O(r^{-3}),\\
W^A&= \frac{1}{2r^2} \na_B C^{AB} + \frac{1}{r^3} \lt( \frac{2}{3}N^A - \frac{1}{16} \na^A |C|^2 -\frac{1}{2} C^{AB} \na^D C_{BD} \rt) + O(r^{-4}),\\
h_{AB}&={\sigma}_{AB}+\frac{C_{AB}}{r}+ \frac{1}{4r^2} |C|^2 \sigma_{AB} + O(r^{-3})\end{split} \] where  $m=m(u, x^A)$ is the mass aspect, $N_A = N_A(u,x^A)$ is the angular aspect and $C_{AB}=C_{AB}(u, x^A)$ is the shear tensor of this Bondi-Sachs coordinate system. Note that our convention of angular momentum aspect differs from that of Chrusciel-Jezierski-Kijowski \cite{CJK}, $N_A =-3 N_{A(CJK)}$. Here we take norm, raise and lower indices of tensors with respect to the metric $\sigma_{AB}$. We also define the {\it news tensor} $N_{AB} = \pl_u C_{AB}$.

\subsection{Integral formulae on 2-sphere} Let $\sigma_{AB}$ be the standard round metric on $S^2$ with respect to which the indexes of tensors are raised or lowered. Let $\nabla_A$ be covariant derivative with respect to $\sigma_{AB}$. Let $\epsilon_{AB}$ be the volume form. The following identity 
\begin{align}\label{epsilon_square}
\epsilon_{AB} \epsilon_{CD} = \sigma_{AC}\sigma_{BD} - \sigma_{AD}\sigma_{BC}
\end{align}
and its contraction
\begin{align}\label{contract_epsilon_square}
\epsilon_{AB}\epsilon^A_{\;\;C} = \sigma_{BC}
\end{align}
will be used frequently.
  
The curvature formula on $S^2$ gives  \[\nabla_A \nabla_B \nabla_C u- \nabla_B \nabla_A \nabla_C u=\sigma_{AC} \nabla_B u-\sigma_{BC} \nabla_A u\] for a smooth function $u$ on $S^2$. In particular, we have
\begin{equation}\label{commutation}\begin{split}\nabla_D \nabla^D\nabla_A u&=\nabla_A (\Delta+1)u \\ \epsilon^{AB}\nabla_A \nabla_B \nabla_C u&=\epsilon_C^{\;\; B}\nabla_B u.\\
 \end{split}\end{equation}

Let $\tilde{X}^k, k=1, 2, 3$ be the restriction to $S^2$ of the standard coordinate functions in $\mathbb{R}^3$. It is well-known that they are eigenfunctions for $\sigma_{AB}$:
\[ \Delta \tilde X^k = -2 \tilde X^k.\] $\tilde X^k$ also satisfies the Hessian equation
\begin{equation}\label{hessian_X_k} \nabla_A \nabla_B \tilde{X}^k=-\tilde{X}^k \sigma_{AB}.\end{equation}
In general, an eigenfunction $f$ with
\begin{equation}\label{mode_def} \Delta f = - \ell(\ell+1)f\end{equation}
is said to be of mode $\ell$. 
We need the following integration by parts lemma:
\begin{lemma}\label{same_mode}
Suppose $u$ and $v$ are smooth functions on $S^2$ of mode $m$ and $n$ respectively. Then
\[\int_{S^2} \tilde{X}^k \epsilon^{AB} \nabla_A u \nabla_B v=0 \] unless $m=n$. 
\end{lemma}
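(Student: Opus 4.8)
\medskip
\noindent\emph{Approach.} The plan is to integrate by parts once in order to strip the derivative off of $u$, and then to observe that the resulting integrand is a product of two eigenfunctions of $\Delta$ whose eigenvalues disagree as soon as $m\neq n$; orthogonality of distinct Laplace eigenspaces then gives the conclusion. Throughout I would use that $\tilde{X}^k$ is of mode $1$, i.e.\ $\Delta\tilde{X}^k=-2\tilde{X}^k$, together with the Hessian equation \eqref{hessian_X_k}.

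First I would note that $M^A:=\epsilon^{AB}\nabla_B v$ is divergence-free, since $\nabla_A M^A=\epsilon^{AB}\nabla_A\nabla_B v=0$ by the antisymmetry of $\epsilon_{AB}$ and the symmetry of the Hessian of $v$. Writing $\epsilon^{AB}\nabla_A u\,\nabla_B v=M^A\nabla_A u$ and integrating by parts on the closed surface $S^2$,
\[
\int_{S^2}\tilde{X}^k\,\epsilon^{AB}\nabla_A u\,\nabla_B v
=\int_{S^2}\tilde{X}^k\,M^A\nabla_A u
=-\int_{S^2} u\,\nabla_A\bigl(\tilde{X}^k M^A\bigr)
=-\int_{S^2} u\,w,
\]
where the term containing $\nabla_A M^A$ drops and $w:=(\nabla_A\tilde{X}^k)\,M^A=\epsilon^{AB}\,\nabla_A\tilde{X}^k\,\nabla_B v$.

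Next I would show that $w$ is again of mode $n$. Differentiating $w$ and using \eqref{hessian_X_k} in the form $\nabla_C\nabla_A\tilde{X}^k=-\tilde{X}^k\sigma_{CA}$, the contraction identity \eqref{contract_epsilon_square}, and the commutation relation \eqref{commutation} (so that $\nabla^C\nabla_C\nabla_B v=\nabla_B(\Delta+1)v$), a short computation gives
\[
\Delta w=-w-\bigl(n(n+1)-1\bigr)w=-n(n+1)\,w .
\]
Equivalently, $w=-V_k(v)$ with $V_k^A:=\epsilon^{AB}\nabla_B\tilde{X}^k$, and since $\nabla_A (V_k)_B=-\tilde{X}^k\,\epsilon_{BA}$ is antisymmetric, $V_k$ is a rotational Killing field; it therefore commutes with $\Delta$ and preserves the eigenspace containing $v$, giving the same conclusion.

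Finally, $u$ is of mode $m$ and $w$ of mode $n$, so $\Delta u=-m(m+1)u$ and $\Delta w=-n(n+1)w$. Since $x\mapsto x(x+1)$ is strictly increasing on $[0,\infty)$, the two eigenvalues differ whenever $m\neq n$, and self-adjointness of $\Delta$ on $S^2$ forces $\int_{S^2}u\,w=0$; combined with the displayed identity this yields $\int_{S^2}\tilde{X}^k\,\epsilon^{AB}\nabla_A u\,\nabla_B v=0$ unless $m=n$. The only step that is not entirely routine is checking that $w$ remains of mode $n$; the rest is a single integration by parts plus orthogonality of spherical harmonics, so I do not expect a genuine obstacle.
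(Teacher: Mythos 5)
Your proof is correct and follows essentially the same route as the paper: integrate by parts once, recognize that the resulting operator is (up to sign) differentiation along the rotational Killing field $\epsilon^{AB}\nabla_B\tilde{X}^k$, which commutes with $\Delta$ and hence preserves the mode, and conclude by orthogonality of distinct eigenspaces. The explicit verification that $\Delta w=-n(n+1)w$ is a harmless elaboration of what the paper states in one line.
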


\begin{proof}
Integrating by parts, we obtain  \[\int_{S^2} \tilde{X}^k \epsilon^{AB} \nabla_A u \nabla_B v= \int_{S^2} (Y^A \nabla_A v)u,\] where $Y^A=\epsilon^{AB}\nabla_B \tilde{X}^k$ is 
a rotation Killing field. Since $\Delta$ commutes with $Y^A \nabla_A $,  $Y^A \nabla_A v $ is of the same mode as $v$. 
\end{proof} 
The following integrating by parts formulae will be useful in the later sections.
\begin{lemma} For any smooth functions $u, v$ on $S^2$, we have
\begin{equation}\label{ibp1} \int_{S^2} \tilde{X}^k \epsilon^{AB} \nabla_A (\Delta u) \nabla_B v=\int_{S^2} \tilde{X}^k \epsilon^{AB} \nabla_A u \nabla_B (\Delta v)\end{equation}
\begin{equation}\label{ibp2} \int_{S^2} \tilde{X}^k \epsilon^{AB}\nabla_A \nabla_D u\nabla_B \nabla^D v=-\int_{S^2} \tilde{X}^k \epsilon^{AB} \nabla_A u\nabla_B (\Delta+2) v.\end{equation}
\end{lemma}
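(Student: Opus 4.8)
The plan is to reduce both identities to integration by parts against the rotation Killing field $Y^A := \epsilon^{AB}\nabla_B\tilde X^k$ already used in the proof of Lemma \ref{same_mode}, exploiting two of its features: it is divergence-free (being Killing), so that $\int_{S^2} Y^A\nabla_A f = 0$ for every smooth $f$, and $Y^A\nabla_A$ commutes with $\Delta$. It is convenient to abbreviate $B(u,v) := \int_{S^2}\tilde X^k\epsilon^{AB}\nabla_A u\,\nabla_B v$; one integration by parts, together with the vanishing of $\epsilon^{AB}\nabla_A\nabla_B v$ (symmetry of the Hessian), gives the alternative expression $B(u,v) = \int_{S^2} u\,Y^A\nabla_A v$, whence $B(u,v) = -B(v,u)$. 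In this notation \eqref{ibp1} is the assertion $B(\Delta u,v) = B(u,\Delta v)$, while the right-hand side of \eqref{ibp2} is $-B\bigl(u,(\Delta+2)v\bigr)$.

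For \eqref{ibp1}: starting from $B(\Delta u,v) = \int_{S^2}\Delta u\,Y^A\nabla_A v$, I move $\Delta$ across by self-adjointness and then commute it with $Y^A\nabla_A$ to obtain $\int_{S^2} u\,Y^A\nabla_A(\Delta v) = B(u,\Delta v)$. This step is immediate.

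For \eqref{ibp2}: I would integrate the left-hand side by parts in the index $A$. The term in which $\nabla_A$ falls on the Hessian produces $\epsilon^{AB}\nabla_A\nabla_B\nabla^D v = \epsilon^{DB}\nabla_B v$ by the second identity in \eqref{commutation}, and after contracting with $\nabla_D u$ and relabeling this contributes exactly $-B(u,v)$; the term in which $\nabla_A$ falls on $\tilde X^k$, using $\epsilon^{AB}\nabla_A\tilde X^k = -Y^B$, contributes $T := \int_{S^2} Y^B(\nabla_B\nabla_D v)(\nabla^D u)$. To evaluate $T$ I would integrate by parts in the raised index $D$: the term where $\nabla^D$ hits $Y^B$ vanishes since $\nabla^D Y^B$ is antisymmetric while the Hessian $\nabla_B\nabla_D v$ is symmetric, and the term where $\nabla^D$ hits the Hessian gives $\nabla^D\nabla_B\nabla_D v = \nabla^D\nabla_D\nabla_B v = \nabla_B(\Delta+1)v$ (symmetry of the Hessian followed by the first identity in \eqref{commutation}); hence $T = -\int_{S^2}\bigl(Y^B\nabla_B(\Delta+1)v\bigr)u$, and integrating by parts once more, using $\nabla_B Y^B = 0$, gives $T = \int_{S^2}(\Delta+1)v\,Y^B\nabla_B u = -B\bigl(u,(\Delta+1)v\bigr)$. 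Collecting terms, the left-hand side equals $T - B(u,v) = -B\bigl(u,(\Delta+1)v\bigr) - B(u,v) = -B\bigl(u,(\Delta+2)v\bigr)$, which is the claim.

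The argument is essentially bookkeeping, and I do not anticipate a genuine obstacle; the one place demanding care is keeping track of signs when raising and lowering with $\epsilon_{AB}$ and when applying the curvature commutators — in particular the identification $\epsilon^{AB}\nabla_A\tilde X^k = -Y^B$, the two uses of \eqref{commutation}, and the contraction \eqref{contract_epsilon_square}. I would double-check each of these before committing to the final write-up.
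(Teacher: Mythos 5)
Your proposal is correct, and it is essentially the paper's argument: the paper also proves \eqref{ibp2} by two integrations by parts combined with the commutation identities \eqref{commutation}, merely integrating by parts in the $D$ index first and invoking the Hessian equation $\nabla_B\nabla_D\tilde X^k=-\tilde X^k\sigma_{BD}$ where you invoke the equivalent Killing-field properties of $Y^A$. Your derivation of \eqref{ibp1} from $[\Delta, Y^A\nabla_A]=0$ is a clean shortcut for the paper's ``the first formula follows similarly,'' and all the sign checks you flag ($\epsilon^{AB}\nabla_A\tilde X^k=-Y^B$, the two uses of \eqref{commutation}, the antisymmetry of $\nabla^D Y^B$) do come out as you state.
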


\begin{proof} We prove the second formula and the first formula follows similarly.  Integrating by parts the left hand side, we obtain 

\[  -\int_{S^2} \nabla_D \tilde{X}^k \epsilon^{AB}\nabla_A u\nabla_B \nabla^D v- \int_{S^2} \tilde{X}^k \epsilon^{AB}\nabla_A u \nabla_D \nabla_B \nabla^D v\]

Integrating the first term by parts again, we obtain
\[  \int_{S^2} \nabla_B \nabla_D \tilde{X}^k \epsilon^{AB}\nabla_A u \nabla^D v- \int_{S^2} \tilde{X}^k \epsilon^{AB}\nabla_A u \nabla_D \nabla_B \nabla^D v\]

By \eqref{commutation}, this is equal to 
\[    - \int_{S^2} \tilde{X}^k \epsilon^{AB}\nabla_A u \nabla_B v- \int_{S^2} \tilde{X}^k \epsilon^{AB}\nabla_A u \nabla_B (\Delta+1) v.\]

\end{proof}

\begin{lemma} \label{quadratic_integral} For any smooth function $u$ on $S^2$, we have
\[ \begin{split} \int_{S^2} [2\nabla_A\nabla_Bu \nabla^A\nabla^B u- (\Delta u)^2]&=\int_{S^2} u \Delta(\Delta+2)u\\
\int_{S^2} \tilde{X}^i [2\nabla_A\nabla_Bu \nabla^A\nabla^B u- (\Delta u)^2]&=\int_{S^2} \tilde{X}^i [(\Delta+2)u]^2.\\
\end{split}\]
\end{lemma}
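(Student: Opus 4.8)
The plan is to prove both identities by integration by parts on $S^2$, using the curvature commutation formula \eqref{commutation} and, in the weighted case, the Hessian equation \eqref{hessian_X_k} together with $\Delta \tilde X^i = -2\tilde X^i$.

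For the first (unweighted) identity I would integrate by parts once in $\int_{S^2}\nabla_A\nabla_B u\,\nabla^A\nabla^B u$ to move a derivative onto the second Hessian, producing $-\int_{S^2}\nabla_B u\,\nabla_A\nabla^A\nabla^B u$, then apply the first line of \eqref{commutation}, namely $\nabla_A\nabla^A\nabla^B u = \nabla^B(\Delta+1)u$, and integrate by parts once more to obtain $\int_{S^2} u\,\Delta(\Delta+1)u$. Since $\int_{S^2}(\Delta u)^2 = \int_{S^2} u\,\Delta^2 u$, the difference is $\int_{S^2} u\,\Delta\big(2(\Delta+1)-\Delta\big)u = \int_{S^2} u\,\Delta(\Delta+2)u$. (Equivalently one may quote the Bochner identity $\tfrac12\Delta|\nabla u|^2 = |\nabla^2 u|^2 + \nabla u\cdot\nabla\Delta u + |\nabla u|^2$ on the unit sphere and integrate.)

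For the weighted identity I would run the same two integrations by parts but keep the factor $\tilde X^i$; since it is not parallel, each step generates extra terms carrying $\nabla_A\tilde X^i$. Three reductions control these terms: (i) $(\nabla^A\nabla^B u)(\nabla_B u) = \tfrac12\nabla^A|\nabla u|^2$, so any term $\int_{S^2}(\nabla_A\tilde X^i)(\nabla^A\nabla^B u)(\nabla_B u)$ integrates by parts to $-\tfrac12\int_{S^2}(\Delta\tilde X^i)|\nabla u|^2 = \int_{S^2}\tilde X^i|\nabla u|^2$; (ii) $\int_{S^2}(\nabla^A\tilde X^i)(\nabla_A u)\,u = -\tfrac12\int_{S^2}(\Delta\tilde X^i)u^2 = \int_{S^2}\tilde X^i u^2$; and crucially (iii) the identity
\[\int_{S^2}(\nabla_A\tilde X^i)(\nabla^A u)\,\Delta u = 0 \quad\text{for every smooth } u,\]
which follows by writing $\Delta u = \nabla_B\nabla^B u$, integrating by parts, and using the Hessian equation in the form $\nabla_B\nabla^A\tilde X^i = -\tilde X^i\delta_B^A$ in the term with $\nabla_B(\nabla^A\tilde X^i)$ and reduction (i) in the remaining term, so that the two contributions cancel. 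Feeding (i)--(iii) through the bookkeeping collapses $\int_{S^2}\tilde X^i\,2\nabla_A\nabla_B u\,\nabla^A\nabla^B u$ to $\int_{S^2}\tilde X^i\big[(\Delta u)^2 + 4u\,\Delta u + 4u^2\big] + \int_{S^2}\tilde X^i(\Delta u)^2$, and subtracting $\int_{S^2}\tilde X^i(\Delta u)^2$ leaves precisely $\int_{S^2}\tilde X^i[(\Delta+2)u]^2$.

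The only genuine obstacle is the accounting in the weighted case: because $\tilde X^i$ is a mode-$1$ eigenfunction rather than a constant, the two rounds of integration by parts spawn several lower-order integrals --- $\int\tilde X^i|\nabla u|^2$, $\int\tilde X^i u\,\Delta u$, and $\int\tilde X^i u^2$ --- and one must be systematic so that they recombine. Reduction (iii) --- equivalently, the fact that $\nabla\tilde X^i$ generates a conformal vector field on $S^2$ --- is what makes the cancellations work, and it is the step I would isolate first. As an alternative one could expand $u$ in spherical harmonics and use the selection rule of Lemma \ref{same_mode} together with Clebsch--Gordan orthogonality, but the direct integration-by-parts argument is shorter and matches the style of the preceding lemmas.
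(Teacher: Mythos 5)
Your proposal is correct and follows essentially the same route as the paper: two integrations by parts against the weight $\tilde X^i$, controlled by the Hessian equation $\nabla_A\nabla_B\tilde X^i=-\tilde X^i\sigma_{AB}$ and the commutation formula \eqref{commutation}, landing on the same intermediate identity $\int_{S^2}\tilde X^i\,\nabla_A\nabla_Bu\,\nabla^A\nabla^Bu=\int_{S^2}\tilde X^i[(\Delta u)^2+2u\Delta u+2u^2]$. The paper merely organizes the bookkeeping differently, computing $\nabla^A\nabla^B(\tilde X^i\nabla_A\nabla_Bu)$ in one pass and invoking $2\nabla u\cdot\nabla v=\Delta(uv)-u\Delta v-v\Delta u$, whereas you isolate the three reductions (i)--(iii); both are sound.
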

\begin{proof}
We use the following formulae in the derivation

\[\begin{split}
\Delta|\nabla u|^2&=2|\nabla^2u|^2+2\nabla u\cdot \nabla (\Delta+1)u\\
\Delta (u^2)&=2|\nabla u|^2+2 u\Delta u\\
\Delta(u\Delta u)&=(\Delta u)^2+2\nabla u\cdot \nabla(\Delta u)+u \Delta^2 u.
\end{split}\]

We prove the second formula and the first one follows similarly. Integrating by parts twice gives

\[\int_{S^2} \tilde{X}^i \nabla_A\nabla_Bu \nabla^A\nabla^B u=\int_{S^2}  u \nabla^A\nabla^B(\tilde{X}^i \nabla_A\nabla_Bu) \]

We compute 
\[\begin{split}
      &\nabla^A\nabla^B(\tilde{X}^i \nabla_A\nabla_Bu)\\
=&(\nabla^A\nabla^B \tilde{X}^i) \nabla_A\nabla_Bu+2\nabla^B \tilde{X}^i \nabla^A \nabla_A\nabla_Bu+\tilde{X}^i \nabla^A\nabla^B \nabla_A\nabla_Bu\\
=&-\tilde{X}^i \Delta u+2\nabla^B \tilde{X}^i \nabla_B (\Delta+1)u+\tilde{X}^i \Delta (\Delta+1)u\\
=&\tilde{X}^i \Delta^2 u+2\nabla^B \tilde{X}^i \nabla_B (\Delta+1)u   \end{split},\] where we use $\nabla^A \nabla_A\nabla_Bu=\nabla_B (\Delta+1)u$. 

On the other hand, we have the identity:

\[ 2\nabla^B u\nabla_Bv=\Delta (uv)- u\Delta v-v\Delta u\] and thus 
\[2\nabla^B \tilde{X}^i \nabla_B (\Delta+1)u=\Delta (\tilde{X}^i  (\Delta+1)u)-\tilde{X}^i \Delta(\Delta+1)u+2\tilde{X}^i  (\Delta+1)u.\]
Putting all together gives:
\[\begin{split} &\int_{S^2} \tilde{X}^i \nabla_A\nabla_Bu \nabla^A\nabla^B u\\
=&\int_{S^2} \tilde{X}^i \Delta^2 u+\int u  [\Delta (\tilde{X}^i  (\Delta+1)u)-\tilde{X}^i \Delta(\Delta+1)u+2\tilde{X}^i  (\Delta+1)u]\\
=&\int_{S^2} \tilde{X}^i [(\Delta u)^2+2u\Delta u+2u^2]\end{split}.\]

Therefore, 
\[\int_{S^2} \tilde{X}^i [2\nabla_A\nabla_Bu \nabla^A\nabla^B u- (\Delta u)^2]=\int_{S^2} \tilde{X}^i [(\Delta u )^2+4u \Delta u+4u^2]=\int_{S^2} \tilde{X}^i[(\Delta+2)u]^2.\]

\end{proof}

\subsection{Closed and Co-closed Decomposition}

In this subsection, we consider symmetric traceless 2-tensors $C_{AB}$ and $N_{AB}$ on $S^2$ with the decomposition (see \cite[Appendix B]{KWY} for a derivation)
\begin{equation}\label{CAB}C_{AB}=\nabla_A\nabla_B c-\frac{1}{2} \sigma_{AB} \Delta c+\frac{1}{2}(\epsilon_A^{\,\,\,\, E} \nabla_E \nabla_B \underline{c}+\epsilon_B^{\,\,\,\, E} \nabla_E \nabla_A \underline{c})\end{equation}
\begin{equation}\label{NAB}N_{AB}=\nabla_A\nabla_B n-\frac{1}{2} \sigma_{AB} \Delta n+\frac{1}{2}(\epsilon_A^{\,\,\,\, E} \nabla_E \nabla_B \underline{n}+\epsilon_B^{\,\,\,\, E} \nabla_E \nabla_A \underline{n})\end{equation} for smooth functions $c, \underline{c}, n, \underline{n}$ on $S^2$ that are referred as potentials of $C_{AB}$ and $N_{AB}$. The potentials  are unique up to their $0$ and $1$ mode. In the case we consider when $C_{AB}$ and $N_{AB}$ depend on $u$, all $c, \underline{c}, n, \underline{n}$ depend on $u$ as well. 

\begin{proposition}
Closed and co-closed parts of a symmetric traceless 2-tensors on $S^2$ are dual to each other in the following sense. 
\begin{enumerate}
\item Denote the space of symmetric traceless 2-tensors on $S^2$ by $\hat{\mbox{Sym}}$. Then the map $\varepsilon_2: \hat{\mbox{Sym}} \rw \hat{\mbox{Sym}}, \varepsilon_2(C_{AB}) = \epsilon_A^{\;\;D} C_{DB}$ satisfies 
\begin{align}
&\varepsilon_2(\nabla_A\nabla_B c-\frac{1}{2} \sigma_{AB} \Delta c) = \frac{1}{2}(\epsilon_A^{\,\,\,\, E} \nabla_E \nabla_B c +\epsilon_B^{\,\,\,\, E} \nabla_E \nabla_A c), \label{closed_to_coclosed}\\
&\varepsilon_2 \lt( \frac{1}{2}(\epsilon_A^{\,\,\,\, E} \nabla_E \nabla_B \cb +\epsilon_B^{\,\,\,\, E} \nabla_E \nabla_A \cb)\rt) = -\na_A\na_B \cb + \frac{1}{2} \sigma_{AB} \Delta \cb. \label{coclosed_to_closed} 
\end{align}
\item The following identity holds for symmetric traceless 2-tensors
\begin{align}
\epsilon_D^{\;\;B} \na^D C_{BA} = \epsilon_A^{\;\;D} \na^B C_{BD}. \label{epsilon_divergence}
\end{align}
In other words, we have a commutative diagram of isomorphisms
\begin{align*}
\begin{CD}
\hat{\mbox{Sym}} @>\varepsilon_2>> \hat{\mbox{Sym}}\\
@VV \mbox{div} V @VV \mbox{div}V\\
\Lambda^1 @> * >> \Lambda^1,
\end{CD}
\end{align*}
where $\Lambda^1$ denotes the space of 1-forms and $(*\omega)_A = \epsilon_A^{\;\;B} \omega_B$ is the Hodge star on 1-forms.
\end{enumerate}
\end{proposition}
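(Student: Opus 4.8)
The plan is to reduce the whole proposition to pointwise tensor-algebra identities on $S^2$ coming from \eqref{epsilon_square} and \eqref{contract_epsilon_square}, invoking the commutation formula \eqref{commutation} only where second derivatives of the potentials have to be moved past one another. In fact parts (1), together with the core identity \eqref{epsilon_divergence} of part (2), should be entirely curvature-free; curvature enters only in the supplementary claim that $\mathrm{div}$ is an isomorphism.

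For part (1), I would first isolate two structural properties of $\varepsilon_2$. First, $\varepsilon_2$ maps $\hat{\mbox{Sym}}$ into itself: $\epsilon_A^{\;D}C_{DB}$ is traceless since $\epsilon^{AB}\epsilon_A^{\;D}=\sigma^{BD}$ and $C$ is symmetric, and its antisymmetric part is necessarily a multiple of $\epsilon_{AB}$, whose coefficient — found by contracting with $\epsilon^{AB}$ and using \eqref{contract_epsilon_square} — equals $\tr C=0$. Second, $\varepsilon_2\circ\varepsilon_2=-\mathrm{id}$ on all $2$-tensors, because $\epsilon_A^{\;D}\epsilon_D^{\;E}=-\delta_A^{\;E}$ by \eqref{contract_epsilon_square}. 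Granting these, I would prove \eqref{closed_to_coclosed} by a direct computation: applying $\epsilon_A^{\;D}$ to $\nabla_D\nabla_B c-\tfrac12\sigma_{DB}\Delta c$ and subtracting the claimed right-hand side, the assertion is equivalent to $\epsilon_A^{\;D}\nabla_D\nabla_B c-\epsilon_B^{\;D}\nabla_D\nabla_A c=\epsilon_{AB}\Delta c$; the left side is antisymmetric in $A,B$ because the Hessian $\nabla_D\nabla_B c$ is symmetric, hence a multiple of $\epsilon_{AB}$, and contracting with $\epsilon^{AB}$ (again via \eqref{contract_epsilon_square}) pins that multiple to $\Delta c$. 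Then \eqref{coclosed_to_closed} follows formally: apply $\varepsilon_2$ to \eqref{closed_to_coclosed} with $c$ replaced by $\cb$ and use $\varepsilon_2^2=-\mathrm{id}$.

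For part (2), the identity \eqref{epsilon_divergence} is first order in $C$, so it is a purely algebraic contraction identity among $\epsilon_{AB}$, $\sigma_{AB}$, and the tensor $\nabla_E C_{AB}$; I would verify it in an orthonormal frame at an arbitrary point, writing the two independent components of the symmetric traceless tensor as $C_{11}=-C_{22}=a$, $C_{12}=C_{21}=b$, whereupon both sides collapse to $\nabla_1 b-\nabla_2 a$ when the free index is $1$ and to $-(\nabla_1 a+\nabla_2 b)$ when it is $2$. The commutative diagram then just rephrases this: $\mathrm{div}\circ\varepsilon_2={*}\circ\mathrm{div}$ is precisely \eqref{epsilon_divergence} written with $(\mathrm{div}\,T)_A=\nabla^B T_{BA}$ and $({*}\omega)_A=\epsilon_A^{\;B}\omega_B$; the horizontal maps are isomorphisms by $\varepsilon_2^2=-\mathrm{id}$ and, on one-forms, by the same identity giving $*^2=-\mathrm{id}$; and $\mathrm{div}$ is an isomorphism once the $\ell\le1$ ambiguity of the potentials is quotiented out — combining \eqref{closed_to_coclosed} with \eqref{commutation} yields $\nabla^B C_{BA}=\tfrac12\nabla_A(\Delta+2)c$ on the closed part and $\tfrac12\epsilon_A^{\;B}\nabla_B(\Delta+2)\cb$ on the co-closed part, so the kernel is exactly the $\ell\le1$ modes.

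The one place requiring care is the bookkeeping of index positions in \eqref{closed_to_coclosed}, where $\epsilon_A^{\;E}\nabla_E\nabla_B c$ and $\epsilon_A^{\;D}\nabla_D\nabla_B c$ occur in asymmetric roles; extracting the antisymmetric part and recognizing it as a multiple of $\epsilon_{AB}$ is what makes this step clean and sidesteps all index juggling.
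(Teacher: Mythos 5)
Your proof is correct and follows essentially the same route as the paper: both rest on showing via \eqref{epsilon_square} and \eqref{contract_epsilon_square} that $\varepsilon_2$ preserves $\hat{\mbox{Sym}}$, after which \eqref{closed_to_coclosed} reduces to pure $\epsilon$-algebra and \eqref{epsilon_divergence} is a pointwise contraction identity (the paper checks the latter by contracting both sides with $\epsilon^{A}_{\;\;E}$, you by an orthonormal-frame computation --- both are fine). Your small refinements are deriving \eqref{coclosed_to_closed} from \eqref{closed_to_coclosed} via $\varepsilon_2\circ\varepsilon_2=-\mathrm{id}$ rather than by a second direct computation, and actually substantiating the isomorphism claims in the commutative diagram via $\na^B C_{BA}=\tfrac12\na_A(\Delta+2)c+\tfrac12\epsilon_A^{\;\;B}\na_B(\Delta+2)\cb$, a point the paper's proof leaves unaddressed.
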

\begin{proof}
We use \eqref{epsilon_square} and \eqref{contract_epsilon_square} in the derivation. Since $\epsilon^{AB} \varepsilon_2(C_{AB})=0$ and $\sigma^{AB} \varepsilon_2(C_{AB})=0$, $\varepsilon_2(C_{AB})$ is symmetric and traceless. In particular, 
\begin{align}
\epsilon_A^{\;\;D} C_{DB} = \frac{1}{2} (\epsilon_A^{\;\;D} C_{DB} + \epsilon_B^{\;\;D} C_{DA})
\end{align}
and \eqref{closed_to_coclosed} and \eqref{coclosed_to_closed} follow by direct computation.

To verify \eqref{epsilon_divergence}, note that both sides are equal to $\na^D C_{DE}$ after contracted with $\epsilon^A_{\;\;E}$.
\end{proof}

In the following two lemmas, we express several integrals involving the shear tensor and the news tensor in terms of their potentials. These formulae will help us to derive Theorem \ref{evolution2_bracket} from Theorem \ref{evolution1}.

\begin{lemma} Suppose $Y^A$ is either $\nabla^A \tilde{X}^k$ or $\epsilon^{AB}\nabla_B \tilde{X}^k$, and $C_{AB}$ and $N_{AB}$ are given by \eqref{CAB} and \eqref{NAB}, then

\begin{equation}\label{YCN}\begin{split}&\int_{S^2} Y^A C_{AB}\nabla_D N^{BD}\\
=&-\frac{1}{4}\int_{S^2} Y^A [(\Delta+2)n \nabla_A (\Delta+2)c +(\Delta+2)\underline{n} \nabla_A (\Delta+2)\underline{c}]\\
&+\frac{1}{4}\int_{S^2} Y^A \epsilon_A^{\,\,\,\,D} [\nabla_D ((\Delta+2)c) (\Delta+2)\underline{n}-\nabla_D ((\Delta+2)\underline{c}) (\Delta+2)n]\\
\end{split}\end{equation}

\end{lemma}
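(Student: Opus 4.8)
The plan is to substitute the closed/co-closed decompositions \eqref{CAB} and \eqref{NAB} directly into the integrand $Y^A C_{AB}\nabla_D N^{BD}$ and reduce everything to integrals of the four scalar potentials $c,\underline c,n,\underline n$ against $Y^A$ by repeated integration by parts on $S^2$. First I would record the divergence identities obtained by differentiating \eqref{CAB}, \eqref{NAB}: a short computation using the commutation formula \eqref{commutation} (applied to third derivatives of a scalar) gives
\[
\nabla_D N^{BD} = \frac{1}{2}\nabla^B\big((\Delta+2)n\big) + \frac{1}{2}\epsilon^{BE}\nabla_E\big((\Delta+2)\underline n\big),
\]
and likewise for $C_{AB}$. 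This is the natural intermediate identity — it is essentially the statement that the ``divergence'' operator on $\hat{\mathrm{Sym}}$ corresponds to $\tfrac12\nabla(\Delta+2)$ on potentials, with the co-closed piece contributing the curl term, consistent with \eqref{epsilon_divergence}.

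Next I would contract $C_{AB}$ (again in its decomposed form) with this expression for $\nabla_D N^{BD}$ and with $Y^A$. This yields four types of terms according to the pairings (closed $c$)$\times$(closed $n$), (closed)$\times$(co-closed), (co-closed)$\times$(closed), (co-closed)$\times$(co-closed). In each term one has second derivatives of a $C$-potential hitting first derivatives of an $N$-potential (or vice versa), all contracted with $Y^A$; the goal is to integrate by parts so that $Y^A$ is differentiated at most once (recall $Y^A$ is a conformal Killing or Killing field on $S^2$, so $\nabla_A Y_B$ has controlled symmetry) and to collapse the Laplacians using $\nabla^A\nabla_A\nabla_B f = \nabla_B(\Delta+1)f$ and $\epsilon^{AB}\nabla_A\nabla_B\nabla_C f = \epsilon_C{}^B\nabla_B f$ from \eqref{commutation}. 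The $\epsilon$-algebra identities \eqref{epsilon_square}, \eqref{contract_epsilon_square} handle the cross terms where an $\epsilon_A{}^E$ from the co-closed part of $C$ meets an $\epsilon^{BE}$ from the curl part of $\nabla_D N^{BD}$; these either cancel or convert a co-closed$\times$co-closed pairing into the same form as closed$\times$closed, which is why the final answer pairs $n$ with $c$ and $\underline n$ with $\underline c$ in the first line, and $n$ with $\underline c$ (and $\underline n$ with $c$) with an $\epsilon_A{}^D$ in the second line.

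After the dust settles one should get, for the diagonal pairings, a combination proportional to $\int Y^A\,(\Delta+2)n\,\nabla_A(\Delta+2)c$ (and the $\underline{c},\underline n$ analogue), and for the off-diagonal pairings a combination proportional to $\int Y^A\epsilon_A{}^D\,\nabla_D((\Delta+2)c)\,(\Delta+2)\underline n$ minus its $c\leftrightarrow\underline c$, $n\leftrightarrow\underline n$ swap; tracking the numerical coefficients (the $\tfrac12$'s from the decompositions and the $\tfrac12$ from symmetrizing) produces the overall $-\tfrac14$ and $+\tfrac14$ in \eqref{YCN}. I expect the main obstacle to be purely organizational rather than conceptual: keeping the sign and factor bookkeeping straight across the four pairings and the several rounds of integration by parts, and making sure that terms which are total derivatives or which vanish by Lemma \ref{same_mode}-type orthogonality (e.g. lower-mode pieces, given the potentials are supported in $\ell\ge 2$) are correctly discarded. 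A useful consistency check along the way is to verify that the bilinear form in \eqref{YCN} has the expected symmetry/antisymmetry under exchanging the roles of $C$ and $N$ when $Y^A$ is the rotation field, which is exactly what feeds into the Poisson-bracket form of $\partial_u\tilde J^k$ in Theorem \ref{evolution2_bracket}.
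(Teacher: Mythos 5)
Your strategy is sound and you have correctly identified the one genuinely essential identity, namely $\nabla_D N^{BD}=\tfrac12\nabla^B(\Delta+2)n+\tfrac12\epsilon^{BD}\nabla_D(\Delta+2)\underline n$; substituting the full decomposition of $C_{AB}$ and grinding through the four pairings would eventually reproduce \eqref{YCN}. But the paper's proof is much shorter because it never decomposes $C_{AB}$ at all. The point you flag only in passing (``$\nabla_A Y_B$ has controlled symmetry'') is in fact the whole trick: for $Y^A=\nabla^A\tilde X^k$ one has $\nabla_BY_A=-\tilde X^k\sigma_{AB}$ and for $Y^A=\epsilon^{AB}\nabla_B\tilde X^k$ one has $\nabla_BY_A=-\tilde X^k\epsilon_{AB}$, so since $C_{AB}$ is symmetric and traceless, $\nabla^BY^A\,C_{AB}=0$ and $\epsilon^{BD}\nabla_DY^A\,C_{AB}=0$. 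Hence a single integration by parts against the divergence identity for $N$ leaves only $-\tfrac12\int Y^A(\nabla^BC_{AB}\,(\Delta+2)n+\epsilon^{BD}\nabla_DC_{AB}\,(\Delta+2)\underline n)$, and the lemma follows immediately from \eqref{epsilon_divergence} together with the analogous divergence formula $\nabla^BC_{BA}=\tfrac12\nabla_A(\Delta+2)c+\tfrac12\epsilon_{A}{}^{B}\nabla_B(\Delta+2)\underline c$ --- no four-way case analysis, no repeated integration by parts, and the $-\tfrac14$ and $+\tfrac14$ drop out as the product of the two $\tfrac12$'s. Your route buys nothing over this except more opportunities for sign errors, and as written it asserts rather than verifies the final coefficients; also, the appeal to Lemma \ref{same_mode}-type mode orthogonality is unnecessary, since \eqref{YCN} is an exact identity requiring no mode assumptions on the potentials. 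I would recommend restructuring your argument around the vanishing of the $\nabla Y$ contractions so that only the divergences $\nabla^BC_{AB}$, $\nabla_DN^{BD}$ of the two tensors ever need to be expressed in potentials.
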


\begin{proof} First of all, note that 
\[\nabla^B Y^A C_{AB}=0, \epsilon^{BD}\nabla_D Y^A C_{AB}=0.\] From  
\[ \nabla_D N^{BD}=\frac{1}{2} \nabla^B (\Delta+2)n+\frac{1}{2}\epsilon^{BD}\nabla_D (\Delta+2)\underline{n}, \] we integrate by parts to get 
\[\int_{S^2} Y^A C_{AB}\nabla_D N^{BD}=-\frac{1}{2} \int_{S^2} Y^A       ( \nabla^B C_{AB}   (\Delta+2) n+\epsilon^{BD} \nabla_D C_{AB} (\Delta+2)\underline{n} ).  \]
By \eqref{epsilon_divergence} \[ \epsilon^{DB}\nabla_{D}C_{BA} = \epsilon_{AD}\nabla_{B}C^{BD}\]
and $\nabla^B C_{BD}=\frac{1}{2} \nabla^D (\Delta+2)c+\frac{1}{2}\epsilon^{BD}\nabla_B (\Delta+2)\underline{c}$, we obtain the desired formula.


\end{proof}
The above generalizes the integral identities derived in \cite[(65), (66)]{KWY}:
\[ \int_{S^2} Y^{A} F_{A}^{B}\nabla^{D}F_{DB} = 0,\]
\[ \int_{S^2} Y^{A} \underline{F}_{A}^{B}\nabla^{D}\underline{F}_{DB} = 0\]
for $Y^A = \epsilon^{AB} \na_B \tilde X^k$.

Skew-symmetrizing \eqref{YCN}, we obtain: 
\begin{equation}\label{YCN_bracket}\begin{split}
&\int_{S^2} Y^A (C_{AB}\nabla_D N^{BD}-N_{AB}\nabla_D C^{BD})\\
=&\frac{1}{4}\int_{S^2} Y^A [(\Delta+2)c \nabla_A (\Delta+2)n-(\Delta+2)n \nabla_A (\Delta+2)c]\\
&+\frac{1}{4}\int_{S^2} Y^A [(\Delta+2) \underline{c} \nabla_A (\Delta+2)\underline{n}-(\Delta+2)\underline{n} \nabla_A (\Delta+2)\underline{c}]\\
&+\frac{1}{4}\int_{S^2} Y^A \epsilon_A^{\,\,\,\,D} \nabla_D [(\Delta+2)c(\Delta+2)\underline{n}-(\Delta+2)\underline{c} (\Delta+2)n].\\
 \end{split}\end{equation}

Next we prove

\begin{lemma}\label{news_square_integral}
\[\begin{split} \int_{S^2} N_{AB} N^{AB}&=\frac{1}{2} \int_{S^2}  n \Delta (\Delta+2)n +\underline{n} \Delta (\Delta+2) \underline{n} \\
\int_{S^2} \tilde{X}^k N_{AB} N^{AB}&=\frac{1}{2} \int_{S^2} \tilde{X}^k \Big [((\Delta+2)n)^2 +((\Delta+2) \underline{n})^2- 4\epsilon^{AB} \nabla_A  n \nabla_B (\Delta+2)\underline{n}\Big]. \end{split}\]
\end{lemma}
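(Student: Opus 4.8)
The plan is to use the closed/co-closed decomposition \eqref{NAB} to write $N_{AB}=N^{n}_{AB}+N^{\underline n}_{AB}$, where $N^{n}_{AB}=\nabla_A\nabla_B n-\tfrac12\sigma_{AB}\Delta n$ is the closed part and $N^{\underline n}_{AB}=\tfrac12(\epsilon_A{}^{E}\nabla_E\nabla_B\underline n+\epsilon_B{}^{E}\nabla_E\nabla_A\underline n)$ the co-closed part, and then to expand $N_{AB}N^{AB}=N^{n}_{AB}N^{n\,AB}+2\,N^{n}_{AB}N^{\underline n\,AB}+N^{\underline n}_{AB}N^{\underline n\,AB}$ and deal with the three contributions separately. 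Both formulae will then fall out by assembling the pieces, so the only real work is in the three terms.

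For the diagonal closed term, a one-line pointwise computation gives $N^{n}_{AB}N^{n\,AB}=\nabla_A\nabla_B n\,\nabla^A\nabla^B n-\tfrac12(\Delta n)^2$, so $2\,N^{n}_{AB}N^{n\,AB}$ is exactly the integrand $2\nabla_A\nabla_B n\,\nabla^A\nabla^B n-(\Delta n)^2$ treated in Lemma \ref{quadratic_integral}; applying that lemma (both identities, with $u=n$) converts $\int_{S^2}$ and $\int_{S^2}\tilde X^{k}$ of this term into $\int_{S^2}n\Delta(\Delta+2)n$ and $\int_{S^2}\tilde X^{k}[(\Delta+2)n]^2$ respectively. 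For the diagonal co-closed term, the key observation is that \eqref{closed_to_coclosed} with $c=\underline n$ identifies $N^{\underline n}_{AB}=\varepsilon_2\!\big(\nabla_A\nabla_B\underline n-\tfrac12\sigma_{AB}\Delta\underline n\big)$, and $\varepsilon_2$ is a pointwise isometry on symmetric traceless $2$-tensors, i.e. $\varepsilon_2(S)_{AB}\,\varepsilon_2(S)^{AB}=S_{AB}S^{AB}$, which is immediate from \eqref{epsilon_square}–\eqref{contract_epsilon_square}. Hence $N^{\underline n}_{AB}N^{\underline n\,AB}=\nabla_A\nabla_B\underline n\,\nabla^A\nabla^B\underline n-\tfrac12(\Delta\underline n)^2$ pointwise, and Lemma \ref{quadratic_integral} with $u=\underline n$ handles it in exactly the same way, producing the $\underline n$-terms.

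It remains to treat the cross term $\int N^{n}_{AB}N^{\underline n\,AB}$. Since $N^{n}_{AB}$ is symmetric and $\epsilon^{AB}$ contracted against a Hessian vanishes, the trace part $-\tfrac12\sigma_{AB}\Delta n$ of $N^{n}$ drops out, and the integrand collapses to $\epsilon^{AC}\nabla_A\nabla_B n\,\nabla_C\nabla^{B}\underline n$, which after relabelling of dummy indices is precisely the left-hand integrand of \eqref{ibp2} with $u=n$, $v=\underline n$. Therefore $\int_{S^2}\tilde X^{k}N^{n}_{AB}N^{\underline n\,AB}=-\int_{S^2}\tilde X^{k}\epsilon^{AB}\nabla_A n\,\nabla_B(\Delta+2)\underline n$; multiplied by the factor $2$ from the expansion this is exactly the $-4\,\epsilon^{AB}\nabla_A n\,\nabla_B(\Delta+2)\underline n$ term in the second identity (after the overall $\tfrac12$). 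For the first identity one runs the same double integration by parts as in the proof of \eqref{ibp2} but with the constant function in place of $\tilde X^{k}$, so that the curvature contribution disappears; this reduces the cross term to $\int_{S^2}\epsilon^{AB}\nabla_A n\,\nabla_B w$ for a function $w$, which vanishes since $\epsilon^{AB}\nabla_A\nabla_B\equiv 0$. Thus the cross term contributes nothing to the first formula. Collecting the three pieces with their prefactors yields both stated identities.

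All of these computations are short; the only step that demands care is the index bookkeeping in the cross term — in particular verifying that the reduced integrand matches the template of \eqref{ibp2} with $n$ in the slot of $u$ and $\underline n$ in the slot of $v$ (the transposed assignment would yield $\epsilon^{AB}\nabla_A\underline n\,\nabla_B(\Delta+2)n$ instead), and correctly tracking the several factors of $\tfrac12$ coming from the decomposition \eqref{NAB}, from the expansion of $N_{AB}N^{AB}$, and from Lemma \ref{quadratic_integral}.
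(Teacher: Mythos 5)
Your proposal is correct and follows essentially the same route as the paper: the paper likewise expands $N_{AB}N^{AB}$ pointwise into the two diagonal Hessian terms plus the cross term $2\epsilon^{AC}\nabla_A\nabla_B n\,\nabla_C\nabla^B\underline n$ (your $\varepsilon_2$-isometry observation just repackages the direct $\epsilon$-identity computation for the co-closed block), then invokes Lemma \ref{quadratic_integral} for the diagonal pieces and an integration by parts — \eqref{ibp2} for the weighted case, and the curvature identity $\nabla^B\nabla_A\nabla_B n=\nabla_A(\Delta+1)n$ for the unweighted case — to show the cross term vanishes in the first identity and produces the $-4\epsilon^{AB}\nabla_A n\nabla_B(\Delta+2)\underline n$ term in the second. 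The index bookkeeping and factors of $\tfrac12$ all check out.
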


\begin{proof}
Using the formula $\epsilon_A^{\,\,\,\, C} \epsilon^{BD}=\delta_A^{\,\,\,\,B} \sigma^{CD}-\delta_A^{\,\,\,\,D}\sigma^{CB}$ and $\epsilon^{AB}\epsilon_{AE}=\sigma^{BE}$, we compute that 
\begin{equation}\label{news_square} \begin{split} N_{AB}N^{AB}&= \nabla_A\nabla_B {n} \nabla^A\nabla^B  {n}-\frac{1}{2} (\Delta {n})^2+   \nabla_A\nabla_B \underline{n} \nabla^A\nabla^B \underline{n}-\frac{1}{2} (\Delta \underline{n})^2\\
&+2\epsilon^{AC}\nabla_A \nabla_B n \nabla_C\nabla^B \underline{n}
\end{split}\end{equation}
Integrating by parts yields
\[
\begin{split}
\int_{S^2} \epsilon^{AC}\nabla_A \nabla_B n \nabla_C\nabla^B \underline{n}=&-\int_{S^2} \epsilon^{AC} \nabla^B \nabla_A\nabla_B n\nabla_C \underline{n}\\
=&-\int_{S^2} \epsilon^{AC} \nabla_A (\Delta+1)n \nabla_C \underline{n}=0
\end{split}
\]

The first formula now follows from the first formula in Lemma \ref{quadratic_integral}. The second formula follows from the second and third formula in Lemma \ref{quadratic_integral}.
\end{proof}

The second formula of Lemma \ref{news_square_integral} can be polarized and we obtain

\begin{equation}\label{polarize}\begin{split} \int_{S^2} \tilde{X}^k C_{AB} N^{AB}
=&\frac{1}{2} \int_{S^2} \tilde{X}^k \Big [(\Delta+2) c (\Delta+2)n +(\Delta+2)\underline{c} (\Delta+2) \underline{n} \\
& \quad  - 2\epsilon^{AB} (\nabla_A  c \nabla_B (\Delta+2)\underline{n}+\nabla_A  n \nabla_B (\Delta+2)\underline{c}) \Big] \end{split}
\end{equation}

\section{Evolution of Conserved quantities}
In this section, we compute the evolution of the classical angular momentum and center of mass. 
These formulae will be used to calculate the total flux of the conserved quantities.

Let's first review the evolution of the metric under the Einstein equation. It is well-known (see \cite[(5.102)]{CJK} for example) that the evolution of the mass aspect function is given by 
\begin{equation}\label{m_aspect_evol}\partial_u m=-\frac{1}{8}N_{AB} N^{AB}+ \frac{1}{4} \nabla^A\nabla^B N_{AB}.
\end{equation}

The modified mass aspect function $\hat{m}$ is defined to be \cite{MTW}

\begin{equation}\label{modified_mass}\hat{m}=m-\frac{1}{4} \nabla^A\nabla^B C_{AB}=m-\frac{1}{8} \Delta (\Delta+2) c\end{equation} and satisfies 
\begin{equation}\label{evol_modified_mass}\partial_u \hat{m}=-\frac{1}{8}N_{AB} N^{AB}.\end{equation}

Therefore,
\[\begin{split} \partial_u E&=- \frac{1}{4}\int_{S^2} N_{AB} N^{AB} \\
\partial_u P^k&= - \frac{1}{4}\int_{S^2} \tilde{X}^k   N_{AB} N^{AB}      , k=1, 2, 3.\\\end{split}\]

We also recall the evolution of $N_A$ (see \cite[(5.103)]{CJK} for example):
\[ \begin{split}\partial_u N_A &= \na_A m -\frac{1}{4}\nabla^D(\nabla_D \nabla^E C_{EA}-\nabla_A \nabla^E C_{ED}) \\
&\quad +\frac{1}{4}\nabla_A(C_{BE} N^{BE})-\frac{1}{4}\nabla_B (C^{BD} N_{DA})+\frac{1}{2} C_{AB}\nabla_D N^{DB}.\end{split}\]

The formula can be rewritten in the following form:

\begin{proposition}\label{AM_aspect_evol}
The angular momentum aspect $N_A$ evolves according to
\begin{equation}\label{duNA}\begin{split}\partial_u N_A
=& \na_A m + \frac{1}{4} \epsilon_{AB} \nabla^B( \epsilon^{PQ} \nabla_P \nabla^E C_{EQ})+\frac{1}{8}\nabla_A(C_{BE} N^{BE})\\
&+\frac{1}{8} \epsilon_{AB} \nabla^B (\epsilon^{PQ} C_P^{\,\,\, E}N_{EQ} )     +\frac{1}{2} C_{AB}\nabla_D N^{DB}.\end{split}\end{equation}
\end{proposition}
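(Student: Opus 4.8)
The plan is to transform the given evolution equation for $N_A$ into the claimed form \eqref{duNA} by algebraic manipulation, using only the identities from Section 2 together with the symmetric-traceless structure of $C_{AB}$ and $N_{AB}$. Starting from
\[
\partial_u N_A = \na_A m -\frac{1}{4}\nabla^D(\nabla_D \nabla^E C_{EA}-\nabla_A \nabla^E C_{ED}) +\frac{1}{4}\nabla_A(C_{BE} N^{BE})-\frac{1}{4}\nabla_B (C^{BD} N_{DA})+\frac{1}{2} C_{AB}\nabla_D N^{DB},
\]
I will treat the three structurally distinct pieces separately: the ``linear in $C$'' curl-type term $-\frac{1}{4}\nabla^D(\nabla_D \nabla^E C_{EA}-\nabla_A \nabla^E C_{ED})$, and the two quadratic terms $\frac{1}{4}\nabla_A(C_{BE}N^{BE})-\frac{1}{4}\nabla_B(C^{BD}N_{DA})$. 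The $\na_A m$ and $\frac{1}{2}C_{AB}\nabla_D N^{DB}$ terms already appear verbatim on both sides, so they carry over unchanged.

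For the linear term, the key observation is that $\omega_A := \nabla^E C_{EA}$ is a one-form, and $\nabla^D(\nabla_D\omega_A - \nabla_A\omega_D)$ is (up to sign and lower-order curvature terms on $S^2$) the co-differential of the exterior derivative $d\omega$; on a surface $d\omega$ is a $2$-form, hence a multiple of $\epsilon_{AB}$, and $\delta d \omega$ can be rewritten as $\epsilon_{AB}\nabla^B(\epsilon^{PQ}\nabla_P\omega_Q)$. Concretely I would commute derivatives using the curvature formula on $S^2$ (the analogue of \eqref{commutation} for one-forms), write $\nabla_D\omega_A - \nabla_A\omega_D = \epsilon_{DA}(\epsilon^{PQ}\nabla_P\omega_Q)$ via \eqref{epsilon_square}, and then contract. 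This produces exactly $\frac{1}{4}\epsilon_{AB}\nabla^B(\epsilon^{PQ}\nabla_P\nabla^E C_{EQ})$ once the $\omega_A = \nabla^E C_{EA}$ substitution is made; any stray zeroth-order terms should cancel because $\nabla^E C_{EA}$ is automatically supported in $\ell \ge 1$ (it is a divergence of a traceless tensor, hence $\ell \ge 2$ actually, which keeps the commutator identities clean).

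For the quadratic terms I want to show
\[
\frac{1}{4}\nabla_A(C_{BE}N^{BE}) - \frac{1}{4}\nabla_B(C^{BD}N_{DA}) = \frac{1}{8}\nabla_A(C_{BE}N^{BE}) + \frac{1}{8}\epsilon_{AB}\nabla^B(\epsilon^{PQ}C_P^{\;\;E}N_{EQ}),
\]
equivalently $\frac{1}{8}\nabla_A(C_{BE}N^{BE}) - \frac{1}{4}\nabla_B(C^{BD}N_{DA}) = \frac{1}{8}\epsilon_{AB}\nabla^B(\epsilon^{PQ}C_P^{\;\;E}N_{EQ})$. The right side is again a ``curl,'' so the natural route is to verify that both sides have the same exterior derivative and the same co-differential (or, on $S^2$, to pair with an arbitrary test function and integrate by parts). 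I expect the cleanest argument is pointwise: expand $\epsilon_{AB}\nabla^B(\epsilon^{PQ}C_P^{\;\;E}N_{EQ})$ using \eqref{epsilon_square}/\eqref{contract_epsilon_square} to turn the double-$\epsilon$ contraction into $\sigma$'s, which converts it into a combination of $\nabla_A(C_{PE}N^{PE})$-type and $\nabla_B$-of-$(C^{B}{}_E N^{E}{}_A)$-type terms; here one uses that $C_{AB}$ and $N_{AB}$ are symmetric and traceless so that several contractions like $C_{AB}N^{AB}$ and its $\epsilon$-twisted partner $\epsilon^{PQ}C_P{}^E N_{EQ}$ are the only invariants that appear, and $\epsilon^{PQ}C_{PE}N_Q{}^E$ is actually symmetric in $C \leftrightarrow N$ while $C_{PE}N^{PE}$ is too, so the antisymmetric combinations drop out correctly.

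The main obstacle will be bookkeeping in the quadratic identity: getting every index contraction and every sign right when replacing products of two $\epsilon$'s by $\sigma$'s, and correctly using the symmetric-traceless constraints to collapse the resulting six or so terms down to the two on the right-hand side. The linear term is routine once one recognizes the $\delta d$ structure. A secondary point to be careful about is whether the original cited formula from \cite{CJK} uses the same sign and normalization convention for $N_A$ as this paper (the excerpt notes $N_A = -3N_{A(CJK)}$), but since the statement here is an identity purely among quantities in this paper's convention, once the starting equation is taken as given the proof is self-contained.
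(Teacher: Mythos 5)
Your proposal is correct and follows essentially the same route as the paper: starting from the standard evolution equation for $N_A$, the linear term is rewritten as a curl by observing that $\nabla_D\omega_A-\nabla_A\omega_D$ is antisymmetric and hence a multiple of $\epsilon_{DA}$ (no curvature commutation or mode restriction is actually needed there), and the quadratic terms are handled by the pointwise identity $2C_B^{\;\;D}N_{DA}=(C_{DE}N^{DE})\sigma_{AB}-(\epsilon^{PQ}C_P^{\;\;E}N_{EQ})\epsilon_{AB}$ for symmetric traceless $2$-tensors, which is exactly what your $\epsilon\epsilon\to\sigma$ expansion produces. One harmless slip in a side remark: $\epsilon^{PQ}C_P^{\;\;E}N_{EQ}$ is antisymmetric, not symmetric, under $C\leftrightarrow N$ (it encodes the antisymmetric part $C_B^{\;\;D}N_{DA}-N_B^{\;\;D}C_{DA}$), but this plays no role in the actual computation.
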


\begin{proof} We rewrite the terms $-\frac{1}{4}\nabla^D(\nabla_D \nabla^E C_{EA}-\nabla_A \nabla^E C_{ED})$ and $-\frac{1}{4}\nabla_B (C^{BD} N_{DA})$. First we check the following identity directly:
\[ \epsilon_{AB} \nabla^B( \epsilon^{PQ} \nabla_P \nabla^E C_{EQ})=-\nabla^D(\nabla_D \nabla^E C_{EA}-\nabla_A \nabla^E C_{ED}).\]

As for the term $C_{B}^{\,\,\, D}N_{DA}$,  we use the following general formulae for symmetric traceless 2-tensors on the 2-sphere:
\[ C_{B}^{\,\,\, D}N_{DA}+N_B^{\,\,\,D} C_{DA}=(C_{DE} N^{DE})\sigma_{AB}\]
\[C_B^{\,\,\,D}N_{DA}-N_B^D C_{DA}=-(\epsilon^{PQ} C_P^{\,\,\,E} N_{EQ})\epsilon_{AB}\]

Therefore, \[2C_{B}^{\,\,\, D}N_{DA}=(C_{DE} N^{DE})\sigma_{AB}-(\epsilon^{PQ} C_P^{\,\,\,E} N_{EQ})\epsilon_{AB}.\]
\end{proof}

Equation \eqref{duNA} is indeed equivalent to equation (4) on page 48 of \cite{ChristoMG9}. We apply \eqref{duNA} to derive the evolution of the classical angular momentum and center of mass.

\begin{theorem}[Theorem \ref{evolution1}]
The classical angular momentum and center of mass evolve according to the following:

\begin{equation}\label{dujk}
\partial_u \tilde J^k =\frac{1}{4}\int_{S^2} \lt[ \epsilon^{AE} \nabla_E \tilde{X}^k  (C_{AB}\nabla_{D}N^{BD} -N_{AB}\nabla_{D}C^{BD}) +  \tilde{X}^k  \epsilon^{AB} (C_{A}^{\,\,\,\,D} N_{DB}) \rt], 
\end{equation}
\begin{equation}\label{duck}
\partial_u \tilde C^k = \frac{1}{4}\int_{S^2} \lt[ \nabla^{A}\tilde{X}^{k} \lt(  \frac{u}{2} \nabla_A |N|^2+ C_{AB}\nabla_{D}N^{BD} - N_{AB}\nabla_{D}C^{BD} \rt) \rt],
\end{equation}
where $k=1,2,3.$
\end{theorem}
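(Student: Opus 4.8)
The plan is to differentiate the defining integrals \eqref{angmom} and \eqref{com} with respect to $u$ and substitute the evolution equations for the constituent quantities, namely \eqref{m_aspect_evol} for the mass aspect, \eqref{duNA} for the angular momentum aspect, and $\partial_u C_{AB}=N_{AB}$ for the shear. Since $\tilde X^k$ and $\sigma_{AB}$ are $u$-independent, the derivative falls only on the bracketed expressions. For $\tilde J^k$ the $u$-derivative is $\int_{S^2}\epsilon^{AB}\nabla_B\tilde X^k\,(\partial_u N_A-\tfrac14\partial_u(C_A{}^D\nabla^E C_{DE}))$, and for $\tilde C^k$ one gets the analogous expression from \eqref{com}, with the additional explicit $u$ appearing in the $-u\nabla_A m$ term contributing a clean $-\int \nabla^A\tilde X^k\,\nabla_A m$ piece that must be tracked separately.

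The first key step is to insert \eqref{duNA} into the $\partial_u N_A$ term. The term $\nabla_A m$ contributes nothing to $\partial_u\tilde J^k$: after integration by parts it becomes $-\int m\,\nabla^A(\epsilon_{AB}\nabla^B\tilde X^k)=0$ since $\epsilon^{AB}\nabla_B\tilde X^k$ is a Killing field (divergence-free). For $\partial_u\tilde C^k$, the $\nabla_A m$ term from \eqref{duNA} combines with the explicit $-\nabla_A m$ from differentiating $-u\nabla_A m$, and these should cancel, leaving only the $u\,\partial_u m$ contribution; then substituting \eqref{m_aspect_evol} and integrating by parts (using $\int\nabla^A\tilde X^k\,\nabla_A(\nabla^B\nabla^C N_{BC})$-type terms, which vanish by Lemma \ref{same_mode} or by the Hessian equation \eqref{hessian_X_k} since $\nabla^B\nabla^C N_{BC}$ is in modes $\ell\ge 2$) should produce exactly the $\tfrac{u}{2}\nabla_A|N|^2$ term in \eqref{duck}. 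The curvature term $\tfrac14\epsilon_{AB}\nabla^B(\epsilon^{PQ}\nabla_P\nabla^E C_{EQ})$ in \eqref{duNA} is linear in $C_{AB}$; after contracting with $\epsilon^{AB}\nabla_B\tilde X^k$ (resp. $\nabla^A\tilde X^k$) and integrating by parts one hopes it cancels against the $u$-derivative of the $-\tfrac14 C_A{}^D\nabla^E C_{DE}$ term. Indeed $\partial_u(-\tfrac14 C_A{}^D\nabla^E C_{DE})=-\tfrac14 N_A{}^D\nabla^E C_{DE}-\tfrac14 C_A{}^D\nabla^E N_{DE}$, and the second piece is exactly half the last term of \eqref{duNA}'s contribution; the quadratic-in-$(C,N)$ bookkeeping here, combined with the $\tfrac18\nabla_A(C_{BE}N^{BE})$ and $\tfrac18\epsilon_{AB}\nabla^B(\epsilon^{PQ}C_P{}^E N_{EQ})$ terms, is where the $\epsilon^{AB}C_A{}^D N_{DB}$ term and the $\tfrac14(C_{AB}\nabla_D N^{BD}-N_{AB}\nabla_D C^{BD})$ combination must be assembled.

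The main obstacle I expect is the purely algebraic-combinatorial matching of the quadratic terms: there are several structurally similar expressions — $\nabla_A(C_{BE}N^{BE})$, $\epsilon_{AB}\nabla^B(\epsilon^{PQ}C_P{}^E N_{EQ})$, $C_A{}^D\nabla^E N_{DE}$, $N_A{}^D\nabla^E C_{DE}$, $C_{AB}\nabla_D N^{BD}$ — related by the symmetric-traceless 2-tensor identities $2C_B{}^D N_{DA}=(C_{DE}N^{DE})\sigma_{AB}-(\epsilon^{PQ}C_P{}^E N_{EQ})\epsilon_{AB}$ and the divergence identity \eqref{epsilon_divergence}, and one must carefully integrate each against the Killing field $\epsilon^{AB}\nabla_B\tilde X^k$ or the conformal Killing field $\nabla^A\tilde X^k$, repeatedly using \eqref{commutation} and \eqref{hessian_X_k}. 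The distinction between the two cases is essential: terms of the form $\int\nabla^A\tilde X^k\,\nabla_A(\text{scalar})=-\int\tilde X^k\Delta(\text{scalar})=2\int\tilde X^k(\text{scalar})$ survive for the center of mass but vanish for the angular momentum where the analogous contraction is against a divergence-free field — this is precisely why the $\nabla_A(C_{BE}N^{BE})$-type total-derivative terms drop out of $\partial_u\tilde J^k$ but not obviously from $\partial_u\tilde C^k$ (where they must instead recombine into $C_{AB}\nabla_D N^{BD}-N_{AB}\nabla_D C^{BD}$). I would organize the computation by first reducing everything to a common set of representatives using the tensor identities, then applying integration by parts against the two Killing-type vector fields separately, and finally reading off \eqref{dujk} and \eqref{duck}.
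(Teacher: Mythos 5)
Your overall route is the same as the paper's: differentiate \eqref{angmom} and \eqref{com} in $u$, substitute the rewritten evolution equation \eqref{duNA} for $\partial_u N_A$, cancel the $\nabla_A m$ contributions (against the divergence-free Killing field for $\tilde J^k$, and against the explicit $-\nabla_A m$ from $\partial_u(-u\nabla_A m)$ for $\tilde C^k$), use the mass-loss formula \eqref{m_aspect_evol} for the $u$-weighted term, and assemble the quadratic terms via $2C_B{}^D N_{DA}=(C_{DE}N^{DE})\sigma_{AB}-(\epsilon^{PQ}C_P{}^E N_{EQ})\epsilon_{AB}$, the identity $\partial_u(C_{DE}C^{DE})=2C_{DE}N^{DE}$, and integration by parts against $\nabla^A\tilde X^k$ versus $\epsilon^{AB}\nabla_B\tilde X^k$. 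All of that matches the paper's proof.

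There is, however, one step that would fail as you describe it: you propose that the term $\frac14\epsilon_{AB}\nabla^B(\epsilon^{PQ}\nabla_P\nabla^E C_{EQ})$ from \eqref{duNA} ``cancels against the $u$-derivative of the $-\frac14 C_A{}^D\nabla^B C_{DB}$ term.'' That cannot happen: the former is \emph{linear} in the shear, while $\partial_u(C_A{}^D\nabla^B C_{DB})$ is \emph{quadratic} in $(C,N)$, so no cancellation between them is possible. The correct mechanism, and the one genuinely non-routine input of the paper's proof, is that this linear term integrates to zero on its own against both vector fields: integrating by parts gives $\int_{S^2}\epsilon_{AB}\nabla^A Y^B\,(\epsilon^{PQ}\nabla_P\nabla^E C_{EQ})$, and since $\epsilon^{PQ}\nabla_P\nabla^E C_{EQ}=-\frac12\Delta(\Delta+2)\underline{c}$ is supported in modes $\ell\ge 2$ while $\epsilon_{AB}\nabla^A Y^B$ equals $0$ (for $Y^A=\nabla^A\tilde X^k$) or $2\tilde X^k$ (for $Y^A=\epsilon^{AB}\nabla_B\tilde X^k$), the integral vanishes by mode orthogonality. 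Once you replace your proposed (impossible) cancellation by this vanishing statement, the remaining bookkeeping proceeds exactly as you outline: the quadratic derivative $-\frac14 N_A{}^D\nabla^B C_{DB}-\frac14 C_A{}^D\nabla^B N_{DB}$ combines with $\frac12 C_{AB}\nabla_D N^{DB}$ to give $\frac14(C_{AB}\nabla_D N^{BD}-N_{AB}\nabla_D C^{BD})$, the gradient term $\frac18\nabla_A(C_{BE}N^{BE})$ either vanishes (angular momentum) or cancels against $-\frac1{16}\nabla_A\partial_u(C_{DE}C^{DE})$ (center of mass), and the term $\frac18\epsilon_{AB}\nabla^B(\epsilon^{PQ}C_P{}^E N_{EQ})$ either vanishes (center of mass) or produces $\frac14\tilde X^k\epsilon^{AB}C_A{}^D N_{DB}$ (angular momentum).
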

\begin{proof}
By \eqref{angmom},
\[ \partial_u \tilde J^k=\int_{S^2}    \epsilon^{AB}\nabla_{B}\tilde{X}^{k}[\partial_u N_A-\frac{1}{4}\partial_u(C_{A}^{\,\,\,\,D}\nabla^B C_{DB})]. \]
First, we deal with the term $\frac{1}{4} \epsilon_{AB} \nabla^B( \epsilon^{PQ} \nabla_P \nabla^E C_{EQ})$ on the right hand side of \eqref{duNA} and claim that 
\begin{equation}\label{Y^A first term of duN_A} \int_{S^2} Y^A \epsilon_{AB} \nabla^B( \epsilon^{PQ} \nabla_P \nabla^E C_{EQ})=0\end{equation} for $Y^A=\nabla^A \tilde{X}^k$ or $\epsilon^{AB}\nabla_B \tilde{X}^k$. Integrating by parts, the integral becomes 
\[\int_{S^2} \epsilon_{AB} \nabla^A Y^B (\epsilon^{PQ} \nabla_P \nabla^E C_{EQ}).\] Since $\epsilon^{PQ} \nabla_P \nabla^E C_{EQ}=-\frac{1}{2} \Delta(\Delta+2) \underline{c}$ and 
$\epsilon_{AB} \nabla^A Y^B$ is zero or $2 \tilde X^k$, the integral vanishes. 

Hence, we obtain
\[
\begin{split}
   &\partial_u \tilde J^k\\
= & \int_{S^2}    \epsilon^{AB}\nabla_{B}\tilde{X}^{k} \lt[ \frac{1}{8} \epsilon_{AE} \nabla^E (\epsilon^{PQ} C_P^{\,\,\, E}N_{EQ} )     +\frac{1}{2} C_{AB}\nabla_D N^{DB}
-\frac{1}{4}\partial_u(C_{A}^{\,\,\,\,D}\nabla^B C_{DB}) \rt] 
\end{split}
\]
since the integral of $\na_A m + \frac{1}{8}\nabla_A(C_{BE} N^{BE})$ against $\epsilon^{AB}\nabla_B \tilde{X}^k$ vanishes. Integrating by parts the first term and use $\epsilon^{AB}\epsilon_{AE}=\delta^B_E$, we obtain the desired formula. 

We now turn to the formula for $ \tilde C^k$. By \eqref{com} and \eqref{Y^A first term of duN_A},
\begin{align*}
  &\partial_u \tilde C^k \\
=& \int_{S^2} \nabla^{A}\tilde{X}^{k}\lt[ \partial_u N_A - \nabla_A m +\frac{u}{8}\nabla_A |N|^2 -\frac{1}{4}\partial_u(C_{A}^{\,\,\,\,D}\nabla^B C_{DB}) - \frac{1}{16}\nabla_{A}\partial_u(C_{DE}C^{DE}) \rt] \\
=& \int_{S^2} \nabla^{A}\tilde{X}^{k} \Big[ \frac{u}{8}\nabla_A |N|^2 + \frac{1}{8}\nabla_A(C_{BE} N^{BE})    +\frac{1}{2} C_{AB}\nabla_D N^{DB}  \\
&\qquad \qquad \quad-\frac{1}{4}\partial_u(C_{A}^{\,\,\,\,D}\nabla^B C_{DB}) - \frac{1}{16}\nabla_{A}\partial_u(C_{DE}C^{DE}) \Big]
\end{align*}
since the integral of $\frac{1}{8} \epsilon_{AB} \nabla^B (\epsilon^{PQ} C_P^{\,\,\, E}N_{EQ} )$ against $\nabla^A \tilde{X}^k$ vanishes. We arrive at the desired formula since
$ \partial_u(C_{DE}C^{DE})=2 (C_{BE} N^{BE}) $. 
\end{proof}

\section{Evolution formulae in terms of potentials}
In this section, we rewrite the evolution formulae in terms of the potentials of the shear and the news tensor.

\subsection{Energy and linear momentum}
First we recall the formulae for the energy and linear momentum.
\begin{proposition} \label{energy_momenum_evol_potential}Suppose $C_{AB}$ and $N_{AB}$ are given as in \eqref{CAB} and \eqref{NAB}, we have
\[\begin{split} \partial_u E&=-\frac{1}{8} \int_{S^2} [n \Delta (\Delta+2)n +\underline{n} \Delta (\Delta+2) \underline{n}]\ \\
\partial_u P^k&= -\frac{1}{8} \int_{S^2} \tilde{X}^k [((\Delta+2)n)^2 +((\Delta+2) \underline{n})^2- 4\epsilon^{AB} \nabla_A  n \nabla_B (\Delta+2)\underline{n}] .\\\end{split}\]

\end{proposition}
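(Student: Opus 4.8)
The plan is to start from the already-established evolution formulae for the mass and modified mass aspects. Recall that $\partial_u E = -\frac{1}{4}\int_{S^2} N_{AB}N^{AB}$ and $\partial_u P^k = -\frac{1}{4}\int_{S^2} \tilde X^k N_{AB}N^{AB}$, which follow directly from integrating \eqref{evol_modified_mass} over $S^2$ (the correction $\frac{1}{4}\nabla^A\nabla^B C_{AB}$ in \eqref{modified_mass} is a pure divergence, so it integrates to zero against $1$; against $\tilde X^k$ one integrates by parts twice and uses the Hessian equation \eqref{hessian_X_k} to see the contribution still vanishes, since $\nabla^A\nabla^B\tilde X^k = -\tilde X^k\sigma^{AB}$ is pure trace while $C_{AB}$ is traceless). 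So the only task is to substitute the potential expressions for $\int_{S^2} N_{AB}N^{AB}$ and $\int_{S^2}\tilde X^k N_{AB}N^{AB}$.

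The substitution is exactly the content of Lemma \ref{news_square_integral}: its first identity gives $\int_{S^2} N_{AB}N^{AB} = \frac{1}{2}\int_{S^2}(n\Delta(\Delta+2)n + \underline{n}\Delta(\Delta+2)\underline{n})$, and its second identity gives $\int_{S^2}\tilde X^k N_{AB}N^{AB} = \frac{1}{2}\int_{S^2}\tilde X^k[((\Delta+2)n)^2 + ((\Delta+2)\underline{n})^2 - 4\epsilon^{AB}\nabla_A n\nabla_B(\Delta+2)\underline{n}]$. Multiplying by $-\frac{1}{4}$ produces the claimed $-\frac{1}{8}$ prefactors and the stated right-hand sides. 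Thus the proof is essentially a two-line citation of \eqref{evol_modified_mass} together with Lemma \ref{news_square_integral}.

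The only point requiring a tiny bit of care — and the closest thing to an obstacle — is confirming that the modified-mass correction term does not contribute to $\partial_u P^k$, i.e.\ that $\int_{S^2}\tilde X^k\,\partial_u(\nabla^A\nabla^B C_{AB}) = \int_{S^2}\tilde X^k\,\nabla^A\nabla^B N_{AB} = 0$. I would handle this by integrating by parts twice to move both derivatives onto $\tilde X^k$, obtaining $\int_{S^2}(\nabla^A\nabla^B\tilde X^k)N_{AB} = -\int_{S^2}\tilde X^k\,\sigma^{AB}N_{AB} = 0$ since $N_{AB}$ is traceless; alternatively one notes $\nabla^A\nabla^B C_{AB} = \frac{1}{2}\Delta(\Delta+2)c$ is supported in modes $\ell\ge 2$ only through $c$, but the cleaner route is the traceless-Hessian argument. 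With that observation in hand, the proposition follows immediately, so I expect no genuine difficulty here; it is a bookkeeping corollary of the earlier lemmas.
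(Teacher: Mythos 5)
Your proposal is correct and follows essentially the same route as the paper: the paper's proof of this proposition is simply a citation of Lemma \ref{news_square_integral}, applied to the already-derived formulae $\partial_u E=-\frac{1}{4}\int_{S^2}N_{AB}N^{AB}$ and $\partial_u P^k=-\frac{1}{4}\int_{S^2}\tilde{X}^k N_{AB}N^{AB}$ from Section 3. Your extra verification that the divergence term $\frac{1}{4}\nabla^A\nabla^B N_{AB}$ drops out against $\tilde X^k$ (via the traceless-Hessian argument) is exactly the right bookkeeping and is implicit in the paper's earlier derivation.
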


\begin{proof} These follow from Lemma \ref{news_square_integral}.

\end{proof}

\subsection{Proof of Theorem \ref{evolution2_bracket}}

We first prove the following Proposition: 
\begin{proposition}\label{evolution in terms of potentials}
The evolution formulae of the  conserved quantities can be written as 
\[
\begin{split}
\partial_{u}\tilde J^{k} = &
\frac{1}{8} \int_{S^2} \tilde{X}^k \epsilon^{AB }[\nabla_{A}{c} \nabla_{B}\Delta(\Delta + 2) {n}+ \nabla_{A}\underline{c} \nabla_{B}\Delta(\Delta + 2)\underline{n}]\\
\partial_u \tilde C^{k} =& 
 \frac{1}{8} \int_{S^2} u \tilde{X}^k [((\Delta+2)n)^2 +((\Delta+2) \underline{n})^2- 4\epsilon^{AB} \nabla_A  n \nabla_B (\Delta+2)\underline{n}] \\
&+\frac{1}{16}\int_{S^2}  [\tilde{X}^{k}(\Delta(\Delta + 2)\underline{c}(\Delta + 2)\underline{n} -\Delta(\Delta + 2)\underline{n}(\Delta + 2)\underline{c})]\\
&+ \frac{1}{16}\int_{S^2}  [\tilde{X}^{k}(\Delta(\Delta + 2)c(\Delta + 2)n - \Delta(\Delta + 2)n(\Delta + 2)c].
\end{split}
\]
\end{proposition}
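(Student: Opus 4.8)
The plan is to start from the coordinate-expression formulae of Theorem~\ref{evolution1}, namely \eqref{dujk} and \eqref{duck}, and feed in the potential identities collected in Section~2. For the angular momentum, set $Y^A = \epsilon^{AB}\nabla_B\tilde{X}^k$ and apply \eqref{YCN_bracket} to the term $\epsilon^{AE}\nabla_E\tilde{X}^k(C_{AB}\nabla_D N^{BD}-N_{AB}\nabla_D C^{BD})$; the $\epsilon_A^{\,\,\,D}\nabla_D[\cdots]$ line in \eqref{YCN_bracket} should integrate to zero (integrating by parts moves the derivative onto $Y^A\epsilon_A^{\;\;D}$, which is $2\tilde{X}^k\sigma_{\cdot\cdot}$-type, against a traceless contraction — or more simply it is a total divergence paired with a Killing-type field), leaving only the two ``Poisson-bracket-like'' pieces involving $(\Delta+2)c,(\Delta+2)n$ and their underlined versions. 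Then I would handle the remaining term $\tilde{X}^k\epsilon^{AB}(C_A^{\;\;D}N_{DB})$ in \eqref{dujk} by expanding via \eqref{CAB}, \eqref{NAB} and the $\epsilon$-contraction identities; this is the polarized analogue of \eqref{news_square}, and after integration by parts (using Lemma~\ref{same_mode} to kill cross terms and the commutation formulae \eqref{commutation}) it should combine with the bracket pieces to produce $\tilde{X}^k\epsilon^{AB}[\nabla_A c\,\nabla_B\Delta(\Delta+2)n + \nabla_A\underline{c}\,\nabla_B\Delta(\Delta+2)\underline{n}]$ up to the factor $\tfrac18$. The key algebraic input is that the Hessian relation \eqref{hessian_X_k} converts the ``divergence of $Y^A$'' terms into multiples of $\tilde{X}^k$, collapsing several traceless contractions.

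For the center of mass, set $Y^A = \nabla^A\tilde{X}^k$. The term $\tfrac{u}{2}\nabla^A\tilde{X}^k\nabla_A|N|^2$ integrates by parts to $-\tfrac{u}{2}\int \Delta\tilde{X}^k\,|N|^2 = u\int\tilde{X}^k\,N_{AB}N^{AB}$ (using $\Delta\tilde{X}^k=-2\tilde{X}^k$), and then I invoke the second formula of Lemma~\ref{news_square_integral} to rewrite $\int\tilde{X}^k N_{AB}N^{AB}$ in terms of potentials; this reproduces the first line of the claimed $\partial_u\tilde{C}^k$ formula with coefficient $\tfrac18$ after tracking the $\tfrac14$ in \eqref{duck}. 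For the remaining $\nabla^A\tilde{X}^k(C_{AB}\nabla_D N^{BD}-N_{AB}\nabla_D C^{BD})$ term I again use \eqref{YCN_bracket}, now with $Y^A=\nabla^A\tilde{X}^k$: the symmetric ``gradient-pair'' lines $\int Y^A[(\Delta+2)c\,\nabla_A(\Delta+2)n - (\Delta+2)n\,\nabla_A(\Delta+2)c]$ should be integrated by parts once (moving $\nabla_A$ off and using $\nabla^A\tilde{X}^k\nabla_A$ acting as a gradient pairing, then $\Delta\tilde{X}^k=-2\tilde{X}^k$) to turn into $\int\tilde{X}^k[\Delta(\Delta+2)c\,(\Delta+2)n - \Delta(\Delta+2)n\,(\Delta+2)c]$-type expressions, matching the last two lines of the Proposition. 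The $\epsilon_A^{\,\,\,D}\nabla_D[\cdots]$ line in \eqref{YCN_bracket} with $Y^A=\nabla^A\tilde{X}^k$ should vanish for parity/antisymmetry reasons: integrating by parts gives $\int \nabla_A\nabla_D\tilde{X}^k\,\epsilon^{AD}[\cdots]$-ish terms where $\nabla_A\nabla_D\tilde{X}^k = -\tilde{X}^k\sigma_{AD}$ by \eqref{hessian_X_k} is symmetric and dies against $\epsilon^{AD}$.

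The main obstacle I anticipate is the bookkeeping in the angular-momentum case: reconciling the two distinct contributions — the bracket terms coming out of \eqref{YCN_bracket} and the $\tilde{X}^k\epsilon^{AB}C_A^{\;\;D}N_{DB}$ term — into a single clean expression $\tilde{X}^k\epsilon^{AB}\nabla_A c\,\nabla_B\Delta(\Delta+2)n$ requires carefully commuting Laplacians past $\epsilon^{AB}\nabla_A(\cdot)\nabla_B(\cdot)$ using \eqref{ibp1} and \eqref{ibp2}, and verifying that the cross terms between $c$ (closed) and $\underline{n}$ (co-closed) cancel. Lemma~\ref{same_mode} guarantees no mixed-mode survivors, but the closed/co-closed cross terms live in the same mode, so their cancellation must be checked by direct computation with the $\epsilon$-identities \eqref{epsilon_square}, \eqref{contract_epsilon_square} rather than by a mode argument — this is the one place where a genuine computation, rather than a structural observation, is unavoidable. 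Once the angular-momentum identity is established, the center-of-mass identity follows by the same manipulations plus the already-proven Lemma~\ref{news_square_integral}, so I would present the $\tilde{J}^k$ computation in full and treat $\tilde{C}^k$ as a variation on it. Finally, Theorem~\ref{evolution2_bracket} itself is then immediate from this Proposition once the brackets $[\cdot,\cdot]_1$ and $[\cdot,\cdot]_2$ are unpacked via \eqref{ibp1}, \eqref{ibp2} to move the Laplacians symmetrically onto both slots.
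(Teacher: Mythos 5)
Your overall strategy --- splitting $\partial_u\tilde J^k$ into the $Y^A(C_{AB}\nabla_D N^{BD}-N_{AB}\nabla_D C^{BD})$ piece handled by \eqref{YCN_bracket} and the $\tilde X^k\epsilon^{AB}C_A^{\;\;D}N_{DB}$ piece handled by the polarized identity \eqref{polarize}, then recombining via \eqref{ibp1} --- is the paper's strategy, and your treatment of $\partial_u\tilde C^k$ is essentially correct. But there is a genuine error in the angular momentum computation: you claim that the third line of \eqref{YCN_bracket}, namely $\tfrac14\int Y^A\epsilon_A^{\;\;D}\nabla_D[(\Delta+2)c(\Delta+2)\underline n-(\Delta+2)\underline c(\Delta+2)n]$, integrates to zero when $Y^A=\epsilon^{AB}\nabla_B\tilde X^k$. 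It does not. By \eqref{contract_epsilon_square} one has $\epsilon^{AB}\nabla_B\tilde X^k\,\epsilon_A^{\;\;D}=\nabla^D\tilde X^k$, so integrating by parts produces $-\int\Delta\tilde X^k\,[\cdots]=2\int\tilde X^k[\cdots]$, which is nonzero in general. Both of your proposed justifications fail: $\nabla^D\tilde X^k$ is a conformal Killing field with divergence $-2\tilde X^k$, not a Killing field, and the bracketed quantity is a scalar, so there is no traceless contraction for the Hessian to die against. The vanishing you describe is correct only in the center-of-mass case $Y^A=\nabla^A\tilde X^k$, where the symmetric Hessian \eqref{hessian_X_k} pairs with the antisymmetric $\epsilon_A^{\;\;D}$.

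This surviving term is not a nuisance you may discard: it is exactly what cancels the non-$\epsilon^{AB}$ part of the other contribution. Writing $\int\tilde X^k\epsilon^{AB}C_A^{\;\;D}N_{DB}$ as an integral of $\tilde X^k\,\varepsilon_2(C)_{AB}N^{AB}$ and using that $\varepsilon_2(C)$ has potentials $(-\underline c,\,c)$, the polarization \eqref{polarize} yields, besides the $\epsilon^{AB}\nabla_A(\cdot)\nabla_B(\cdot)$ terms you want, a term proportional to $\int\tilde X^k[(\Delta+2)c(\Delta+2)\underline n-(\Delta+2)\underline c(\Delta+2)n]$; it is precisely the $2\int\tilde X^k[\cdots]$ coming from the third line of \eqref{YCN_bracket} that cancels it. If you drop that line, your final answer retains a spurious $c$--$\underline n$ and $\underline c$--$n$ cross term and does not reduce to $\tfrac18\int\tilde X^k\epsilon^{AB}[\nabla_Ac\,\nabla_B\Delta(\Delta+2)n+\nabla_A\underline c\,\nabla_B\Delta(\Delta+2)\underline n]$. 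You correctly flag that the closed/co-closed cross terms must cancel by direct computation rather than by Lemma \ref{same_mode}, but the term needed for that cancellation is the one you set to zero. Once this is repaired, the rest of your outline (the $\tfrac u2\nabla_A|N|^2$ term, Lemma \ref{news_square_integral}, and the antisymmetric integration by parts giving the $[\cdot,\cdot]_2$ structure) goes through as in the paper.
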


\begin{proof}
We write \[4\partial_u \tilde J^k = \int_{S^2} - \tilde X^k \epsilon^{AB} C_B^{\;\;D} N_{DA} +\int_{S^2} Y_{k}^A   (C_{AB}\nabla_{D}N^{BD} -N_{AB}\nabla_{D}C^{BD})=(1)+(2)\] and compute $(1)$ and $(2)$ separately. 

Note that $(1) = -\int_{S^2} \tilde{X}^k \varepsilon_2(C_{AB}) N^{AB}$ and recall that $\varepsilon_2(C_{AB})$ has potentials $-\cb$ and $c$. Applying \eqref{polarize}, we get  
\begin{align*}
(1) =& -\frac{1}{2} \int_{S^2} \tilde{X}^k [-(\Delta+2) \underline{c} (\Delta+2)n +(\Delta+2)c (\Delta+2) \underline{n}\\
     &+  \int_{S^2} \epsilon^{AB} (\nabla_A  \underline{c} \nabla_B (\Delta+2)\underline{n}-\nabla_A  n \nabla_B (\Delta+2)c) ] \\
=& -\frac{1}{2} \int_{S^2} \tilde{X}^k [-(\Delta+2) \underline{c} (\Delta+2)n +(\Delta+2)c (\Delta+2) \underline{n}] \\
& - \int_{S^2} \tilde{X}^k   \epsilon^{AB} [\nabla_A  \underline{c} \nabla_B (\Delta+2)\underline{n}-\nabla_A  n \nabla_B (\Delta+2)c]\\
=& -\frac{1}{2} \int_{S^2} \tilde{X}^k [-(\Delta+2) \underline{c} (\Delta+2)n +(\Delta+2)c (\Delta+2) \underline{n}] \\
& - \int_{S^2} \tilde{X}^k   \epsilon^{AB} [\nabla_A {c} \nabla_B (\Delta+2){n}
+\nabla_A  \underline{c} \nabla_B (\Delta+2)\underline{n}]
\end{align*}
where we used \eqref{ibp1} in the last equality. 

Applying \eqref{YCN_bracket} to $Y^A=Y^A_k$, we have 
\begin{align*}
(2) &= \frac{1}{2}\int_{S^2} \tilde{X}^k   \epsilon^{AB} [\nabla_A (\Delta+2)c \nabla_B (\Delta+2)n+\nabla_A (\Delta+2)\underline{c} \nabla_B (\Delta+2)\underline{n}]\\
&\quad +\frac{1}{2}\int_{S^2}  \tilde{X}^k [(\Delta+2)c(\Delta+2)\underline{n}-(\Delta+2)\underline{c} (\Delta+2)n]
\end{align*}

Therefore, 
\begin{align*}
(1) + (2) &= - \int_{S^2} \tilde{X}^k   \epsilon^{AB} [\nabla_A {c} \nabla_B (\Delta+2){n}
+\nabla_A  \underline{c} \nabla_B (\Delta+2)\underline{n}]\\
&\quad +\frac{1}{2}\int_{S^2} \tilde{X}^k   \epsilon^{AB} [\nabla_A (\Delta+2)c \nabla_B (\Delta+2)n+\nabla_A (\Delta+2)\underline{c} \nabla_B (\Delta+2)\underline{n}]\\
&= \frac{1}{2} \int_{S^2} \tilde X^k \big[ \epsilon^{AB} \na_A \Delta c \na_B (\Delta + 2)n + \na_A \Delta \cb \na_B (\Delta + 2)\nb \big], 
\end{align*}
and the desired formula follows by \eqref{ibp1}.

As for the evolution of the center of mass, we apply \eqref{YCN} and note that
\[\int_{S^2} \nabla^{A}\tilde{X}^{k} \epsilon_A^{\,\,\,\,D} \nabla_D [(\Delta+2)c(\Delta+2)\underline{n}+(\Delta+2)\underline{c} (\Delta+2)n]=0.\]

Therefore,

\[
\begin{split}
\partial_u \tilde C^{k} 
=&   \frac{1}{8} \int_{S^2} u \tilde{X}^k [((\Delta+2)n)^2 +((\Delta+2) \underline{n})^2- 4\epsilon^{AB} \nabla_A  n \nabla_B (\Delta+2)\underline{n}] \\
&-\frac{1}{16}\int_{S^2} [\nabla^{A}\tilde{X}^{k}(\nabla_{A}(\Delta + 2)\underline{c}(\Delta + 2)\underline{n} -\nabla_{A}(\Delta + 2)\underline{n}(\Delta + 2)\underline{c})]\\
&-\frac{1}{16}\int_{S^2} [\nabla^{A}\tilde{X}^{k}(\nabla_{A}(\Delta + 2)c(\Delta + 2)n - \nabla_{A}(\Delta + 2)n(\Delta + 2)c]\\
=&  \frac{1}{8} \int_{S^2} u\tilde{X}^k [((\Delta+2)n)^2 +((\Delta+2) \underline{n})^2- 4\epsilon^{AB} \nabla_A  n \nabla_B (\Delta+2)\underline{n}] \\
&+ \frac{1}{16}\int_{S^2} \tilde{X}^{k}[\Delta(\Delta + 2)\underline{c}(\Delta + 2)\underline{n} -\Delta(\Delta + 2)\underline{n}(\Delta + 2)\underline{c}]\\
&+\frac{1}{16}\int_{S^2} \tilde{X}^{k}[\Delta(\Delta + 2)c(\Delta + 2)n - \Delta(\Delta + 2)n(\Delta + 2)c]\\
\end{split}
\]
\end{proof}

To obtain the formulae given in Theorem \ref{evolution2_bracket}, we rewrite the above formulae in terms of bracket operators on $S^2$.

\begin{definition}\label{bracket} For two smooth functions $u$ and $v$ on $S^2$, denote
\begin{equation}\label{bracket1} [u, v]_1=\epsilon^{AB}\nabla_A u\nabla_B v\end{equation} and
\begin{equation}\label{bracket2} [u, v]_2=\frac{1}{2} ((\Delta u)v-(\Delta v) u).\end{equation} 
\end{definition}
In view of Definition \ref{bracket}, we can write 
\[\partial_u \tilde J^k=\frac{1}{8} \int_{S^2} \tilde{X}^k ( [ c, \Delta(\Delta+2) n]_1+[ \underline{c}, \Delta(\Delta+2) \underline{n}]_1)\] and similarly for the center of mass. This proves Theorem \ref{evolution2_bracket}.

\section{Supertranslation invariance of the total flux}

\subsection{Total flux of classical conserved quantities}
We study the effect of supertranslation on the total flux of conserved quantities along null infinity or, equivalently, the difference of conserved quantities at timelike infinity and spatial infinity. As in the previous section, suppose $I=(-\infty,\infty)$ and $\mathscr{I}^+$ is complete extending from spatial infinity ($u=-\infty$) to timelike infinity ($u=+\infty$). A supertranslation is a change of coordinates $(\bar{u}, \bar x^A)\rightarrow (u, x^A)$ such that $u = \bar u + f(x), x^A = \bar x^A$ on $\mathscr{I}^+$. Let $m$, $C_{AB}$, and $N_{AB}$ denote the mass aspect, the shear, and the news, respectively, in the $(u,  x^A)$ coordinate system. Since the spherical coordinate is unchanged, we use $x$ to denote either $x^A$ or $\bar x^A$ throughout this section. It is well-known (see \cite[(C.117) and (C.119)]{CJK} for example) that the shear $\bar{C}_{AB}(\bar u,x)$, and the news $ \bar{N}_{AB}(\bar u,x)$  in the $(\bar u, x)$ coordinate system are given by 
\begin{align}
\bar{C}_{AB}(\bar u,x) &= C_{AB}(\bar u+ f(x),x) - 2 \na_A\na_B f + \Delta f \sigma_{AB} \label{supertranslation shear}\\
\bar{N}_{AB}(\bar u,x) &= N_{AB}(\bar u+f(x),x) \label{supertranslation news}
\end{align}

We assume that there exists a constant $\varepsilon>0$ such that
\begin{align}\label{news_decay}
N_{AB}( u,x) = O(|u|^{-1-\varepsilon}) \mbox{ as } u \rw \pm\infty.
\end{align}
Note that the limits of the shear tensor exist 
\[ \lim_{ u \rw \pm\infty} C_{AB} (u,x) = C_{AB}(\pm)\]
as a result of \eqref{news_decay}. 

Similarly, \eqref{news_decay} implies that the limits of the angular momentum exist
\[ \lim_{u\rightarrow \pm \infty} \tilde J^k (u) =\tilde J^k(\pm).  \]
Denote the corresponding quantities after supertranslation by $\tilde J^k_f(\pm)$.

By \eqref{dujk}, the total angular momentum flux is
\begin{align}\label{total_angmom_flux}
\begin{split}
&\tilde J^k(+) - \tilde J^k(-) \\
&= \frac{1}{4}\int_{-\infty}^{+\infty} \int_{S^2} \lt[  Y^A \lt( C_{AB}\na_D N^{BD} - N_{AB} \na_D C^{BD} \rt) + \tilde X^k \epsilon^{AB} C_A^{\;\;D} N_{DB} \rt] (u, x) dS^2 du\\
&= \frac{1}{4}\int_{-\infty}^{+\infty} \int_{S^2} \lt[  -\na_D Y^A C_{AB}N^{BD} + Y^A \lt( -\na_D C_{AB} N^{BD} - N_{AB} \na_D C^{BD} \rt)  \rt] (u, x) dS^2 du\\
&\quad + \frac{1}{4}\int_{-\infty}^{+\infty}\int_{S^2} \tilde X^k \epsilon^{AB} (C_A^{\;\;D} N_{DB})(u,x)\,dS^2du 
\end{split}
\end{align}
in the $(u, x)$ coordinates and 
\begin{equation}\label{Jf}\begin{split}
&\tilde J^k_f(+) - \tilde J^k_f(-) \\
&= \frac{1}{4}\int_{-\infty}^{+\infty} \int_{S^2} \lt[  -\na_D Y^A \bar C_{AB} \bar N^{BD} + Y^A \lt( -\na_D \bar C_{AB} \bar N^{BD} - \bar N_{AB} \na_D \bar C^{BD} \rt)  \rt] (\bar u, x) dS^2 d\bar u\\
&\quad + \frac{1}{4}\int_{-\infty}^{+\infty} \int_{S^2}\tilde X^k \epsilon^{AB} (\bar C_A^{\;\;D} \bar N_{DB})(\bar u,x)\,dS^2d\bar u\end{split}
\end{equation}
in the $(\bar{u}, x)$ coordinates.

Applying the chain rule on \eqref{supertranslation shear} yields
\begin{align*}
\na_D \bar C_{AB} (\bar u,x)&= N_{AB}(\bar u + f,x)\na_D f + (\na_D C_{AB})(\bar u+f,x) -\na_D F_{AB},\\ 
\na_D \bar{C}^{BD}(\bar u,x) &= N^{BD}(\bar u+f,x) \na_D f + (\na_D C^{BD}) ( \bar u+f ,x) -\na^B (\Delta+2)f.
\end{align*}
To simplify notation, we introduce the $u$ independent symmetric traceless 2-tensor \[ F_{AB} = 2\na_A\na_B f - \Delta f \sigma_{AB} \] and thus $\nabla_D F^{BD}=\nabla^B (\Delta+2)f$.
 
Equation \eqref{Jf} can be rewritten as
\begin{equation}\label{Jf_temp}\begin{split}
&\tilde J^k_f(+) - \tilde J^k_f(-)\\
=& \frac{1}{4}\int_{-\infty}^{\infty}\int_{S^2} \lt[ -\na_D Y^A (C_{AB}-F_{AB})N^{BD} + Y^A \omega_A  \rt]  (\bar u+f, x) dS^2 d\bar u \\
& + \frac{1}{4}\int_{-\infty}^{+\infty}\int_{S^2} \lt[ \tilde X^k \epsilon^{AB} \lt( C_A^{\;\;D} -F_A^D \rt)N_{DB} \rt](\bar u+f,x) \,dS^2 d\bar u
\end{split}\end{equation}
where
\begin{align*}
\omega_A (u, x)&= \big( -N_{AB}(u,x)\na_D f -\na_D C_{AB} + \na_D F_{AB} \big) N^{BD}(u,x)\\
&\quad - N_{AB}( u,x) \lt( N^{BD}( u,x)\na_D f + \na_D C^{BD}( u,x) - \na_D F^{BD}(x) \rt).
\end{align*}
Note that the integrand is evaluated at $(\bar u+f, x)$ in equation \eqref{Jf_temp}, to which the change of variable will be applied.

By the decaying assumption of the news \eqref{news_decay}, we can apply change of variable $u=\bar u+f$ to \eqref{Jf_temp} and rewrite it as

\begin{align}\label{Jf_temp1}
\begin{split}
&\tilde J^k_f(+) -\tilde J^k_f(-)\\
&= 
\frac{1}{4}\int_{-\infty}^{+\infty}\int_{S^2} \lt[ -\na_D Y^A (C_{AB}-F_{AB})N^{BD} + Y^A \omega_A + \tilde X^k \epsilon^{AB} \lt( C_A^{\;\;D} -F_A^D \rt)N_{DB}  \rt] (u, x)dS^2 du
\end{split}\end{align}

Combining \eqref{total_angmom_flux} and   \eqref{Jf_temp1}, we obtain 
\begin{align*}
&\lt( \tilde J^k_f(+) - \tilde J^k_f(-) \rt) - \lt( \tilde J^k(+) - \tilde J^k(-) \rt)\\
&= \frac{1}{4} \int_{-\infty}^\infty \int_{S^2} -Y^A |N|^2 \na_A f dS^2  du\\
&\quad + \frac{1}{4} \int_{-\infty}^\infty \int_{S^2} \lt[ -Y^A F_{AB}\na_D N^{BD} + Y^A N_{AB}\na_D F^{BD} - \tilde X^k \epsilon^{AB} F_A^D N_{DB}\rt] dS^2 du
\end{align*}
where we used the identity $2N_{AB}N^{BD} = |N|^2 \delta_A^D$. 

Observe that the second integral is of the same form as $\pl_u \tilde J$ given in \eqref{dujk} and one can thus simplify it as in the proof of Proposition \ref{evolution in terms of potentials} to get 
\begin{align*}
&\lt( J^k_f(+) - J^k_f(-) \rt) - \lt( J^k(+) - J^k(-) \rt)\\
=& \frac{1}{4} \int_{-\infty}^\infty \int_{S^2}  f Y^A \na_A |N|^2  dS^2 du+ \frac{1}{4}  \int_{-\infty}^\infty \int_{S^2} \tilde X^k \epsilon^{AB} \na_A n \na_B \Delta(\Delta+2) f dS^2 du
\end{align*} 
Integrating by parts, we arrive at
\begin{align}\label{diff_angmom_flux_supertranslation}
\begin{split}
&\lt( \tilde J^k_f(+) - \tilde J^k_f(-) \rt) - \lt( \tilde J^k(+) -\tilde J^k(-) \rt)\\
=& \frac{1}{4} \int_{-\infty}^{+\infty} \int_{S^2} f Y^A \na_A \big( |N|^2 - \Delta (\Delta+2)n \big) dS^2 du\\
= & \int_{S^2} -2f Y^A \na_A (m(+)-m(-)) dS^2
\end{split}
\end{align}
where 
\begin{align*}
m(\pm) = \lim_{u \rw\pm \infty} m( u,x).
\end{align*} Here we used the mass loss formula \eqref{m_aspect_evol} in the form $\pl_u m = \frac{1}{8} \Delta(\Delta + 2)n - \frac{1}{8} |N|^2$.
Note that  $m(+)-m(-)$ is of the same mode as $Y^A \na_A (m(+)-m(-))$ because $Y^A$ is a Killing field. 

In summary, we obtain a necessary and sufficient condition for the total flux of the classical angular momentum to be supertranslation invariant.

\begin{theorem}\label{supertranslation invariance of total flux}
Suppose the news tensor decays as
\[ N_{AB}(u,x) = O(|u|^{-1-\varepsilon}) \mbox{ as } u \rw \pm\infty.
\]
The total flux of the classical angular momentum $\tilde J^k$ is  supertranslation invariant if and only if 
\[
m(+) - m(-)
\]
(as a function on $S^2$) is supported in the $l \le 1$ modes.

Moreover, the above condition holds when the rescaled curvature components $P$ (see Definition \ref{def curv components}) at $\mathscr{I}^+$ satisfy
\begin{align}
\lim_{u \rw \infty}P -\lim_{u \rw -\infty}P
\end{align}
is supported in the $l \le 1 $ modes.
\end{theorem}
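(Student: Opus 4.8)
The analytic core of this statement is already in place: applying the change of variables $u = \bar u + f$ to the flux integrals, together with the transformation laws \eqref{supertranslation shear}--\eqref{supertranslation news} and the mass loss formula, produces the identity \eqref{diff_angmom_flux_supertranslation},
\[
\big( \tilde J^k_f(+) - \tilde J^k_f(-) \big) - \big( \tilde J^k(+) - \tilde J^k(-) \big) = -2 \int_{S^2} f\, Y^A \na_A \big( m(+) - m(-) \big)\, dS^2, \qquad Y^A = \epsilon^{AB}\na_B\tilde X^k,
\]
valid for any supertranslation with smooth generator $f$; since \eqref{news_decay} is preserved under supertranslation by \eqref{supertranslation news}, all the limits involved are well defined. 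The plan is to convert the vanishing of the right-hand side, for the relevant class of $f$, into a condition on the modes of $g := m(+) - m(-)$.

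First I would pin down what \emph{supertranslation invariance} means here. As in the remark following Theorem \ref{total_flux_CWY_angular}, one asks for invariance only under supertranslations whose generator $f$ is supported in the $\ell \ge 2$ modes; under an ordinary ($\ell = 1$) translation $f = a_i \tilde X^i$ the flux is not invariant but transforms equivariantly together with the linear momentum flux, which I would verify directly from the identity above by computing $\int_{S^2}\tilde X^i\, Y^A \na_A g\, dS^2$ in terms of $\int_{S^2}\tilde X^j g\, dS^2$ (using \eqref{hessian_X_k} to discard the term $\tilde X^i \epsilon^{AB}\na_A\na_B\tilde X^k$), which reproduces the cross product of $a$ with the change in $P^j$. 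Thus the $\ell = 1$ part of $g$ is left unconstrained, whereas invariance must impose a condition on its $\ell \ge 2$ part.

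The heart of the equivalence is then a mode argument. Decompose $g = \sum_\ell g_\ell$ into $\ell$-modes; as in the proof of Lemma \ref{same_mode}, each rotation field $Y_k^A\na_A$ commutes with $\Delta$, so $Y^A\na_A g_\ell$ is again of mode $\ell$. For $f$ supported in $\ell \ge 2$, orthogonality of spherical harmonics of distinct modes gives $\int_{S^2} f\, Y^A\na_A g\, dS^2 = \int_{S^2} f\, Y^A\na_A g_{\ge 2}\, dS^2$, where $g_{\ge 2} = \sum_{\ell \ge 2} g_\ell$; this vanishes for every such $f$ and every $k \in \{1,2,3\}$ exactly when $Y_k^A\na_A g_{\ge 2} = 0$ for each $k$ (the function $Y_k^A\na_A g_{\ge 2}$ itself lies in $\ell \ge 2$, hence is determined by its pairings against all such $f$). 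Since $Y_1, Y_2, Y_3$ span the tangent space at every point of $S^2$, this forces $\na g_{\ge 2} = 0$, so $g_{\ge 2}$ is constant and, having no $\ell = 0$ component, vanishes; the converse implication is immediate from orthogonality. This gives the stated equivalence: the total flux of $\tilde J^k$ is supertranslation invariant if and only if $m(+) - m(-)$ is supported in the $\ell \le 1$ modes.

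For the last assertion I would invoke the relation between the mass aspect $m$ and the rescaled curvature component $P$ recorded in Definition \ref{def curv components}, which expresses $m$ as $P$ plus corrections quadratic in the shear and the news. By the decay \eqref{news_decay} the news--shear products vanish as $u \to \pm\infty$, so $m(+) - m(-)$ and $\lim_{u\to\infty}P - \lim_{u\to-\infty}P$ differ only by terms supported in the $\ell \le 1$ modes, and the first part then applies. I expect this last step to be the main obstacle: one must check, using the specific algebraic form of the relation in Definition \ref{def curv components} rather than a soft estimate, that the shear-quadratic discrepancy between $m$ and $P$ contributes no new $\ell \ge 2$ content to $m(+) - m(-)$, since the $u$-independent limiting shears $C_{AB}(\pm)$ could a priori do so.
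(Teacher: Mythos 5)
Your proposal is correct and follows essentially the paper's own route: the identity \eqref{diff_angmom_flux_supertranslation} combined with the fact that the rotation fields $Y^A_k\nabla_A$ preserve spherical-harmonic modes is exactly how the paper concludes, and your spanning argument (that $Y_1,Y_2,Y_3$ span each tangent plane, forcing $\nabla g_{\ge 2}=0$) simply makes explicit the step the paper leaves implicit after the remark that ``$Y^A$ is a Killing field.'' The obstacle you anticipate in the final step is not actually there: by Proposition \ref{curv in BS variable} the exact relation is $P=-2m-\frac{1}{4}C_{AB}N^{AB}$, so the discrepancy is shear \emph{times news} rather than shear-quadratic, and it vanishes as $u\to\pm\infty$ under \eqref{news_decay}; hence $\lim_{u\to\pm\infty}P=-2m(\pm)$ and the condition on $P$ is equivalent, not merely sufficient.
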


\begin{remark} Theorem \ref{supertranslation invariance of total flux} is motivated by the investigation in \cite{Christo1991}, which is built on the framework of stability of Minkowski spacetime. Indeed, equation (11) and (12) of \cite{Christo1991}
\[ Z^+ - Z^- = \na\Phi \qquad \mbox{div} \lt( \Sigma^+ - \Sigma^- \rt) = Z^+ - Z^- \]
imply $\lim_{u\rw\infty} \cb(u,x)= \lim_{u \rw -\infty} \cb(u,x)$. Using moreover (10) of \cite{Christo1991}
\[ \Delta \Phi = -2(F - \bar F), \]
we get
\[ \na_A \big( 8F - \Delta(\Delta+2)c|_{-\infty}^{+\infty}\big) =0. \]
\end{remark}

Moreover, the total flux of the classical center of mass is supertranslation invariant under the same condition
\begin{theorem} Suppose the news tensor decays as
\[ N_{AB}( u,x) = O(|u|^{-1-\varepsilon}) \mbox{ as } u \rw \pm\infty,
\] 
The total flux of the classical center of mass $\tilde C^k$ is  supertranslation invariant if and only if 
\[
m(+) - m(-)
\]
is a constant function on $S^2$.
\end{theorem}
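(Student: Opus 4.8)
The plan is to mirror the computation that was carried out for the angular momentum flux in \eqref{diff_angmom_flux_supertranslation}, using now the center-of-mass evolution formula \eqref{duck} in place of \eqref{dujk}, and with $Y^A = \nabla^A \tilde X^k$ (a gradient field) rather than $\epsilon^{AB}\nabla_B \tilde X^k$. First I would write out $\tilde C^k(+) - \tilde C^k(-)$ as a $u$-integral of the right-hand side of \eqref{duck}, both in the $(u,x)$ coordinates and, after supertranslation, in the $(\bar u, x)$ coordinates using \eqref{supertranslation shear} and \eqref{supertranslation news}. The term $\frac{u}{2}\nabla_A|N|^2$ in \eqref{duck} is new relative to the angular momentum case, but note that $\bar u = u - f$, so after the change of variable $u = \bar u + f$ the factor $u$ becomes $\bar u + f$; the $\bar u$ piece reproduces exactly the unsupertranslated center-of-mass flux term, and the leftover is $\frac{f}{2}\nabla^A\tilde X^k \nabla_A|N|^2$, which combines cleanly with the analogous terms arising from the $C_{AB}\nabla_D N^{BD} - N_{AB}\nabla_D C^{BD}$ pieces.

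Next, after applying the chain rule to $\nabla_D \bar C_{AB}$ and $\nabla_D\bar C^{BD}$ exactly as on the angular-momentum side — producing the $u$-independent tensor $F_{AB} = 2\nabla_A\nabla_B f - \Delta f\,\sigma_{AB}$ — and performing the change of variable $u=\bar u+f$, I expect the difference $\bigl(\tilde C^k_f(+)-\tilde C^k_f(-)\bigr) - \bigl(\tilde C^k(+)-\tilde C^k(-)\bigr)$ to collect into (i) a term $\frac14\int\int f\,\nabla^A\tilde X^k \nabla_A|N|^2$ coming from both the explicit $\frac{u}{2}\nabla_A|N|^2$ piece and the $|N|^2\nabla_A f$ piece produced by the chain rule (using $2N_{AB}N^{BD}=|N|^2\delta_A^D$), and (ii) a term of the same algebraic shape as $\partial_u\tilde C^k$ itself but with $(C_{AB},N_{AB})$ replaced by $(F_{AB},N_{AB})$. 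For (ii) I would invoke the rewriting in terms of potentials from Proposition \ref{evolution in terms of potentials}: since $F_{AB}$ has potentials $(2f,0)$ (up to $\ell\le 1$ modes), the $\epsilon^{AB}$-type terms drop out against $\nabla^A\tilde X^k$ (as already noted in the proof of Proposition \ref{evolution in terms of potentials}), and what survives is $\frac{1}{8}\int\int u\,\tilde X^k$-type contributions plus the $\Delta(\Delta+2)$-bracket term $\frac{1}{16}\int\tilde X^k[\Delta(\Delta+2)(2f)\cdot(\Delta+2)n - \cdots]$. After integrating in $u$ and using the mass-loss formula $\partial_u m = \frac18\Delta(\Delta+2)n - \frac18|N|^2$, the $|N|^2$ and $\Delta(\Delta+2)n$ contributions should assemble into $\nabla_A(m(+)-m(-))$, leaving
\[
\bigl(\tilde C^k_f(+)-\tilde C^k_f(-)\bigr) - \bigl(\tilde C^k(+)-\tilde C^k(-)\bigr) = \int_{S^2} \bigl(\text{linear in } f\bigr)\cdot \nabla^A\tilde X^k\,\nabla_A\bigl(m(+)-m(-)\bigr)\,dS^2,
\]
up to possibly an extra $\int_{S^2}\tilde X^k f\cdot(\text{something in }m(+)-m(-))$ term from the $3\tilde X^k cm$-type structure — this is where I must be careful, since the classical $\tilde C^k$ does not contain such a term, so I expect only the gradient-pairing to appear.

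Once the difference is expressed as $\int_{S^2} f\,\nabla^A\tilde X^k\,\nabla_A(m(+)-m(-))\,dS^2$ (times a constant), the ``if and only if'' follows by a density/mode argument: $f$ ranges over all smooth functions on $S^2$, and for the integral to vanish for all $f$ one needs $\nabla^A\tilde X^k \nabla_A(m(+)-m(-))$ to integrate to zero against every $f$, i.e. to vanish identically, for $k=1,2,3$. Writing $g = m(+)-m(-)$ and noting $\nabla^A\tilde X^k\nabla_A g$ is, up to the Killing-field structure, the component of $\nabla g$ along a translation field, the condition $\nabla^A\tilde X^k\nabla_A g = 0$ for all $k$ forces $\nabla g = 0$, i.e. $g$ is constant. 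Conversely if $g$ is constant then $\nabla_A g=0$ and the difference vanishes. The main obstacle I anticipate is the bookkeeping in step two: correctly tracking the $F_{AB}$-quadratic-in-$f$ contributions and confirming that, unlike the angular momentum case where an $\ell\le1$ condition sufficed, here every $\ell\ge1$ mode of $g$ (not just $\ell\ge2$) obstructs invariance — in other words, verifying that the $\ell=1$ mode of $g$ genuinely contributes for the center of mass, which is precisely why the criterion is ``$g$ constant'' rather than ``$g$ supported in $\ell\le1$''. This should come out of the fact that $\nabla^A\tilde X^k$ is a gradient (not divergence-free) field, so Lemma \ref{same_mode}-type cancellations behave differently than in the $\tilde J^k$ computation.
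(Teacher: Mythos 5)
Your overall strategy is the paper's: rerun the supertranslation comparison from the angular momentum case with $Y^A=\nabla^A\tilde{X}^k$, use the chain rule to produce $F_{AB}$, change variables $u=\bar u+f$, and convert the residue into $m(+)-m(-)$ via the mass-loss formula. But there is a concrete error at the decisive step. You predict that the flux difference reduces to a pure gradient pairing $\int_{S^2} f\,\nabla^A\tilde{X}^k\nabla_A\big(m(+)-m(-)\big)$ and explicitly state that you expect no $\int_{S^2}\tilde{X}^k f\cdot(\cdots)$ term. Such a term does appear and is essential: integrating the $\frac{u}{2}\nabla_A|N|^2$ piece by parts first gives $u|N|^2\tilde{X}^k$, the substitution $\bar u=u-f$ then produces $-\tilde{X}^k|N|^2 f$, and this combines with the $\nabla^A\nabla^B N_{AB}$ part of the chain-rule terms to give $2\tilde{X}^k f\,\partial_u m$. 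Integrating in $u$, the correct difference is
\[
\bigl(\tilde C^k_f(+)-\tilde C^k_f(-)\bigr)-\bigl(\tilde C^k(+)-\tilde C^k(-)\bigr)=\int_{S^2}\Bigl(6\tilde{X}^k\,g-2\nabla^A\tilde{X}^k\nabla_A g\Bigr)f,\qquad g=m(+)-m(-).
\]

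Your final ``iff'' step also uses the wrong notion of invariance. In this paper (see the remark following the statement of the CWY angular momentum flux theorem in the introduction, and the proof in Section 5) supertranslation invariance means invariance under the $\ell\ge 2$ part of $f$ and equivariance under translations; so the criterion is that the expression paired with $f$ be supported in the $\ell\le 1$ modes, not that it vanish identically. With the correct expression above, demanding identical vanishing for all $f$ would force $g=0$ (contract with $\tilde{X}^k$ and sum over $k$ to get $6g=0$), contradicting the stated criterion that $g$ be constant. Your two inaccuracies happen to compensate --- the gradient-only expression tested against all $f$ does yield ``$g$ constant'' --- but neither step is right as written. The paper's actual finish is a mode computation: if $6\tilde{X}^kg-2\nabla^A\tilde{X}^k\nabla_Ag$ is supported in $\ell\le1$, contracting with $\tilde{X}^k$ shows $g$ is supported in $\ell\le2$, and one then checks that an $\ell=2$ (resp.\ $\ell=1$) mode of $g$ produces an $\ell=3$ (resp.\ $\ell=2$) mode in the expression, so $g$ must be constant. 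Your closing intuition --- that $\nabla^A\tilde{X}^k$ is not Killing, so modes shift and the $\ell=1$ mode of $g$ obstructs --- is the right reason the criterion is stronger than for angular momentum, but the derivation needs the $6\tilde{X}^kg$ term and the correct invariance notion to go through.
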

\begin{proof} Denoting $\tilde C^k(\pm)=\lim_{u\rw\pm\infty} \tilde C^k(u)$, 
by \eqref{duck} we have
\begin{align}\label{classical com total flux}
\begin{split}
&\tilde C^k(+) -\tilde C^k(-) \\
=& \frac{1}{4} \int_{-\infty}^{+\infty}\int_{S^2} u|N|^2(u,x) \tilde X^k + \na^A \tilde X^k \lt[ C_{AB} \na_D N^{BD} - N_{AB} \na_D C^{BD} \rt] (u, x) \,dS^2du.
\end{split}
\end{align}

On the other hand, 
\begin{align*}
&\tilde C_f^k(+) - \tilde C_f^k(-) \\
= & \frac{1}{4} \int_{-\infty}^{+\infty}\int_{S^2} \bar u|\bar N|^2(\bar u,x) \tilde X^k + \na^A \tilde X^k \lt[ \bar C_{AB} \na_D \bar N^{BD} - \bar N_{AB} \na_D \bar C^{BD} \rt] (\bar u, x) \,dS^2d\bar u.
\end{align*}
Proceed in the same way as in the case of angular momentum, we obtain
\begin{align*}
&\lt( \tilde C^k_f(+) - \tilde C^k_f(-)\rt) - \lt( \tilde C^k(+) - \tilde C^k(-) \rt)\\
=& \frac{1}{4} \int_{-\infty}^{+\infty}\int_{S^2} -\tilde X^k|N|^2 f - \na^A \tilde X^k  |N|^2 \na_A f \,dS^2du\\
& + \frac{1}{4} \int_{-\infty}^{+\infty}\int_{S^2} \na^A \tilde X^k (-2\na_A\na_B f + \Delta f \sigma_{AB})\na_D N^{BD} + \na^A \tilde X^k N_{AB} \na^B (\Delta + 2 )f \,dS^2du
\end{align*}
We simplify the second integral as
\begin{align*}
&\int_{S^2} -2 \tilde X^k \na_A f \na_D N^{AD} + 2 \na^A \tilde X^k \na_A f \na_B\na_D N^{BD} - 2 \na^A \tilde X^k \na^B N_{AB} \cdot f\\
=& \int_{S^2} 2\lt( \na^A \tilde X^k \na_A f \na_B \na_D N^{BD} + \tilde X^k \na^A\na^B N_{AB} \cdot f \rt)  
\end{align*}
and the mass loss formula $\pl_u m = \frac{1}{4} \na^A\na^B N_{AB} - \frac{1}{8}|N|^2$ implies that

\begin{align*}
&\lt( \tilde C^k_f(+) - \tilde C^k_f(-)\rt) - \lt( \tilde C^k(+) -\tilde C^k(-) \rt) \\
= & \int_{S^2} 2 \tilde X^k f \big( m(+)-m(-) \big) + 2\na^A \tilde X^k \big( m(+)-m(-) \big) \na_A f\\
= &\int_{S^2} \lt( 6\tilde X^k \big( m(+)-m(-) \big) - 2 \na^A \tilde X^k \na_A \big(m(+)-m(-) \big) \rt) f.
\end{align*}
 
Hence, $\tilde C^k(+) -\tilde C^k(-)$ is invariant under arbitrary supertranslation if and only if
$6\tilde X^k \big( m(+)-m(-) \big) - 2 \na^A \tilde X^k \na_A \big( m(+)-m(-) \big) $ is supported in the $l \le 1$ modes.

Multiplying the expression by $\tilde X^k$ and summing over $k=1,2,3$, we get 
$m(+)-m(-)$ is supported in the $l \le 2$ modes. However, a direct computations shows that if $m(+)-m(-)$ contains a $l=2$ mode, then 
$6\tilde X^k \big( m(+)-m(-) \big) - 2 \na^A \tilde X^k \na_A \big( m(+)-m(-) \big) $ contains a $l=3$ mode. Simiarly, if $m(+)-m(-)$ contains a $l=1$ mode, then 
$6\tilde X^k \big( m(+)-m(-) \big) - 2 \na^A \tilde X^k \na_A \big( m(+)-m(-) \big) $ contains a $l=2$ mode. Thus, $m(+)-m(-)$  is constant if and only if $\tilde C^k(+) -\tilde C^k(-)$ is invariant under arbitrary supertranslation
\end{proof}

\subsection{Total flux of the CWY conserved quantities}
In this subsection, we show that the total flux of the CWY angular momentum and center of mass is supertranslation invariant. We decompose $f$ into its modes: 
\[ f = \alpha_0 + \alpha_i \tilde X^i + f_{l \ge 2}\]
and let $J^k (\pm)$ be the limits of the CWY angular momentum in the $u$ coordinate and  $J_f^k (\pm)$ be the limits of the CWY angular momentum in the $\bar u$ coordinate. We have
\begin{theorem}\label{total_flux_CWY_angular}
Suppose the news tensor decays as
\[ N_{AB}( u,x) = O(|u|^{-1-\varepsilon}) \mbox{ as } u \rw \pm\infty.
\]
Then the total flux of $ J^k$ is supertranslation invariant. Namely,
\begin{align*}
\lt( J_f^k (+)-  J_f^k (-)   \rt)      -\lt( J^k (+)- J^k (-)\rt) =  \alpha_i \varepsilon^{ik}_{\;\;\;j} (P^j(+)-P^j(-)).
\end{align*}

\end{theorem}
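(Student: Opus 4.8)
The plan is to mimic the computation already carried out for the classical angular momentum in \eqref{diff_angmom_flux_supertranslation}, but now tracking the extra CWY correction term $-\int_{S^2} \epsilon^{AB}\na_B\tilde X^k \, c\,\na_A m$. Recall from the modification formula that
\[
J^k = \tilde J^k - \int_{S^2} \epsilon^{AB}\na_B \tilde X^k \, c\,\na_A m,
\]
so that
\[
\bigl(J^k_f(+)-J^k_f(-)\bigr) - \bigl(J^k(+)-J^k(-)\bigr)
= \Bigl[\bigl(\tilde J^k_f(+)-\tilde J^k_f(-)\bigr) - \bigl(\tilde J^k(+)-\tilde J^k(-)\bigr)\Bigr] + \Delta_{\mathrm{corr}},
\]
where $\Delta_{\mathrm{corr}}$ collects the difference of the correction terms evaluated in the barred and unbarred coordinates. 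The first bracket is exactly \eqref{diff_angmom_flux_supertranslation}, equal to $\int_{S^2} -2fY^A\na_A(m(+)-m(-))\,dS^2$ with $Y^A=\epsilon^{AB}\na_B\tilde X^k$. So the whole game is to show $\Delta_{\mathrm{corr}}$ cancels this against the residual $l=1$ piece $\alpha_i\,\varepsilon^{ik}_{\;\;\;j}(P^j(+)-P^j(-))$.

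The key steps, in order: (1) Compute the limits of the potential $c$ under supertranslation. From \eqref{supertranslation shear}, $\bar C_{AB}(\bar u,x) = C_{AB}(\bar u+f,x) - F_{AB}$ with $F_{AB}=2\na_A\na_B f-\Delta f\sigma_{AB}$, so the potential transforms as $\bar c(\bar u,x) = c(\bar u+f,x) - 2f$ (up to $l\le1$ modes, which by the standing convention are dropped). Hence $\bar c(\pm) = c(\pm) - 2f_{l\ge2} - 2\alpha_i\tilde X^i$ in the limit; the news decay \eqref{news_decay} guarantees these limits exist. (2) Compute the limits of the mass aspect: since $\bar N_{AB}(\bar u,x)=N_{AB}(\bar u+f,x)$, integrating the mass-loss formula shows $\bar m(\pm) = m(\pm)$ up to the correction from the $\frac14\na^A\na^B C_{AB}$-type terms that shift under the coordinate change — one must be careful here because $m$ itself is not supertranslation-invariant; rather $\hat m = m - \frac18\Delta(\Delta+2)c$ shifts in a controlled way. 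Use $\bar m(\bar u,x) = m(\bar u+f,x) + (\text{terms in } f, C_{AB})$ from \cite{CJK}, or equivalently derive it from $\partial_u\hat m=-\frac18|N|^2$ and the known transformation of $\hat m$. (3) Assemble $\Delta_{\mathrm{corr}} = -\int_{S^2}\epsilon^{AB}\na_B\tilde X^k\,\bigl[\bar c(+)\na_A\bar m(+) - \bar c(-)\na_A\bar m(-)\bigr] + \int_{S^2}\epsilon^{AB}\na_B\tilde X^k\,\bigl[c(+)\na_A m(+) - c(-)\na_A m(-)\bigr]$, expand using the transformations from (1)--(2), and integrate by parts using Lemma \ref{same_mode}, \eqref{ibp1}, and the Hessian equation \eqref{hessian_X_k}. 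The $-2f$ shift in $\bar c$ produces a term $+2\int_{S^2}\epsilon^{AB}\na_B\tilde X^k\,f\,\na_A(m(+)-m(-))$, which precisely cancels the classical anomaly from \eqref{diff_angmom_flux_supertranslation}. (4) What remains are cross terms of the form $\int_{S^2}\epsilon^{AB}\na_B\tilde X^k\,c(\pm)\,\na_A(\bar m(\pm)-m(\pm))$ and the contribution of the $\alpha_i\tilde X^i$ piece of the $c$-shift against $\na_A m(\pm)$; these should collapse, via Lemma \ref{same_mode} (which kills cross-mode pairings) and the zero-momentum-type bookkeeping, to the single surviving $l=1$ term $\alpha_i\,\varepsilon^{ik}_{\;\;\;j}(P^j(+)-P^j(-))$, where one recognizes $\int_{S^2}\epsilon^{AB}\na_B\tilde X^k\,\tilde X^i\,(\text{mass aspect})$ as a rotation of the linear momentum using \eqref{hessian_X_k} and the structure constants of $SO(3)$.

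The main obstacle I anticipate is step (2): pinning down exactly how the mass aspect $m$ — as opposed to the better-behaved $\hat m$ — transforms under supertranslation and hence what $\bar m(\pm)-m(\pm)$ is, including all the quadratic-in-$C$ and quadratic-in-$f$ contributions, since these feed into the cross terms in step (4) and must conspire to leave only the clean $l=1$ answer. A secondary subtlety is the consistent treatment of the $l\le1$ ambiguity in the potentials $c,\underline c$: the statement's appearance of $\alpha_i$ (but not $\alpha_0$ or $f_{l\ge2}$) in the anomaly signals that the $l\ge2$ and $l=0$ parts of the $c$-shift must drop out entirely, which should follow from the mode-orthogonality Lemma \ref{same_mode} together with the fact that $m(+)-m(-)$ pairs against $\na_A\tilde X^k$ (an $l=1$ object) inside an $\epsilon^{AB}$-contraction — but verifying this cleanly requires care. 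Once these transformation laws are in hand, the remaining manipulations are the same integration-by-parts identities already established in Section 2 and used in the proof of Theorem \ref{evolution1}.
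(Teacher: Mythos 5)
Your overall strategy is the paper's: write $J^k=\tilde J^k-\int_{S^2}Y^A\,c\,\na_A m$, transform the correction term under the supertranslation, and combine with \eqref{diff_angmom_flux_supertranslation}. But there are two concrete problems. First, your transformation law for the potential in step (1) is wrong, and internally inconsistent with your own remark that the $\ell\le 1$ modes are dropped: since $F_{AB}=2\na_A\na_B f-\Delta f\,\sigma_{AB}$ vanishes identically on the $\ell\le 1$ modes of $f$ (by \eqref{hessian_X_k} and $\Delta\tilde X^i=-2\tilde X^i$), and the potentials are by convention supported in the $\ell\ge 2$ modes, the correct statement is $\bar c(\pm)=c(\pm)-2f_{\ell\ge2}$, with \emph{no} $-2\alpha_i\tilde X^i$ term. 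This matters decisively for the final answer: the anomaly $\alpha_i\varepsilon^{ik}_{\;\;\;j}(P^j(+)-P^j(-))$ arises precisely because the correction term cancels only the $f_{\ell\ge2}$ part of the classical anomaly $-2\int_{S^2}f\,Y^A\na_A(m(+)-m(-))$, leaving the $f_{\ell\le1}$ part; the $\alpha_0$ piece then dies because $Y^A$ is divergence-free, and the $\alpha_i\tilde X^i$ piece produces the stated term via $Y^A\na_A\tilde X^i=\epsilon^{ik}_{\;\;\;j}\tilde X^j$ and $P^j=\int_{S^2}2m\tilde X^j$. If the potential shifted by $-2f_{\ell\ge2}-2\alpha_i\tilde X^i$ as you wrote, the correction would over-cancel and you would obtain zero anomaly, contradicting the theorem. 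So the $\ell=1$ term does not come from "the $\alpha_i\tilde X^i$ piece of the $c$-shift against $\na_A m(\pm)$" as you suggest in step (4); it is the uncancelled residue of the classical flux anomaly.

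Second, the point you flag as your "main obstacle" --- whether $\bar m(\pm)=m(\pm)$ --- is essential and you leave it unresolved. It does hold, and the reason is immediate from the transformation law \eqref{supertranslation}: every correction to $m$ under a supertranslation is proportional to $N_{BD}$ or $\partial_u N_{BD}$ evaluated at $\bar u+f$, and these vanish as $u\rw\pm\infty$ by \eqref{news_decay}; no detour through $\hat m$ and no quadratic-in-$C$ bookkeeping is needed. Once this is in place, your "cross terms" $\int_{S^2}Y^A c(\pm)\na_A(\bar m(\pm)-m(\pm))$ vanish identically rather than requiring Lemma \ref{same_mode}. With these two corrections your argument collapses to the paper's proof.
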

\begin{proof}
Note that  
\begin{equation}\label{CWY_AM}  J^k = \tilde J^k - \int_{S^2} Y^A c \na_A m. \end{equation}

The assumption \eqref{news_decay} on the decay of news tensor implies that the limit of mass aspect function is invariant of supertranslation
\begin{equation}\label{mass_diff} \lim_{\bar u \rw \pm\infty} \bar {m}(\bar u,x) =  \lim_{u \rw \pm\infty}  m(u,x) \text{ or } \bar{m}(\pm)=m(\pm).\end{equation}
Moreover, we have 
\begin{equation}\label{shear_limit}
\lim_{\bar u\rw\pm\infty} \bar{C}_{AB}(\bar u,x) = \lim_{u\rw \pm\infty} C_{AB}(u,x) -2\na_A\na_B f + \Delta f \sigma_{AB}. 
\end{equation}
If we denote the closed potential of $\lim_{\bar u \rw +\infty} \bar{C}_{AB}$ and  $\lim_{\bar u \rw +\infty} {C}_{AB}$ by $\bar{c}(+)$ and $c(+)$
respectively, we have
\begin{equation}\label{potential_diff} \bar{c}(+) = c(+) - 2f_{\ell\ge 2} \end{equation}
as functions on $S^2$. Evaluating the definition of the CWY angular momentum \eqref{CWY_AM} at $+\infty$ gives

\[  J^k (+) =\tilde J^k (+) - \int_{S^2} Y^A c (+) \na_A m(+),\] and 
\[ J_f^k (+) = \tilde J_f^k (+) - \int_{S^2} Y^A \bar c (+) \na_A \bar m(+). \]

Taking the difference and applying \eqref{mass_diff} and  \eqref{potential_diff}, we derive
\[  J_f^k (+)-J^k (+)  = \tilde J_f^k (+) - \tilde J^k (+) +2\int_{S^2} f_{\ell\ge2} Y^A \na_A m(+). \]
We derive a similar equation at $-\infty$ and thus
\begin{align*}
&\lt(J_f^k (+)- J_f^k (-)   \rt)      -\lt( J^k (+)-J^k (-)\rt) \\
=& \lt( \tilde J_f^k (+)- \tilde J_f^k (-)\rt) - \lt(\tilde J^k (+)-\tilde J^k (-)\rt) +2\int_{S^2} f_{\ell\ge2} Y^A \na_A (m(+)-m(-))\\
=& -2 \int_{S^2} f_{\ell\le1} Y^A \na_A (m(+)-m(-))\end{align*} 
by \eqref{diff_angmom_flux_supertranslation}. It follows that 
\begin{align*}
\lt( J_f^k (+)-  J_f^k (-)   \rt)      -\lt( J^k (+)- J^k (-)\rt) =  \alpha_i \varepsilon^{ik}_{\;\;\;j} (P^j(+)-P^j(-)).
\end{align*}
\end{proof}

Let $C^k (\pm)$ be the limits of the CWY center of mass in the $u$ coordinate and  $C_f^k (\pm)$ be the limits of the CWY center of mass in the $\bar u$ coordinate. 

\begin{theorem} \label{total_flux_CWY_COM}
Suppose the news tensor decays as
\[ N_{AB}( u,x) = O(|u|^{-1-\varepsilon}) \mbox{ as } u \rw \pm\infty,
\]
then the total flux of $ C^k$ is supertranslation invariant. Namely, 
\[ \lt(  C_f^k (+) - C^k_f (-) \rt) - \lt( C^k (+) - C^k(-) \rt) = \alpha_0 \lt( P^k(+) - P^k(-) \rt) + \alpha_k \lt( E(+) - E(-)\rt).
\]
\end{theorem}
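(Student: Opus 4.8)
The plan is to mirror the proof of Theorem \ref{total_flux_CWY_angular}, since the CWY center of mass $C^k$ differs from the classical one $\tilde C^k$ by exactly the nonlocal correction terms
\[
C^k = \tilde C^k + \int_{S^2} \na^A \tilde X^k \lt( -c\,\na_A m + 2\epsilon_{AB}(\na^B\cb)m \rt) + \int_{S^2} 3\tilde X^k cm - \frac14 \tilde X^k \na_A\Fb^{AB}\na^D\Fb_{DB},
\]
and these correction terms are built only out of the mass aspect $m$ and the potentials $c,\cb$ of the shear. First I would take the limits $u\to\pm\infty$; by \eqref{news_decay} the limits $m(\pm)$, $C_{AB}(\pm)$ exist, hence the potentials $c(\pm)$, and (after normalizing to be supported in $\ell\ge 2$) $\cb(\pm)$ exist as well. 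Using \eqref{mass_diff} (invariance of $m(\pm)$ under supertranslation), the supertranslation rule \eqref{shear_limit} for the shear, and its consequence $\bar c(+) = c(+) - 2f_{\ell\ge 2}$ together with the analogous statement $\bar{\cb}(+) = \cb(+)$ (the co-closed potential is supertranslation invariant, since the extra term $-2\na_A\na_B f + \Delta f\,\sigma_{AB}$ in \eqref{supertranslation shear} is a pure closed/electric-parity tensor), I would compute the change in each correction term under $f$.

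The key computation is the difference of correction terms evaluated at $+\infty$ in the barred versus unbarred coordinates. Since $\bar m(\pm)=m(\pm)$ and $\bar{\cb}(\pm) = \cb(\pm)$, the only contributions come from $\bar c(\pm) - c(\pm) = -2f_{\ell\ge2}$. Thus the terms $2\epsilon_{AB}(\na^B\cb)m$ and $-\frac14\tilde X^k\na_A\Fb^{AB}\na^D\Fb_{DB}$ (depending only on $\cb$ and $m$) are unchanged, and the change is
\[
\int_{S^2} \na^A\tilde X^k\,(2f_{\ell\ge2})\na_A m(\pm) + \int_{S^2} 3\tilde X^k\,(-2f_{\ell\ge2})\,m(\pm),
\]
so that
\[
\lt( C_f^k(+)-C_f^k(-)\rt) - \lt(C^k(+)-C^k(-)\rt) = \lt(\tilde C_f^k(+)-\tilde C_f^k(-)\rt)-\lt(\tilde C^k(+)-\tilde C^k(-)\rt) + \int_{S^2} \lt( 2\na^A\tilde X^k\na_A m_{\pm}^\Delta - 6\tilde X^k m_{\pm}^\Delta\rt) f_{\ell\ge2},
\]
where I write $m_{\pm}^\Delta = m(+)-m(-)$. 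Now I substitute the classical-center-of-mass flux difference computed just above, namely
\[
\lt(\tilde C_f^k(+)-\tilde C_f^k(-)\rt)-\lt(\tilde C^k(+)-\tilde C^k(-)\rt) = \int_{S^2}\lt(6\tilde X^k m_{\pm}^\Delta - 2\na^A\tilde X^k\na_A m_{\pm}^\Delta\rt) f,
\]
and observe that the $f_{\ell\ge2}$ parts cancel exactly against the correction-term contribution, leaving only the $\ell\le 1$ part of $f$, i.e. $f_{\ell\le1} = \alpha_0 + \alpha_i\tilde X^i$.

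It then remains to evaluate $\int_{S^2}\lt(6\tilde X^k m_{\pm}^\Delta - 2\na^A\tilde X^k\na_A m_{\pm}^\Delta\rt) f_{\ell\le1}$ explicitly. For the $\alpha_0$ piece, integrating $-2\na^A\tilde X^k\na_A m_{\pm}^\Delta$ by parts and using $\Delta\tilde X^k = -2\tilde X^k$ gives $2\int_{S^2}(\Delta\tilde X^k)m_{\pm}^\Delta = -4\int_{S^2}\tilde X^k m_{\pm}^\Delta$, which combines with $6\int_{S^2}\tilde X^k m_{\pm}^\Delta$ to give $\alpha_0\cdot 2\int_{S^2}\tilde X^k m_{\pm}^\Delta = \alpha_0(P^k(+)-P^k(-))$ by the definition $P^k = \int_{S^2}2m\tilde X^k$. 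For the $\alpha_i$ piece I would use the Hessian identity \eqref{hessian_X_k} $\na_A\na_B\tilde X^i = -\tilde X^i\sigma_{AB}$ together with the product formulae for $\tilde X^i\tilde X^k$; after integration by parts the combination collapses to a multiple of $\int_{S^2}m_{\pm}^\Delta\,\delta^{ik}$ plus $\ell=2$ terms that vanish upon contraction appropriately, yielding $\alpha_k(E(+)-E(-))$ with $E=\int_{S^2}2m$. The main obstacle is precisely this last algebraic reduction of the $\ell=1$ angular integrals: one must carefully track the $\ell=0$ versus $\ell=2$ content of products like $\tilde X^i\tilde X^k$ and $\na^A\tilde X^i\na_A\tilde X^k$ and confirm that exactly the $E$- and $P^k$-coefficients survive with no residual $\ell=2$ contamination — a computation entirely parallel to, but slightly more involved than, the angular-momentum case, where only the single identity $Y^A\na_A(\cdot)$ being mode-preserving was needed.
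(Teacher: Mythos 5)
Your proposal is correct and follows essentially the same route as the paper: isolate the nonlocal correction terms, note that only the closed potential $c$ shifts (by $-2f_{\ell\ge 2}$) while $\cb(\pm)$ and $m(\pm)$ are supertranslation invariant, cancel the $f_{\ell\ge2}$ contribution against the classical center-of-mass flux shift, and evaluate the remaining $f_{\ell\le1}$ integral. The final $\alpha_i$ reduction you flag as the ``main obstacle'' is in fact immediate: integrating by parts and using $\na_A\tilde X^i\na^A\tilde X^k=\delta^{ik}-\tilde X^i\tilde X^k$ together with $\Delta\tilde X^k=-2\tilde X^k$ gives $2\alpha_i\int_{S^2}(\delta^{ik}-3\tilde X^i\tilde X^k)\big(m(+)-m(-)\big)+6\alpha_i\int_{S^2}\tilde X^i\tilde X^k\big(m(+)-m(-)\big)=2\alpha_k\int_{S^2}\big(m(+)-m(-)\big)$, with the $\ell=2$ pieces cancelling exactly rather than requiring any further contraction argument.
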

\begin{proof} We write
\begin{equation}\label{CWY_COM} C^k=\tilde C^k- \int_{S^2} c \na^A\tilde{X}^k \na_A m+3\int_{S^2} c\tilde{X}^k m+\Xi(\underline{c}, m),\end{equation}
where $\Xi$ is an integral over $S^2$ that involves only $\underline{c}$ and $m$. Since the last three integrals have limits at $u = \pm\infty$, the mass loss formula now implies
\begin{align*}
& C^k(+) -  C^k(-) \\
=& \frac{1}{4} \int_{-\infty}^{+\infty}\int_{S^2} u|N|^2(u,x) \tilde X^k + \na^A \tilde X^k \lt[ C_{AB} \na_D N^{BD} - N_{AB} \na_D C^{BD} \rt] (u, x) \,dS^2du\\
& - \int_{S^2} c(+) \na^A\tilde{X}^k \na_A m(+) + \int_{S^2} c(-) \na^A\tilde{X}^k \na_A m(-)\\
& + 3\int_{S^2} c(+)\tilde X^k m(+) - 3\int_{S^2} c(-)\tilde X^k m(-)\\
& + \Xi(\cb(+),m(+)) - \Xi(\cb(-),m(-)).
\end{align*}
By \eqref{shear_limit}, $\cb(\pm)$ is invariant under supertranslation. We apply \eqref{mass_diff} and \eqref{potential_diff} to get
\begin{align*}
& \lt( C_f^k (+) - C^k_f (-) \rt) - \lt(  C^k (+) -  C^k(-) \rt) \\
=& \lt( \tilde C^k_f(+) - \tilde C^k_f(-)\rt) - \lt( \tilde  C^k(+) -   \tilde C^k(-) \rt) \\
& +2 \int_{S^2} f_{\ell\ge2} \na^A \tilde{X}^k \na_A \big( m(+) - m(-) \big) - 6\int_{S^2} f_{\ell\ge2} \tilde{X}^k \big( m(+) - m(-) \big)\\
=& -2 \int_{S^2} f_{\ell\le1} \na^A \tilde X^k \na_A \big( m(+) - m(-) \big) + 6 \int_{S^2} f_{\ell\le1} \tilde X^k \big( m(+) - m(-) \big).
\end{align*}
We obtain
\begin{align*}
\lt(  C_f^k (+) - C^k_f (-) \rt) - \lt( C^k (+) - C^k(-) \rt) &= 2 \int_{S^2} ( \alpha_0 \tilde X^k + \alpha_k ) ( m(+) - m(-) )\\
&= \alpha_0 \lt( P^k(+) - P^k(-) \rt) + \alpha_k \lt( E(+) - E(-)\rt).
\end{align*}
\end{proof}

\section{Spacetime with zero news}
In this section, we consider a non-radiative spacetime in the sense that the news vanishes. This includes all model spacetimes such as Minkowski and Kerr. First, we show that the CWY angular momentum and center of mass are constant.
\begin{lemma}

Suppose the news $N_{AB}(u, x)\equiv 0$ in a Bondi-Sachs coordinate system $(u, x)$, then the CWY angular momentum $J^k(u)$ and CWY center of mass $C^k(u)$ are constant, i.e. independent of the retarded
time $u$. 
\end{lemma}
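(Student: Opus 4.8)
The plan is to trace through the defining formulae for $J^k(u)$ and $C^k(u)$ and show that each $u$-dependent ingredient is either constant or has vanishing $u$-derivative when $N_{AB}\equiv 0$. The key point is that $N_{AB}=\partial_u C_{AB}=0$ forces $C_{AB}$ to be $u$-independent, hence the potentials $c$ and $\underline c$ (normalized to $\ell\ge 2$ modes) are $u$-independent, and consequently the tensor $\Fb_{AB}$ built from $\underline c$ is $u$-independent as well. The remaining genuinely $u$-dependent quantities are $m$ and $N_A$, so the strategy reduces to controlling $\partial_u m$ and $\partial_u N_A$ via the evolution equations recalled in Section 3.

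First I would invoke Theorem \ref{evolution1} (equivalently, Proposition \ref{evolution in terms of potentials}) with $N_{AB}=0$: every term on the right-hand sides of \eqref{dujk} and \eqref{duck} is at least linear in the news, so $\partial_u \tilde J^k = 0$ and $\partial_u \tilde C^k = 0$ immediately. Next I would handle the correction terms. From \eqref{m_aspect_evol}, $N_{AB}=0$ gives $\partial_u m = 0$, so $m$ is $u$-independent. Since $c,\underline c$ are $u$-independent (from $N_{AB}=0$) and $m$ is $u$-independent, the correction terms in the CWY modifications
\[
J^k - \tilde J^k = -\int_{S^2} \epsilon^{AB}\na_B\tilde X^k\, c\,\na_A m, \qquad
C^k - \tilde C^k = \int_{S^2}\na^A\tilde X^k\lt(-c\na_A m + 2\epsilon_{AB}(\na^B\cb)m\rt) + \int_{S^2} 3\tilde X^k cm - \tfrac14 \tilde X^k \na_A\Fb^{AB}\na^D\Fb_{DB}
\]
are all manifestly $u$-independent. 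Therefore $\partial_u J^k = \partial_u\tilde J^k = 0$ and $\partial_u C^k = \partial_u \tilde C^k = 0$, which is the claim.

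The only subtlety — and the step I would be most careful about — is the passage from $\partial_u C_{AB}=0$ to $\partial_u c = \partial_u \underline c = 0$. The decomposition \eqref{CAB} determines $c$ and $\underline c$ only up to $\ell = 0,1$ modes, so a priori $c(u,\cdot)$ could differ from $c(u',\cdot)$ by a function in the $\ell\le 1$ modes even though $C_{AB}$ is fixed. This is precisely why the paper stipulates ``the potentials are assumed to be supported in the $\ell\ge 2$ modes''; with that normalization the potentials are uniquely determined by $C_{AB}$, hence $u$-independent, and the same normalization is used in defining $\Fb_{AB}$ from $\underline c$. I would state this explicitly so that the reader sees the argument is not circular. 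Everything else is a direct substitution of $N_{AB}=0$ into formulae already established in Sections 2–4, so no further estimates or obstacles arise.
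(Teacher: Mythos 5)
Your proof is correct and follows essentially the same route as the paper: substitute $N_{AB}\equiv 0$ into the evolution formulae of Theorem \ref{evolution1} to kill the classical parts, observe that $m$, $C_{AB}$, and hence the ($\ell\ge 2$-normalized) potentials $c,\underline{c}$ are $u$-independent so the CWY correction terms are manifestly constant. The paper organizes the computation by grouping terms inside a single integrand rather than splitting off $\tilde J^k$ and $\tilde C^k$ explicitly, but the substance is identical, and your explicit remark about the $\ell\le 1$ ambiguity of the potentials is a point the paper passes over silently.
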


\begin{proof} The assumption implies $\pl_u m(u, x)=0, \pl_u C_{AB}(u, x)=0$ and thus 

\[ m(u, x)\equiv \mathring{m}(x), C_{AB}(u, x)\equiv \mathring{C}_{AB}(x)\] and both potentials $c$ and $\underline{c}$ are independent of $u$ as well. 

We recall the definition of CWY angular momentum 
\begin{align}
 J^{k} (u)= \int_{S^2}&  Y^A \lt( N_{A} -\frac{1}{4}C_{AB}\nabla_{D}C^{DB} - c\nabla_{A}m   \rt).
\end{align} 

Since $c$ and $m$ are both independent of $u$, our previous calculation shows  \[\begin{split}&\partial_u  \int_{S^2}  Y^A \lt( N_{A} -\frac{1}{4}C_{AB}\nabla_{D}C^{DB}\rt)\\ =&\frac{1}{4}\int_{S^2} \lt[ Y^A  (C_{AB}\nabla_{D}N^{BD} -N_{AB}\nabla_{D}C^{BD}) +  \tilde{X}^k  \epsilon^{AB} (C_{A}^{\,\,\,\,D} N_{DB}) \rt],\end{split}\] the conclusion follows. 

On the other hand, the CWY center of mass $C^k$ is given by \begin{align}
\begin{split}
& \int_{S^2} \nabla^{A}\tilde{X}^{k} \lt(N_A - \frac{1}{4} C_{AB} \na_D C^{DB} - \frac{1}{16} \na_A \lt( C^{DE}C_{DE}\rt)\rt)  - \na^A \tilde{X}^k (c+u) \na_A m \\
&+\int_{S^2} \lt( 3 \tilde X^k cm+ 2 \nabla^{A}\tilde{X}^{k} \epsilon_{AB} (\na^B \cb) m      - \frac{1}{16} \tilde X^k \na_A (\Delta+2)\cb \na^A (\Delta+2)\cb\rt)\\
\end{split}
\end{align}

Since all $m$, $c$, and $\underline{c}$ are independent of $u$, 
 \begin{align}
\begin{split}  &\pl_u C^k \\
= &\pl_u \int_{S^2} \nabla^{A}\tilde{X}^{k} \lt(N_A - \frac{1}{4} C_{AB} \na_D C^{DB} - \frac{1}{16} \na_A \lt( C^{DE}C_{DE}\rt)\rt)-\int_{S^2}\na^A \tilde{X}^k  \na_A m\end{split}\end{align}
Our previous calculation shows that the first term on the right hand side is
\[\int_{S^2}  \nabla^{A}\tilde{X}^{k} \lt( \na_A m + \frac{1}{4}C_{AB}\nabla_{D}N^{BD} -\frac{1}{4}N_{AB}\nabla_{D}C^{BD} \rt),\] and the conclusion follows. 

\end{proof}

Finally, we show that in a spacetime with vanishing news tensor, the angular momentum and center of mass themselves, not just their total flux, are invariant under supertranslation.

We pin down the exact formula for the angular momentum aspect on a spacetime with vanishing news.  In this case, we have
\[\partial_u N_A  (u, x)= \na_A m (u, x) -\frac{1}{4}\nabla^B P_{BA} (u, x)\]
where \[ P_{BA} (u, x)=(\nabla_B \nabla^E C_{EA}-\nabla_A \nabla^E C_{EB})(u, x)\]

Therefore \[\partial_u N_A  (u, x)= \na_A \mathring{m} (x) -\frac{1}{4}\nabla^B \mathring{P}_{BA} (x)\] is independent of $u$. Integrating gives
\begin{equation}\label{AM_aspect1} N_A  (u, x)=  N_A  (u_0, x) +(u-u_0)       (\na_A \mathring{m} -\frac{1}{4}\nabla^B \mathring{P}_{BA} )\end{equation} for any $u$ and fixed $u_0$. 

Suppose $(\bar{u}, x)$ is another Bondi-Sachs coordinate system that is related to $(u, x)$ by a supertranslation $u=\bar{u}+f$ for $f\in C^\infty(S^2)$. 

Recall the mass aspect $\bar{m} (\bar u, x)
$, the shear $\bar{C}_{AB}(\bar u,x)$, and the news $ \bar{N}_{AB}(\bar u,x)$  in the $(\bar u, x)$ coordinate system are related to the mass aspect ${m} (u, x)
$, the shear ${C}_{AB}(u,x)$, and the news $ {N}_{AB}(u,x)$  in the $( u, x)$ coordinate system through:

\begin{equation}\label{supertranslation}\begin{split}
\bar{m}(\bar u,x) &= m(\bar u+f,x) + \frac{1}{2} (\na^B N_{BD})(\bar u+f,x) \na^D f \\
& + \frac{1}{4} (\partial_{u}N_{BD})(\bar u+f,x) \na^B f\na^D f + \frac{1}{4} N_{BD}(\bar u+f,x) \na^B\na^D f\\
\bar{C}_{AB}(\bar u,x) &= C_{AB}(\bar u+ f(x),x) - 2 \na_A\na_B f + \Delta f \sigma_{AB} \\
\bar{N}_{AB}(\bar u,x) &= N_{AB}(\bar u+f(x),x)
\end{split}\end{equation}

In particular, $\bar{N}_{AB}(\bar u,x)\equiv 0$ and $\bar{J}^k(\bar{u})$ and $\bar{C}^k(\bar{u})$ are independent of $\bar{u}$. In addition, we have 
\begin{equation}\label{transform}\begin{split}
\bar{m}(\bar u,x) &= \mathring{\bar{m}}(x)= \mathring{m}(x) \\
\bar{C}_{AB}(\bar u,x) &= \mathring{\bar{C}}_{AB}(x)=\mathring{C}_{AB}(x) - F_{AB} \\
\mathring{\bar{c}}&=\mathring{c}-2 f_{\ell\geq 2}\\
\mathring{\bar{\underline{c}}}&=\mathring{\underline{c}}
\end{split}\end{equation} where $F_{AB}=2 \na_A\na_B f -\Delta f \sigma_{AB} $.

Finally the angular momentum aspect transforms by 
\[\begin{split}\bar{N}_A(\bar{u}, x)&=N_{A}(\bar{u}+f, x)+3m(\bar{u}+f, x)\nabla_A f-\frac{3}{4}P_{BA} (\bar{u}+f, x)\na^B f\\
&=N_{A}(\bar{u}+f, x)+3\mathring{m}\nabla_A f-\frac{3}{4}\mathring{P}_{BA} \na^B f.\\
\end{split}\]
See \cite[(C.123)]{CJK}. Note that the convention of angular momentum aspect there is $-3N_A$.

Combining with \eqref{AM_aspect1} and setting $u=\bar{u}+f$, we obtain 
\begin{equation}\label{AM_aspect2} \bar{N}_A  (\bar{u}, x)=  N_A  (u_0, x) +(\bar{u}-u_0+f)       (\na_A \mathring{m} -\frac{1}{4}\nabla^B \mathring{P}_{BA} )+3\mathring{m}\nabla_A f-\frac{3}{4}\mathring{P}_{BA} \na^B f\end{equation} for any $\bar{u}$ and fixed $u_0$. 

Now fixing $\bar{u}=\bar{u}_0$, we consider the angular momentums
\[\begin{split}&\bar{J}=\bar{J}(\bar{u}_0)= \int_{S^2} Y^A \lt( \bar{N}_{A} -\frac{1}{4}\bar{C}_{AB}\nabla_{D}\bar{C}^{DB} - 
\mathring{\bar{c}}\nabla_{A}\mathring{\bar{m}}   \rt) (\bar{u}_0, x)\\
&{J}=J (u_0)= \int_{S^2} Y^A \lt( {N}_{A} -\frac{1}{4}\mathring{C}_{AB}\nabla_{D}\mathring{C}^{DB} - \mathring{{c}}\nabla_{A}\mathring{{m}}   \rt) (u_0, x)
\end{split}\]
\[\begin{split}&\bar{C}=\bar{C}(\bar{u}_0)= \int_{S^2} \na^A \tilde{X}^k \lt( \bar{N}_{A} -\frac{1}{4}\bar{C}_{AB}\nabla_{D}\bar{C}^{DB} -\frac{1}{16}\nabla_A (\bar{C}^{DE} \bar{C}_{DE}) \rt) (\bar{u}_0, x)\\
&+\int_{S^2} \lt(3 \tilde X^k \mathring{\bar{c}}\,\mathring{\bar{m}} - \na^A \tilde{X}^k (\mathring{\bar{c}}+\bar{u}_0) \na_A \mathring{\bar{m}} \rt)\\
&+ \int_{S^2} \lt(2 \nabla^{A}\tilde{X}^{k} \epsilon_{AB} (\na^B \underline{\mathring{\bar{c}}}) \mathring{\bar{m}}      - \frac{1}{16} \tilde X^k \na_A (\Delta+2) \underline{\mathring{\bar{c}}}  \na^A (\Delta+2) \underline{\mathring{\bar{c}}}      \rt)\\
&{C}=C({u}_0)= \int_{S^2} \na^A \tilde{X}^k \lt( {N}_{A} -\frac{1}{4}\mathring{C}_{AB}\nabla_{D}\mathring{C}^{DB}  -\frac{1}{16}\nabla_A (\mathring{C}^{DE} \mathring{C}_{DE})       \rt) ({u}_0, x)\\
&+\int_{S^2} \lt(3 \tilde X^k \mathring{c} \,\mathring{m} - \na^A \tilde{X}^k (\mathring{c}+u_0) \na_A \mathring{m} \rt)\\
&+ \int_{S^2} \lt(2 \nabla^{A}\tilde{X}^{k} \epsilon_{AB} (\na^B  \underline{\mathring{{c}}}      ) m      - \frac{1}{16} \tilde X^k \na_A (\Delta+2)  \underline{\mathring{{c}}}\na^A (\Delta+2) \underline{\mathring{{c}}}         \rt)\\
\end{split}\]
We prove the following theorem:
\begin{theorem} \label{invariant_CWY_vanish_news}
On a spacetime with vanishing news, the CWY angular momentum and center of mass satisfy
\begin{align}
 \bar J   - J= &-2 \int_{S^2} Y^A f_{\ell\leq 1}\na_A \mathring{m} \\
\bar C -C =&  \int_{S^2}  \lt( 6 f_{\ell \leq 1} \tilde{X}^k \mathring{m}-2f_{\ell\leq 1} \na^A \tilde{X}^k \na_A \mathring{m}  \rt)
\end{align}
\end{theorem}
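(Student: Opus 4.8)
The plan is a direct substitution of the supertranslation laws into the definitions, followed by systematic simplification. Since $N_{AB}\equiv 0$ makes all of $\mathring m,\mathring C_{AB},\mathring c,\mathring{\underline c}$ static and (by the preceding Lemma) makes $J^k,C^k$ independent of retarded time, it suffices to compare the values at the fixed slices $\bar u_0$ and $u_0$. I would first separate off the CWY corrections: writing $J^k$ and $C^k$ as in \eqref{CWY_AM}, \eqref{CWY_COM}, and using $\mathring{\bar m}=\mathring m$, $\mathring{\bar c}=\mathring c-2f_{\ell\ge 2}$, $\mathring{\bar{\underline c}}=\mathring{\underline c}$ from \eqref{transform}, the difference of the correction terms is immediate: the entire $\Xi(\mathring{\underline c},\mathring m)$ term and all $\underline c$-dependent pieces cancel, leaving only $+2\int_{S^2}Y^A f_{\ell\ge 2}\na_A\mathring m$ for the angular momentum and $\int_{S^2}(2f_{\ell\ge 2}\na^A\tilde X^k\na_A\mathring m-6f_{\ell\ge 2}\tilde X^k\mathring m)$ for the center of mass. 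It then suffices to show that the classical quantities satisfy $\bar{\tilde J}-\tilde J=-2\int_{S^2}Y^A f\na_A\mathring m$ and $\bar{\tilde C}-\tilde C=\int_{S^2}(6f\tilde X^k\mathring m-2f\na^A\tilde X^k\na_A\mathring m)$; adding back the corrections then restricts $f$ to its $\ell\le 1$ part, which is the claim.

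For the classical part I would substitute \eqref{AM_aspect2} for $\bar N_A$ and $\bar C_{AB}=\mathring C_{AB}-F_{AB}$, where $F_{AB}=2\na_A\na_B f-\Delta f\sigma_{AB}$ has closed potential $2f$, zero co-closed potential and $\na_D F^{DB}=\na^B(\Delta+2)f$, into \eqref{angmom} and \eqref{com}, and then sort the integrand by its dependence on $\mathring C$. The terms proportional to $\bar u_0-u_0$ drop out using \eqref{hessian_X_k} (and $(\Delta+2)\tilde X^k=0$) together with the identity $\na^B\mathring P_{BA}=\frac12\epsilon_{AB}\na^B\Delta(\Delta+2)\mathring{\underline c}$ from the proof of Proposition \ref{AM_aspect_evol}. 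The crucial step is that all remaining $\mathring C$-dependent terms cancel identically: the contributions $-\frac14 f Y^A\na^B\mathring P_{BA}$ and $-\frac34 Y^A\mathring P_{BA}\na^B f$ coming from $\bar N_A$, the cross terms $\frac14 Y^A(F_{AB}\na_D\mathring C^{DB}+\mathring C_{AB}\na_D F^{DB})$ from the quadratic shear term, and, for the center of mass, $\frac18\na^A\tilde X^k\na_A(\mathring C_{DE}F^{DE})$ from the $-\frac1{16}\na_A|C|^2$ term, together vanish. I would verify this by rewriting everything in potentials via \eqref{YCN}, \eqref{polarize}, \eqref{epsilon_divergence} and the commutation formulae \eqref{commutation}: the $\mathring c$-part organizes into $-\frac12\int_{S^2}Y^A\na_A\big((\Delta+2)f\,(\Delta+2)\mathring c\big)=0$, and the $\mathring{\underline c}$-part cancels analogously after trading $\epsilon$-divergences via \eqref{epsilon_divergence}.

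What survives is $\int_{S^2}Y^A(f\na_A\mathring m+3\mathring m\na_A f)$ together with the pure-$f$ quadratic leftover $-\frac14 Y^A F_{AB}\na_D F^{DB}-\frac1{16}\na^A\tilde X^k\na_A|F|^2$. In the angular-momentum case ($Y^A=\epsilon^{AB}\na_B\tilde X^k$) the quadratic leftover vanishes by the identity $\int_{S^2}Y^A F_A^{\,B}\na^D F_{DB}=0$; in the center-of-mass case ($Y^A=\na^A\tilde X^k$) it does not vanish termwise, but \eqref{YCN} and the second identity of Lemma \ref{news_square_integral} show its two pieces equal $\pm\frac14\int_{S^2}\tilde X^k((\Delta+2)f)^2$ and cancel. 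A final integration by parts using $\na_A\na_B\tilde X^k=-\tilde X^k\sigma_{AB}$ converts $\int_{S^2}Y^A(f\na_A\mathring m+3\mathring m\na_A f)$ into $-2\int_{S^2}Y^A f\na_A\mathring m$ and into $\int_{S^2}(6f\tilde X^k\mathring m-2f\na^A\tilde X^k\na_A\mathring m)$ respectively, completing the proof.

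I expect the main obstacle to be precisely the identical cancellation of the $\mathring C$-dependent terms in the second paragraph: it is where the exact coefficients of the angular-momentum-aspect transformation \eqref{AM_aspect2} and the precise potential shift $\mathring{\bar c}=\mathring c-2f_{\ell\ge 2}$ must conspire, and it demands careful bookkeeping with the symmetric-traceless-tensor identities on $S^2$ rather than any single decisive trick.
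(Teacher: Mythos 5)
Your proposal is correct and follows essentially the same route as the paper: evaluate at fixed slices, substitute the transformation laws \eqref{transform} and \eqref{AM_aspect2}, split off the CWY correction terms (producing the $f_{\ell\ge 2}$ contributions that convert $f$ into $f_{\ell\le 1}$), and reduce everything to the cancellation of the $\mathring{C}$-dependent cross terms together with the pure-$f$ quadratic leftover. The ``crucial cancellation'' you identify as the main obstacle is precisely the content of the paper's appendix Theorems \ref{thm_int_1} and \ref{thm_int_2}, which the paper establishes by direct tensorial integration by parts rather than by passing to potentials; both verifications work, and the rest of your bookkeeping (including the vanishing of the pure-$f$ terms via $\pm\frac14\int_{S^2}\tilde X^k((\Delta+2)f)^2$ and the final integration by parts on the $\mathring m$ terms) matches the paper's.
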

\begin{proof}

Taking the difference of $\bar J $ and  $J$ and applying \eqref{transform}, we obtain 

\[\begin{split}\bar{J}-J=&\int_{S^2} Y^A \lt[\bar{N}_A  (\bar{u}_0, x)-  N_A  (u_0, x)\rt]\\
&+\frac{1}{4}\int_{S^2}  Y^A \lt[  \mathring{C}_{AB}\nabla_D F^{BD}+F_{AB} \nabla_D \mathring{C}^{BD}- F_{AB} \nabla_D {F}^{BD}    \rt]\\
&+2 \int_{S^2} Y_A f_{\ell\geq 2}\na_A \mathring{m}\end{split}\]

We observe that $\int_{S^2}  Y^A (F_{AB} \nabla_D {F}^{BD})=0$ and  compute 

\[\begin{split}&\int_{S^2} Y^A \lt[\bar{N}_A  (\bar{u}_0, x)-  N_A  (u_0, x)\rt]\\
=& \int_{S^2} Y^A\lt[ (\bar{u}-u_0+f)       (\na_A \mathring{m} -\frac{1}{4}\nabla^B \mathring{P}_{BA} )+3\mathring{m}\nabla_A f-\frac{3}{4}\mathring{P}_{BA} \na^B f \rt]\\ 
=& \int_{S^2} Y^A\lt[ f       \na_A \mathring{m} -\frac{1}{4}f \nabla^B \mathring{P}_{BA} +3\mathring{m}\nabla_A f-\frac{3}{4}\mathring{P}_{BA} \na^B f \rt]\\
=& \int_{S^2} Y^A\lt[-2 f \na_A \mathring{m} -\frac{1}{4}f \nabla^B \mathring{P}_{BA} -\frac{3}{4}\mathring{P}_{BA} \na^B f \rt],\end{split}\]
where we use $\int_{S^2} Y^A\nabla^B \mathring{P}_{BA}=0$ and $\int_{S^2} Y^A \na_A \mathring{m}=0$. Therefore,

\[
\begin{split}
\bar{J}-J=&\frac{1}{4}\int_{S^2}  Y^A \lt[  \mathring{C}_{AB}\nabla_D F^{BD}+F_{AB} \nabla_D \mathring{C}^{BD}  -f \nabla^B \mathring{P}_{BA} -3\mathring{P}_{BA} \na^B f     \rt]\\
&-2 \int_{S^2} Y_A f_{\ell\leq 1}\na_A \mathring{m}.
\end{split}
\]
By Theorem \ref{thm_int_1}, the first integral vanishes and the result follows. 

Taking the difference of $\bar C$ and $C$ and applying \eqref{transform}, we obtain 
\[\begin{split}\bar{C}-C=&\int_{S^2} \na^A \tilde{X}^k \lt[\bar{N}_A  (\bar{u}_0, x)-  N_A  (u_0, x)\rt]\\
&+\frac{1}{4}\int_{S^2} \na^A \tilde{X}^k \Big [  \mathring{C}_{AB}\nabla_D F^{BD}+F_{AB} \nabla_D \mathring{C}^{BD}-F_{AB}\nabla_D F^{DB}\\
&  \qquad \qquad \qquad \qquad+\frac{1}{2} \nabla_A (C_{BD} F^{BD})-\frac{1}{4} \na_A( F_{BD}F^{BD} )    \Big]\\
&+ \int_{S^2}  \lt( -6 f_{\ell \geq 2} \tilde{X}^k \mathring{m}+( 2f_{\ell\geq 2}-\bar{u}_0+u_0) \na^A \tilde{X}^k \na_A \mathring{m}  \rt)\end{split}\]

We observe that \[ \int_{S^2} \na^A \tilde{X}^k \lt[  -F_{AB}\nabla_D F^{DB}-\frac{1}{4} \na_A( F_{BD}F^{BD} ) \rt]=0\] 
and compute 
\[\begin{split}&\int_{S^2} \na^A \tilde{X}^k \lt[\bar{N}_A  (\bar{u}_0, x)-  N_A  (u_0, x)\rt]\\
=& \int_{S^2} \na^A\tilde{X}^k\lt[ (\bar{u}_0-u_0+f)       (\na_A \mathring{m} -\frac{1}{4}\nabla^B \mathring{P}_{BA} )+3\mathring{m}\nabla_A f-\frac{3}{4}\mathring{P}_{BA} \na^B f \rt]\\ 
=& \int_{S^2} \na^A \tilde{X}^k \lt[ (\bar{u}_0-u_0+f)      \na_A \mathring{m} -\frac{1}{4}f \nabla^B \mathring{P}_{BA} +3\mathring{m}\nabla_A f-\frac{3}{4}\mathring{P}_{BA} \na^B f \rt]\\
=& \int_{S^2} \na^A \tilde{X}^k\lt[(\bar{u}_0-u_0-2 f) \na_A \mathring{m} -\frac{1}{4}f \nabla^B \mathring{P}_{BA} -\frac{3}{4}\mathring{P}_{BA} \na^B f \rt]+\int_{S^2} (6 f\tilde{X}^k \mathring{m}),\end{split}\] 
where we use $\int_{S^2} \na^A\tilde{X}^k \nabla^B \mathring{P}_{BA}=0$.

Putting everything together, we arrive at 
\[\begin{split}
   &\bar{C}-C\\
=&\frac{1}{4}\int_{S^2} \na^A \tilde{X}^k \lt[  \mathring{C}_{AB}\nabla_D F^{BD}+F_{AB} \nabla_D \mathring{C}^{BD}+\frac{1}{2} \nabla_A (\mathring{C}_{BD} F^{BD})-f \nabla^B \mathring{P}_{BA} -3\mathring{P}_{BA} \na^B f   \rt]\\
&+ \int_{S^2}  \lt( 6 f_{\ell \leq 1} \tilde{X}^k \mathring{m}-2f_{\ell\leq 1} \na^A \tilde{X}^k \na_A \mathring{m}  \rt)\end{split}\]

By Theorem \ref{thm_int_2}, the first integral vanishes and the result follows. 
\end{proof}

\section{Conservation law of angular momentum and a duality paradigm for null infinity}
\subsection{Conservation law of angular momentum}\label{conservation}

In this subsection, we derive a conservation law of angular momentum at $\mathscr{I}^+$  \`a la Christodoulou \cite{Christo1991}.

Suppose $I=(-\infty, +\infty)$ and $\mathscr{I}^+$ is complete extending from spatial infinity ($u=-\infty$) to timelike infinity ($u=+\infty$).   Integrating the formula in Proposition \ref{AM_aspect_evol} from $-\infty$ to $+\infty$ and projecting onto the $\ell=1$ modes, we obtain

\begin{equation}\label{AM_conservation}\epsilon^{AE} \nabla_E N_A (+\infty)_{\ell=1}-\epsilon^{AE} \nabla_E N_A(-\infty)_{\ell=1}= G_{\ell=1},\end{equation}
where \[G=\int_{-\infty}^{+\infty} \frac{1}{8} \nabla^A \nabla_A (\epsilon^{PQ} C_P^{\,\,\, E} N_{EQ})+\frac{1}{2} \epsilon^{AE}\nabla_E (C_{AB}\nabla_D N^{DB})\]

Equation \eqref{AM_conservation} should be considered as a conservation law for angular momentum that complements the conservation law for linear momentum of
Christodoulou \cite[Equation (13)]{Christo1991}, which in our notation is 

\[ \hat m(+\infty)_{\ell=0, 1}- \hat m(-\infty)_{\ell=0, 1}= - F_{\ell=0, 1},\] where 
\begin{align}\label{F}
F=\frac{1}{8}\int_{-\infty}^\infty N_{AB}N^{AB}
\end{align} and follows from \eqref{evol_modified_mass}.

The above discussion can be carried over under the framework of stability of Minkowski spacetime, provided that we take Rizzi's definition of angular momentum \cite{Rizzi, Rizzi_thesis}. Recall from \cite{Christo1991, ChristoMG9} that two symmetric traceless 2-tensors $\Sigma$ and $\Xi$ are defined by 
\[ \lim_{C_u^+, r \rw \infty} r^2 \hat\chi = \Sigma, \quad \lim_{C_u^+, r \rw \infty} r \hat\chib = \Xi \]
with
\begin{align}\label{shear news}
\frac{\pl\Sigma}{\pl u} = -\frac{1}{2}\Xi.
\end{align}
See Definition \ref{def curv components} for the curvature components and their limits at null infinity.

Rizzi's definition of angular momentum \cite[(3)]{Rizzi} is given by (omitting the constant $\frac{1}{8\pi}$)
\begin{align}\label{Rizzi_definition}
L(\Omega_{(i)}) = \int_{S^2} \Omega^A_{(i)} \lt( I_A - \Sigma_{AB} \na_C \Sigma^{CB} \rt), \quad i=1,2,3
\end{align}
where he {\it assumes} that the curvature component $\beta$ satisfies $\lim_{r\rw\infty} r^3 \beta_A = -I_A$. Here $\Omega^A_{(i)}$ corresponds to $\epsilon^{AB} \na_B \tilde X^i$. In the appendix, we show that $I_A$ and $\Sigma_{AB}$ correspond to $N_A$ and $-\frac{1}{2}C_{AB}$ in Bondi-Sachs coordinate system. Hence Rizzi's definition coincides with \eqref{angmom}.

Using Bianchi identities, Rizzi derived the evolution formula \cite[(4)]{Rizzi}
\begin{align}
\frac{\pl L}{\pl u} &= \int_{S^2} \Omega^A \lt[ \Xi_{AB} \na_C \Sigma^{CB} + \frac{1}{2} \lt( \Sigma^C_B \na^B \Xi_{CA}  -\Sigma_{AB} \na_C \Xi^{CB} \rt) \rt]  \notag\\
&= \frac{1}{2} \int_{S^2} \Omega^A \lt( \Xi_{AB} \na_C \Sigma^{CB} - \Sigma_{AB} \na_C \Xi^{CB} \rt) + \na^A \Omega^B \Sigma_B^C \Xi_{CA}, \label{Rizzi_evolution}
\end{align}
where the second line is obtained by integrating by parts the term $\Omega^A \Sigma^C_B \na^B \Xi_{CA}$. 
 
\begin{remark}
The definition we take has the opposite sign to \cite[(3),(4)]{Rizzi}. The discrepancy comes from the fact that Kerr spacetime has angular momentum $-ma$ under our definition. 
\end{remark}

According to the main theorem of \cite{CK},
\begin{align}\label{Bbar decay}
\underline{B} = O(|u|^{-\frac{3}{2}})
\end{align}
as $|u| \rw \infty$, where $\underline{B} = \lim_{r\rw\infty} r^2 \underline{\beta}$. (see also \cite{Christo1991}, the paragraph after equation (8) where $\underline{B}$ is denoted by $B$ there)

Estimate \eqref{Bbar decay} and equation (2) of \cite{Christo1991}
\begin{align}\label{divXi=Bbar}
\mbox{div} \Xi = \underline{B}
\end{align}
imply that
\begin{align}\label{Xi decay}
\Xi = O(|u|^{-\frac{3}{2}})
\end{align}
as $|u| \rw \infty$ and 
\begin{align}\label{Sigma limit}
\Sigma \rw \Sigma^{\pm}
\end{align}
as $u \rw \pm \infty$.
 
By \eqref{Xi decay} and \eqref{Sigma limit}, $\int_{-\infty}^\infty \frac{\pl L}{\pl u} du$ is finite and furnishes the difference of the angular momenta at timelike infinity $(u \rw \infty)$ and spatial infinity $(u \rw -\infty)$.

We can write this in the spirit of \cite{Christo1991}. In general, the peeling fails and $\beta$ decays as $\beta = o(r^{-\frac{7}{2}})$. Christodoulou \cite{ChristoMG9} observed that Bianchi equation nevertheless implies that \begin{align*}
R = \lim_{C^+_u, r\rw\infty} r^4 \underline{D} \beta
\end{align*} exists. Moreover, one has
\begin{align}\label{Dbar beta}
R = \na P + * \na Q + 2 \Sigma \cdot \underline{B},
\end{align} 
where $(P,Q) = \lim_{r \rw\infty} (r^3 \rho, r^3 \sigma)$ and $\na, *, \cdot$ are taken with respect to standard metric $\sigma$ on $S^2$.

In order to exhibit a physically reasonable initial data set that has a complete Cauchy development without peeling, Christodoulou made the crucial assumption
\begin{align}\label{uR past limit}
\lim_{u \rw -\infty} uR = R^- \neq 0,
\end{align}
which we adopt here.

From \eqref{uR past limit} he derived \cite[(5)]{ChristoMG9} \begin{align*}
\beta = B_* r^{-4} \log r + B r^{-4} + o(r^{-4})
\end{align*}
uniformly in $u$ with 1-forms $B_*$ and $B$ on $S^2$ satisfying \cite[(6)]{ChristoMG9} \begin{align}
\frac{\pl B_*}{\pl u} &=0\, \\
\frac{\pl B}{\pl u} &= \frac{1}{2}R.
\end{align} Moreover, using Bianchi equation, he derived that
\begin{align*}
\lim_{C^+_u, r \rw \infty} r^4 \alpha = A_* \neq 0
\end{align*}
exists. $A_*$ is a symmetric traceless 2-tensor that is independent of $u$ and satisfies
\begin{align}\label{divA=-B}
\mbox{div} A_* = - B_*.
\end{align}

\begin{definition}
For a function $f$ on $S^2$, we denote the projection of $f$ on the sum of zeroth and first eigenspaces of $\Delta$ by $f_{[1]}$. Namely, $f_{[1]} = f_{\ell=0} + f_{\ell=1}$. For a 1-form $\omega_A = \na_A f + \epsilon_{AB} \na^B g$, we denote $\omega_{A[1]} = \na_A f_{\ell=1} + \epsilon_{AB}\na^B g_{\ell=1}$.
\end{definition}

Since the spherical tangent vectors $\pl_A$ have length $O(r)$, we have the correspondence
\begin{align}
B_A = - I_A.
\end{align} 
By \eqref{divA=-B}, $B_{*[1]}=0$ and we integrate\eqref{Dbar beta} to get
\begin{align*}
\lt( I_A(u_2,x) - I_A(u_1,x) \rt)_{[1]} = -\frac{1}{2} \int_{u_1}^{u_2} \na_A P_{\ell=1} + \epsilon_{AB} \na^B Q_{\ell=1} + ( 2 \Sigma_{AB} \underline{B}^B )_{[1]} du.
\end{align*}

By \eqref{Bbar decay} and \eqref{Sigma limit}, the last term is integrable on $(-\infty,\infty)$. For the first two terms, we observe that equations (10, 11) of \cite{ChristoMG9}
\begin{align}
\frac{\pl P}{\pl u} &= -\frac{1}{2} \mbox{div} \underline{B} + \frac{1}{2} \Sigma \cdot \frac{\pl\Xi}{\pl u} \\
\frac{\pl Q}{\pl u} &= -\frac{1}{2} \mbox{curl} \underline{B} + \frac{1}{2} \Sigma \wedge \frac{\pl\Xi}{\pl u}
\end{align}
infer that $P_{\ell=1}(u,x) = a_i \tilde X^i + O(|u|^{-\frac{3}{2}})$ and $Q_{\ell=1}(u,x) = b_i \tilde X^i + O(|u|^{-\frac{3}{2}})$ for some constants $a_i,b_i$ independent of $u$. Thanks to the main theorem of \cite{CK}, $P - P_{\ell=0}, Q - Q_{\ell=0} = O(|u|^{-\frac{1}{2}})$, we have $a_i=0, b_i=0$. Thus $P_{\ell=1}, Q_{\ell=1}$ are also integrable on $(-\infty,\infty)$.

We conclude 
$\lim_{u\rw\infty} I_{A[1]}(u,x) - \lim_{u\rw-\infty} I_{A[1]}(u,x)$ exists and is given by
\begin{align*}
-\frac{1}{2} \int_{-\infty}^\infty
\lt(  \na_A P_{\ell=1} + \epsilon_{AB} \na^B Q_{\ell=1} +  (2\Sigma_{AB} \underline{B}^B)_{[1]} \rt) du'. \end{align*}

By \eqref{shear news} and Rizzi's definition \eqref{Rizzi_definition}, we can interpret the following formula as a conservation law of angular momentum
\begin{align}
\begin{split}
&\lim_{u\rw\infty} \lt( I_A - \Sigma_{AB}\na_C \Sigma^{CB} \rt)_{[1]}- \lim_{u\rw-\infty} \lt( I_A - \Sigma_{AB}\na_C \Sigma^{CB} \rt)_{[1]} \\
&= \frac{1}{2} \int_{-\infty}^\infty  -\na_A P_{\ell=1}- \epsilon_{AB} \na^B Q_{\ell=1} +\lt(  \Xi_{AB} \na_C \Sigma^{CB} - \Sigma_{AB} \na_C \Xi^{CB} \rt)_{[1]} du.
\end{split}\end{align} 
From Proposition \ref{tensor in BS variable} and Proposition \ref{curv in BS variable}, it follows that the co-closed part of the above conservation law is equivalent to the total flux of the classical angular momentum in a Bondi-Sachs coordinate system.

\subsection{A duality paradigm for null infinity}
In this subsection, we describe a duality paradigm for null infinity which creates a pair of dual spacetimes with the same classical conserved quantities.
\begin{corollary}[Corollary \ref{duality}]
Given a set of null infinity data $(m, N_A, C_{AB}, N_{AB})$ defined on $[u_1,u_2] \times S^2$,  there exists a dual set of null infinity data  $({m}^{*}, {N}^{*}_A, {C}^{*}_{AB}, {N}^{*}_{AB})$ that has the same (classical) energy, linear momentum, angular momentum, and center of mass.
\end{corollary}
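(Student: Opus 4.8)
The plan is to build the dual data via the potential substitution advertised just above, namely $(c,\cb,n,\nb)\mapsto(-\cb,c,-\nb,n)$. Concretely, I would declare $C^{*}_{AB}$ to be the symmetric traceless $2$-tensor whose closed and co-closed potentials in \eqref{CAB} are $c^{*}=-\cb$ and $\cb^{*}=c$, and $N^{*}_{AB}$ the one whose potentials in \eqref{NAB} are $n^{*}=-\nb$ and $\nb^{*}=n$. By \eqref{closed_to_coclosed} and \eqref{coclosed_to_closed} these are precisely $C^{*}_{AB}=\varepsilon_2(C_{AB})=\epsilon_A{}^{D}C_{DB}$ and $N^{*}_{AB}=\varepsilon_2(N_{AB})$, and since $\varepsilon_2$ commutes with $\partial_u$ the relation $N^{*}_{AB}=\partial_u C^{*}_{AB}$ is automatic. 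I would then set $m^{*}:=m$ and $N^{*}_A:=N_A$.

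Next I would check that $(E,P^{k},\tilde J^{k},\tilde C^{k})$ is unchanged at every $u$. The equalities $E^{*}=E$ and $P^{k*}=P^{k}$ are immediate from $m^{*}=m$. In $\tilde J^{k}$ and $\tilde C^{k}$ the $N_A$-term and the $-u\nabla_A m$-term are untouched, so it remains to see that the quadratic-in-$C$ contributions are fixed by $(c,\cb)\mapsto(-\cb,c)$. For $\int_{S^2}Y^{A}C_A{}^{D}\nabla^{E}C_{DE}$ I would use \eqref{YCN} with $N$ replaced by $C$: for $Y^{A}=\nabla^{A}\tilde X^{k}$ it reduces, after integrating by parts and using $\Delta\tilde X^{k}=-2\tilde X^{k}$, to a multiple of $\int_{S^2}\tilde X^{k}\big[((\Delta+2)c)^{2}+((\Delta+2)\cb)^{2}\big]$, and for $Y^{A}=\epsilon^{AB}\nabla_B\tilde X^{k}$ it reduces to a multiple of $\int_{S^2}\tilde X^{k}\,[(\Delta+2)c,(\Delta+2)\cb]_2$; the first is even under the swap because it is symmetric in $c\leftrightarrow\cb$, the second because the bracket $[\cdot,\cdot]_2$ of Definition \ref{bracket} is bilinear and antisymmetric. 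Finally $\int_{S^2}\nabla^{A}\tilde X^{k}\nabla_A|C|^{2}$ is a multiple of $\int_{S^2}\tilde X^{k}|C|^{2}$, and $\varepsilon_2$ preserves the pointwise norm, $|\varepsilon_2 C|^{2}=|C|^{2}$. Hence $\tilde J^{k*}=\tilde J^{k}$ and $\tilde C^{k*}=\tilde C^{k}$ for all $u\in[u_1,u_2]$, which already proves the corollary under the reading that ``null infinity data'' means a quadruple with $N_{AB}=\partial_u C_{AB}$.

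If one further insists that the dual quadruple obey the vacuum Einstein evolution equations \eqref{m_aspect_evol} and \eqref{duNA} — so that it is literally the $\mathscr{I}^{+}$-data of a vacuum spacetime — I would instead fix $m^{*}(u_1,\cdot)=m(u_1,\cdot)$, $N^{*}_A(u_1,\cdot)=N_A(u_1,\cdot)$ and define $m^{*},N^{*}_A$ on $[u_1,u_2]$ by integrating \eqref{m_aspect_evol} and \eqref{duNA} with the starred fields substituted; this is self-consistent because $|N^{*}|^{2}=|N|^{2}$ pointwise. By the previous paragraph the four quantities agree at $u=u_1$, and they agree for all $u$ once one knows the fluxes agree. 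But by Proposition \ref{energy_momenum_evol_potential} and Theorem \ref{evolution2_bracket} each of $\partial_u E,\partial_u P^{k},\partial_u\tilde J^{k},\partial_u\tilde C^{k}$ is a sum of the brackets $[\cdot,\cdot]_1,[\cdot,\cdot]_2$ of the potentials (and their Laplace iterates) integrated against $1$ or $\tilde X^{k}$, plus the quadratic-news pieces $((\Delta+2)n)^{2}+((\Delta+2)\nb)^{2}$, $n\,\Delta(\Delta+2)n+\nb\,\Delta(\Delta+2)\nb$, and $\epsilon^{AB}\nabla_A n\,\nabla_B(\Delta+2)\nb$; bilinearity and antisymmetry of the brackets and the symmetry of the first two quadratic pieces make these manifestly even under $(c,\cb,n,\nb)\mapsto(-\cb,c,-\nb,n)$, while the $\epsilon$-cross term becomes $\epsilon^{AB}\nabla_A\nb\,\nabla_B(\Delta+2)n$, which equals the original after pairing with $\tilde X^{k}$ and applying \eqref{ibp1}. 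Integrating in $u$ from $u_1$ then gives the desired equalities on all of $[u_1,u_2]$.

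The computations here are all routine; I expect the only real work to be the bookkeeping in the second and third paragraphs — assembling the potential forms of every quadratic shear/news integral from \eqref{YCN} and Lemma \ref{news_square_integral} and confirming term by term that each is even under the duality substitution — with the one genuine subtlety being that the $\epsilon$-cross term is even only after integration against $\tilde X^{k}$, via \eqref{ibp1}, rather than pointwise.
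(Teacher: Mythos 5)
Your construction is essentially the paper's own proof: it likewise sets $C^{*}=\varepsilon_2(C)$, $N^{*}=\varepsilon_2(N)$, defines $m^{*}$ and $N^{*}_A$ by integrating the evolution equations from the same data at $u_1$, and then checks that both the initial values and the potential-form flux formulae are unchanged under $(c,\cb,n,\nb)\mapsto(-\cb,c,-\nb,n)$. The only cosmetic differences are that the paper verifies the quadratic shear terms at $u_1$ via the pointwise identities ${C^{*}_A}^{\;D}\na^B C^{*}_{BD}=C_A^{\;\;D}\na^B C_{BD}$ and $|C^{*}|^2=|C|^2$ rather than through \eqref{YCN} (your reduction for $Y^A=\na^A\tilde X^k$ in fact omits an $\epsilon$-cross term, which is harmless since that term is itself invariant under the swap), and your first two paragraphs treat an auxiliary ``non-vacuum'' variant that the paper does not need.
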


\begin{proof} 
Define $ C^{*}_{AB} = \varepsilon_2(C_{AB})$ on $[u_1,u_2] \times S^2$. Then $ N^{*}_{AB} = \pl_u  C^{*}_{AB} = \varepsilon_2(N_{AB})$. Define $ m^{*}(u,x)$ by the differential equation
\begin{align*}
\begin{cases}
 m^{*}(u_1,x) = m(u_1,x)\\
\pl_u m^{*} = \frac{1}{4} \na^A \na^B  N^{*}_{AB} - N^{*}_{AB}  {N^{*}}^{AB}
\end{cases} 
\end{align*}
and then define $ N^{*}_A$ by the differential equation
\begin{align*}
\begin{cases}
 N^{*}_A(u_1,x) = N_A(u_1,x)\\
\pl_u  N^{*}_A = \na_A  m^{*} -\frac{1}{4}\nabla^D(\nabla_D \nabla^E  C^{*}_{EA}-\nabla_A \nabla^E  C^{*}_{ED}) \\
\qquad\qquad +\frac{1}{4}\nabla_A( C^{*}_{BE}  {N^{*}}^{BE})-\frac{1}{4}\nabla_B ( {C^{*}}^{BD} N^{*}_{DA})+\frac{1}{2}  C^{*}_{AB}\nabla_D  {N^{*}}^{DB}.
\end{cases}
\end{align*}
For this subsection alone, we denote the classical conserved quantities of the infinity data $(m, N_A, C_{AB}, N_{AB})$ by $E,P^k,J^k,C^k$ and 
denote the classical conserved quantities of the data $(m^{*}, N^{*}_A, C^{*}_{AB}, N^{*}_{AB})$ by $E^{*}, {P^{*}}^k, {J^{*}}^k, {C^{*}}^k$, we have
\[ E^{*}(u_1) = E(u_1),  {P^{*}}^k(u_1) = P^k(u_1) \]
and since ${C^{*}_A}^{\;\;D} \na^B  C^{*}_{BD} = C_A^{\;\;D} \na^B C_{BD}, C^{*}_{DE}  {C^{*}}^{DE} = C_{DE} C^{DE}$,
\[ {J^{*}}^k(u_1) = J^k(u_1), {C^{*}}^k(u_1) = C^k(u_1). \]
 
It remains to show that the evolutions of the conserved quantities are identical. Recall that the potentials of $C^{*}_{AB}$ and $ N^{*}_{AB}$ are given by $(-\cb, c)$ and $(-\nb, n)$. We observe that replacing $(c,\cb,n,\nb)$ by $(-\cb, c, -\nb, n)$ does not change the following expressions 
\begin{align*}
\partial_u E &=-\frac{1}{8} \int_{S^2} [n \Delta (\Delta+2)n +\underline{n} \Delta (\Delta+2) \underline{n}], \\
\pl_u P^k &= -\frac{1}{8} \int_{S^2} \tilde{X}^k [((\Delta+2)n)^2 +((\Delta+2) \underline{n})^2- 4\epsilon^{AB} \nabla_A  n \nabla_B (\Delta+2)\underline{n}], \\
\partial_u J^k &= \frac{1}{8} \int_{S^2} \tilde{X}^k \epsilon^{AB }[\nabla_{A}{c} \nabla_{B}\Delta(\Delta + 2) {n}+ \nabla_{A}\underline{c} \nabla_{B}\Delta(\Delta + 2)\underline{n}]\\
\partial_u  C^{k} &= 
 \frac{1}{8} \int_{S^2} \tilde{X}^k [((\Delta+2)n)^2 +((\Delta+2) \underline{n})^2- 4\epsilon^{AB} \nabla_A  n \nabla_B (\Delta+2)\underline{n}] \\
&\quad +\frac{1}{16}\int_{S^2}[\tilde{X}^{k}(\Delta(\Delta + 2)\underline{c}(\Delta + 2)\underline{n} -\Delta(\Delta + 2)\underline{n}(\Delta + 2)\underline{c})]\\
&\quad + \frac{1}{16}\int_{S^2}[\tilde{X}^{k}(\Delta(\Delta + 2)c(\Delta + 2)n - \Delta(\Delta + 2)n(\Delta + 2)c]
\end{align*}
This finishes the proof.
\end{proof}

\section{The case of quadrupole moments}
In this section, we consider the case of generalized quadrupole moments. Namely, all $c, \underline{c}, n, \underline{n}$ are $(-6)$ eigenfunctions (or $\ell=2$ spherical harmonics). Therefore, ${c}=\sum c_{ij}(u) \tilde{X}^i \tilde{X}^j,  \underline{c}=\sum \underline{c}_{ij}(u) \tilde{X}^i \tilde{X}^j$ ${n}=\sum n_{ij}(u) \tilde{X}^i \tilde{X}^j,  \underline{n}=\sum \underline{n}_{ij}(u) \tilde{X}^i \tilde{X}^j$ with $\partial_u c_{ij}=n_{ij}$ and $\partial_u \underline{c}_{ij}=\underline{n}_{ij}$. 
\subsection{Classical conserved quantities}
Next we compute the evolution of classical angular momentum and center of mass for  quadrupole moments.
\begin{lemma} Suppose $f_{ij}$ and $g_{ij} $ are both symmetric, traceless $3 \times 3$ matrices. Then
\begin{align}
&\int_{S^2}  (f_{ij} \tilde{X}^i \tilde{X}^j)^2=\frac{8\pi}{15}\sum_{ij} f^2_{ij} \label{int f^2}\\
&\int_{S^2} \tilde{X}^p \epsilon^{AB}\nabla_A (f_{ij} \tilde{X}^i\tilde{X}^j)\nabla_B (g_{kl} \tilde{X}^k \tilde{X}^l)=\frac{16\pi}{15} \sum_j f_{ij} g_{jk} \epsilon^{ikp}.\label{int X{f,g}}
\end{align}
\end{lemma}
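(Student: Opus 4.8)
The plan is to reduce both identities to the fourth-moment formula for the coordinate functions on $S^2$,
\[
\int_{S^2} \tilde{X}^i \tilde{X}^j \tilde{X}^k \tilde{X}^l = \frac{4\pi}{15}\left( \delta^{ij}\delta^{kl} + \delta^{ik}\delta^{jl} + \delta^{il}\delta^{jk}\right),
\]
which I would record first. The left-hand side is a totally symmetric $SO(3)$-invariant $4$-tensor, hence a constant multiple of $\delta^{ij}\delta^{kl}+\delta^{ik}\delta^{jl}+\delta^{il}\delta^{jk}$; the constant is pinned down by contracting $i=j$ and $k=l$, which gives $\int_{S^2}\big(\sum_i (\tilde{X}^i)^2\big)^2 = 4\pi$ on one side and $15$ times the constant on the other.

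For \eqref{int f^2}, I would write $(f_{ij}\tilde{X}^i\tilde{X}^j)^2 = f_{ij}f_{kl}\,\tilde{X}^i\tilde{X}^j\tilde{X}^k\tilde{X}^l$, integrate, and contract $f_{ij}f_{kl}$ against the moment tensor. The $\delta^{ij}\delta^{kl}$ term contributes $(\tr f)^2=0$ by tracelessness, while the $\delta^{ik}\delta^{jl}$ and $\delta^{il}\delta^{jk}$ terms each contribute $\sum_{ij}f_{ij}^2$ using the symmetry of $f$; hence the integral equals $\frac{4\pi}{15}\big(0+\sum f_{ij}^2+\sum f_{ij}^2\big)=\frac{8\pi}{15}\sum_{ij}f_{ij}^2$.

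For \eqref{int X{f,g}}, I would use two observations. First, since $f$ is symmetric, $\nabla_A(f_{ij}\tilde{X}^i\tilde{X}^j)=2f_{ij}\tilde{X}^i\nabla_A\tilde{X}^j$, and likewise for $g$. Second, the pointwise identity
\[
\epsilon^{AB}\nabla_A\tilde{X}^i\nabla_B\tilde{X}^j=\epsilon^{ijk}\tilde{X}^k,
\]
which I would prove by viewing $\nabla\tilde{X}^i = e_i - \tilde{X}^i\vec{n}$ as a tangent vector in $\mathbb{R}^3$ (with $\vec{n}$ the outward unit normal, i.e.\ the position vector) and using $\epsilon(V,W)=\vec{n}\cdot(V\times W)$ for tangent vectors $V,W$: the normal components drop out, leaving $\vec{n}\cdot(e_i\times e_j)=\epsilon^{ijk}\tilde{X}^k$. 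Combining, the integrand becomes $4f_{ij}g_{kl}\,\epsilon^{jlm}\,\tilde{X}^p\tilde{X}^i\tilde{X}^k\tilde{X}^m$; integrating and contracting with the moment tensor, the terms $\delta^{pi}\delta^{km}$ and $\delta^{pk}\delta^{im}$ vanish because $\epsilon^{jlm}$ is antisymmetric while $g$, respectively $f$, is symmetric, and the surviving term $\delta^{pm}\delta^{ik}$ yields, after using $f_{ij}=f_{ji}$ and relabeling dummy indices, exactly $\frac{16\pi}{15}\sum_j f_{ij}g_{jk}\epsilon^{ikp}$.

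I do not anticipate a genuine obstacle: the only delicate point is the index bookkeeping in the last step — checking which of the three moment-tensor terms survive contraction with the antisymmetric $\epsilon^{jlm}$ and matching the surviving one to the stated right-hand side after relabeling. The one place a sign could slip is the pointwise identity $\epsilon^{AB}\nabla_A\tilde{X}^i\nabla_B\tilde{X}^j=\epsilon^{ijk}\tilde{X}^k$, which must be taken consistently with the orientation fixed by $\epsilon_{AB}$ in Section 2.
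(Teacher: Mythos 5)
Your proposal is correct and follows essentially the same route as the paper, which simply reduces both identities to the fourth-moment formula $\int_{S^2}\tilde{X}^i\tilde{X}^j\tilde{X}^k\tilde{X}^l=\tfrac{4\pi}{15}(\delta^{ij}\delta^{kl}+\delta^{ik}\delta^{jl}+\delta^{il}\delta^{jk})$ (cited there from an earlier reference rather than rederived); your contractions, the use of $\epsilon^{AB}\nabla_A\tilde{X}^i\nabla_B\tilde{X}^j=\epsilon^{ijk}\tilde{X}^k$ (an identity the paper itself invokes elsewhere), and the final index bookkeeping all check out.
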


\begin{proof}

Both formulae follow from Lemma 5.3 of \cite{CWY_small}
\[\int_{S^2} \tilde{X}^i\tilde{X}^j\tilde{X}^k\tilde{X}^l =\frac{4\pi}{15}(\delta_{ij}\delta_{kl}+\delta_{ik}\delta_{jl}+\delta_{il}\delta_{jk}). \]
\end{proof}

Combining the above lemma with Theorem \ref{evolution2_bracket} and Proposition \ref{energy_momenum_evol_potential}, we conclude that
\begin{proposition} Suppose 
${c}=\sum c_{ij}(u) \tilde{X}^i \tilde{X}^j,  \underline{c}=\sum \underline{c}_{ij}(u) \tilde{X}^i \tilde{X}^j$ ${n}=\sum n_{ij}(u) \tilde{X}^i \tilde{X}^j,  \underline{n}=\sum \underline{n}_{ij}(u) \tilde{X}^i \tilde{X}^j$, then
\begin{equation}\begin{split}
\partial_u E&=-\frac{8\pi}{5} (\sum_{ij} n_{ij}^2+\sum_{ij} \underline{n}_{ij}^2)\\
\partial_u P^k&=-\frac{32\pi}{15} \sum n_{ij} \underline{n}_{jp} \epsilon^{ipk}\\
\partial_u \tilde J^k&= \frac{16\pi}{5} \sum (c_{ij} {n}_{jp}+ \underline{c}_{ij} \underline{n}_{jp}) \epsilon^{ipk}\\
\partial_u \tilde C^k&= \frac{32u\pi}{15} \sum n_{ij} \underline{n}_{jp} \epsilon^{ipk}.
\end{split}\end{equation}

\end{proposition}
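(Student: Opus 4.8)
The plan is to substitute the hypothesis into the potential formulae of Theorem~\ref{evolution2_bracket} (equivalently Proposition~\ref{evolution in terms of potentials}) and Proposition~\ref{energy_momenum_evol_potential}, and then evaluate the resulting integrals over $S^2$ with the preceding lemma. First I would record the eigenvalue arithmetic: on an $\ell=2$ eigenfunction $u$, i.e. $\Delta u=-6u$, one has $(\Delta+2)u=-4u$ and $\Delta(\Delta+2)u=24u$, so applying this to each of $c,\cb,n,\nb$ turns every differential operator occurring in the flux formulae into a numerical constant. I would also note that $c_{ij},\cb_{ij},n_{ij},\nb_{ij}$ may be taken symmetric and trace-free: symmetry is automatic since $\tilde X^i\tilde X^j$ is symmetric, while the trace part contributes only to the $\ell=0$ mode and hence affects neither $C_{AB},N_{AB}$ nor the standing assumption that the potentials live in modes $\ell\ge 2$. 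This places us in the setting where \eqref{int f^2} and \eqref{int X{f,g}} apply directly.

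For the energy, $\partial_u E=-\tfrac18\int_{S^2}\bigl(n\,\Delta(\Delta+2)n+\nb\,\Delta(\Delta+2)\nb\bigr)=-3\int_{S^2}(n^2+\nb^2)$, and \eqref{int f^2} gives $\int_{S^2}(n_{ij}\tilde X^i\tilde X^j)^2=\tfrac{8\pi}{15}\sum_{ij}n_{ij}^2$, yielding the claimed coefficient $-\tfrac{8\pi}{5}$. For $\partial_u P^k$ and the $u$-weighted piece of $\partial_u\tilde C^k$, the key point is that $\int_{S^2}\tilde X^k n^2$ and $\int_{S^2}\tilde X^k\nb^2$ vanish, the integrand being a product of five coordinate functions and hence odd; only the cross term $-4\,\epsilon^{AB}\nabla_A n\,\nabla_B(\Delta+2)\nb=16\,[n,\nb]_1$ survives, and \eqref{int X{f,g}} evaluates $\int_{S^2}\tilde X^k[n,\nb]_1$ (the bracket of Definition~\ref{bracket}) to $\tfrac{16\pi}{15}\sum n_{ij}\nb_{jp}\epsilon^{ipk}$. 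Collecting constants gives $\partial_u P^k=-\tfrac{32\pi}{15}\sum n_{ij}\nb_{jp}\epsilon^{ipk}$ and, for the $u\,[\,\cdots]$ term of $\partial_u\tilde C^k$, the value $\tfrac{32\pi u}{15}\sum n_{ij}\nb_{jp}\epsilon^{ipk}$.

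For the angular momentum I would use the bracket form $\partial_u\tilde J^k=\tfrac18\int_{S^2}\tilde X^k\bigl([c,\Delta(\Delta+2)n]_1+[\cb,\Delta(\Delta+2)\nb]_1\bigr)=3\int_{S^2}\tilde X^k\bigl([c,n]_1+[\cb,\nb]_1\bigr)$, whence \eqref{int X{f,g}} produces the stated $\tfrac{16\pi}{5}\sum(c_{ij}n_{jp}+\cb_{ij}\nb_{jp})\epsilon^{ipk}$. It then remains only to dispose of the two bracket-of-second-kind terms in the center-of-mass formula of Proposition~\ref{evolution in terms of potentials}, namely $\Delta(\Delta+2)c\,(\Delta+2)n-\Delta(\Delta+2)n\,(\Delta+2)c$ and its $\cb,\nb$ analogue: on a single $\ell=2$ mode $\Delta(\Delta+2)$ and $\Delta+2$ act by the scalars $24$ and $-4$, so these expressions vanish pointwise, leaving precisely the $u$-weighted term computed above.

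There is no substantive obstacle here; the argument is bookkeeping in the eigenvalue arithmetic together with the two integral identities of the preceding lemma. The only points demanding care are the cancellations just noted — that the symmetric quadratic terms integrate to zero against $\tilde X^k$ by parity, and that the second-kind-bracket contributions to the center of mass drop out on a pure $\ell=2$ mode — since overlooking either would spoil the clean final expressions.
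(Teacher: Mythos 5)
Your proposal is correct and follows the same route as the paper, which simply combines the integral lemma (formulae \eqref{int f^2} and \eqref{int X{f,g}}) with Theorem \ref{evolution2_bracket} and Proposition \ref{energy_momenum_evol_potential}; your eigenvalue bookkeeping ($\Delta(\Delta+2)=24$, $\Delta+2=-4$ on $\ell=2$), the parity argument killing $\int\tilde X^k n^2$, and the pointwise vanishing of the $[\cdot,\cdot]_2$ terms are exactly the steps the paper leaves implicit. All constants check out.
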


\subsection{CWY angular momentum and center of mass}
Next we compute the evolution of the CWY angular momentum and center of mass for  quadrupole moments. We need the following lemma. 
\begin{lemma} \label{news_square_mode} Suppose the potentials of the news tensor are of mode $\ell=2$. Namely, \[N_{AB}=\nabla_A\nabla_B n-\frac{1}{2} \Delta n \sigma_{AB} + \frac{1}{2}(\epsilon_{AC} \nabla_B \nabla^C \underline{n}+\epsilon_{BC} \nabla_A \nabla^C \underline{n})\] where $n=\sum_{ij} n_{ij} \tilde{X}^i \tilde{X}^j$ and $\underline{n}=\sum_{ij}\underline{n}_{ij} \tilde{X}^i \tilde{X}^j$ satisfy $\sum_i n_{ii}=\sum_i \underline{n}_{ii}=0$. Introduce two $\ell=2$ spherical harmonics
\begin{align*}
Q &=\sum_{i, k,l} (n_{ik}n_{il} \tilde{X}^k \tilde{X}^l)-\frac{1}{3}\sum_{i, j}n_{ij}^2,\\
\underline{Q} &= \sum_{i, k,l} (\underline{n}_{ik}\underline{n}_{il} \tilde{X}^k \tilde{X}^l)-\frac{1}{3}\sum_{i, j}\underline{n}_{ij}^2.
\end{align*}
Then
\begin{enumerate}
\item the $\ell=2$ component of $N_{AB}N^{AB}$ is 
\[-\frac{48}{7} Q-\frac{48}{7}\underline{Q},\]
\item the odd mode component of $N_{AB}N^{AB}$ is 

\[ 8 \epsilon^{ik}_{\,\,\,\,m} n_{ij} \underline{n}_{kl} \tilde{X}^m (\delta^{jl}-\tilde{X}^j\tilde{X}^l).\]
Here $\epsilon_{ijk}$ is the Levi-Civita symbol in three dimensions. 
\end{enumerate}
\end{lemma}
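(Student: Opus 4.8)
The plan is to start from the pointwise expansion \eqref{news_square} of $N_{AB}N^{AB}$ and split it into two pieces according to their behaviour under the antipodal involution $\tilde X\mapsto -\tilde X$ of $S^2$. The four ``diagonal'' terms $\nabla_A\nabla_B n\,\nabla^A\nabla^B n-\tfrac12(\Delta n)^2$ (and the same with $\underline n$) are even under this involution, hence supported in the even-$\ell$ modes, while the cross term $2\epsilon^{AC}\nabla_A\nabla_B n\,\nabla_C\nabla^B\underline n$ is odd (the Hessians pull back to themselves, but the volume form $\epsilon$ changes sign since the antipodal map reverses the orientation of $S^2$), hence supported in the odd-$\ell$ modes. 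Therefore the $\ell=2$ component of $N_{AB}N^{AB}$ comes entirely from the diagonal terms, and the odd-mode component is exactly the cross term; the two statements decouple cleanly and it suffices to compute each piece. For both pieces I would use the Hessian identity \eqref{hessian_X_k}, which for an $\ell=2$ potential $n=\sum n_{ij}\tilde X^i\tilde X^j$ gives
\[ \nabla_A\nabla_B n = 2n_{ij}\nabla_A\tilde X^i\nabla_B\tilde X^j - 2n\,\sigma_{AB}, \]
together with the standard spherical identities $\langle\nabla\tilde X^j,\nabla\tilde X^l\rangle=\delta^{jl}-\tilde X^j\tilde X^l$ and $\epsilon^{AC}\nabla_A\tilde X^i\nabla_C\tilde X^k=\epsilon^{ikm}\tilde X^m$. (One also checks that $Q$ and $\underline Q$, being traceless quadratic forms in $\tilde X$ restricted to $S^2$, are genuine $\ell=2$ harmonics.)

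For the diagonal part I would substitute the Hessian identity into $\nabla_A\nabla_B n\,\nabla^A\nabla^B n$; the $\sigma_{AB}$ cross terms combine via $n_{ij}(\delta^{ij}-\tilde X^i\tilde X^j)=-n$ and $\sigma_{AB}\sigma^{AB}=2$, leaving
\[ \nabla_A\nabla_B n\,\nabla^A\nabla^B n-\tfrac12(\Delta n)^2 = \tfrac43\lvert n\rvert^2 - 8Q + 2n^2, \]
where $\lvert n\rvert^2=\sum_{ij}n_{ij}^2$ and I write $\sum_i n_{ij}n_{il}\tilde X^j\tilde X^l=\tfrac13\lvert n\rvert^2+Q$, the constant being fixed by the trace. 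The only nontrivial remaining point is the $\ell=2$ content of $n^2$: from $\Delta(n^2)=2\lvert\nabla n\rvert^2+2n\Delta n$, the computation $\lvert\nabla n\rvert^2=4\big(\sum_i n_{ij}n_{il}\tilde X^j\tilde X^l-n^2\big)$ and $\Delta n=-6n$ give the identity $\Delta(n^2)=8\big(\tfrac13\lvert n\rvert^2+Q\big)-20\,n^2$; projecting it onto the $\ell=2$ eigenspace ($\Delta=-6$) yields $(n^2)_{\ell=2}=\tfrac47 Q$. Hence the $\ell=2$ part of the diagonal $n$-terms is $-8Q+2\cdot\tfrac47 Q=-\tfrac{48}{7}Q$, and adding the identical $\underline n$-contribution gives statement (1).

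For the cross term I would insert the Hessian identity for both $n$ and $\underline n$ into $2\epsilon^{AC}\nabla_A\nabla_B n\,\nabla_C\nabla^B\underline n$. Every term carrying a $\sigma_{AB}$ or $\sigma^{AB}$ factor is annihilated by the antisymmetry of $\epsilon$ (after using $\epsilon^{ijm}n_{ij}=\epsilon^{klm}\underline n_{kl}=0$), and the one surviving term is
\[ 8\,n_{ij}\underline n_{kl}\,\epsilon^{ikm}\tilde X^m(\delta^{jl}-\tilde X^j\tilde X^l), \]
which is precisely statement (2), and is manifestly odd as the parity argument predicted.

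The only genuine labour is bookkeeping: keeping track of which index contractions survive the tracelessness of $n_{ij}$ and $\underline n_{kl}$, and pinning down the scalar in $(n^2)_{\ell=2}=\tfrac47 Q$. The latter is the single place where a nontrivial spherical-harmonic fact enters; the route above via $\Delta(n^2)$ is the cleanest and avoids the brute-force alternative of invoking the sixth-moment formula $\int_{S^2}\tilde X^{i_1}\cdots\tilde X^{i_6}$. The parity split is what prevents any cross-contamination between the $\ell=2$ and odd-mode assertions, and is the conceptual heart of the argument.
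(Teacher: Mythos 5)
Your proof is correct, and its overall architecture coincides with the paper's: split $N_{AB}N^{AB}$ into the ``diagonal'' terms in $n$ and $\underline n$ (even modes) and the single $\epsilon$-cross term (odd modes), then evaluate each via the Hessian identity $\nabla_A\nabla_B n = 2n_{ij}\nabla_A\tilde X^i\nabla_B\tilde X^j - 2n\,\sigma_{AB}$ together with $\nabla_B\tilde X^i\nabla^B\tilde X^j=\delta^{ij}-\tilde X^i\tilde X^j$ and $\epsilon^{AC}\nabla_A\tilde X^i\nabla_C\tilde X^k=\epsilon^{ikm}\tilde X^m$; your intermediate identity $\nabla_A\nabla_Bn\,\nabla^A\nabla^Bn = 20n^2-8Q+\tfrac43\sum n_{ij}^2$ is exactly the paper's. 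Two points where you genuinely diverge, both to your credit: you justify the even/odd split by the antipodal involution (the paper simply asserts which terms carry which parity), and you extract $(n^2)_{\ell=2}=\tfrac47 Q$ by projecting the identity $\Delta(n^2)=2|\nabla n|^2+2n\Delta n$ onto the $\Delta=-6$ eigenspace, whereas the paper reads off the coefficient from the explicit spanning set of $\ell=4$ harmonics built from $\tilde X^i\tilde X^j\tilde X^k\tilde X^l$. Your route to the $\tfrac47$ is self-contained and avoids quoting the fourth-moment combinatorics, at the cost of needing the gradient computation $|\nabla n|^2=4\bigl(\sum_i n_{ij}n_{il}\tilde X^j\tilde X^l-n^2\bigr)$, which you supply; either way the coefficient $-8Q+2\cdot\tfrac47Q=-\tfrac{48}{7}Q$ comes out the same.
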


\begin{proof} Note that the even-mode components ($\ell=0, 2, 4$) and odd-mode components ($\ell=1,3$) of $N_{AB} N^{AB}$ are given by \[\nabla_A\nabla_Bn \nabla^A\nabla^B n-\frac{1}{2}(\Delta n)^2+\nabla_A\nabla_B \underline{n} \nabla^A\nabla^B \underline{n}-\frac{1}{2}(\Delta \underline{n})^2 \]
and
\begin{align*}
&( \na^A\na^B n - \frac{1}{2}\Delta n \sigma^{AB} ) (\epsilon_{AC} \nabla_B \nabla^C \underline{n}+\epsilon_{BC} \nabla_A \nabla^C \underline{n})\\
&= 2n_{ij}\na^A \tilde X^i \na^B \tilde X^j (\epsilon_{AC} \nabla_B \nabla^C \underline{n}+\epsilon_{BC} \nabla_A \nabla^C \underline{n})\\
&= 2n_{ij}\na^A \tilde X^i \na^B \tilde X^j \cdot 2 \underline{n}_{kl} (\epsilon_{AC}\na_B \tilde X^k \na^C \tilde X^l + \epsilon_{BC} \na_A \tilde X^k \na^C \tilde X^l)\\
&= 8 \epsilon^{ik}_{\,\,\,\,m} n_{ij} \underline{n}_{kl} \tilde{X}^m (\delta^{jl}-\tilde{X}^j\tilde{X}^l)
\end{align*}
respectively. In the last equality we use the identity $\epsilon_{AB}\na_A \tilde X^i \na_B \tilde X^j = \epsilon_{ijk} \tilde X^k$ and $\na_B \tilde X^i \na^B \tilde X^j = \delta^{ij} - \tilde X^i \tilde X^j$. 

For (1), we compute
\[ \nabla_A\nabla_Bn \nabla^A\nabla^B n=20 n^2-8Q+\frac{4}{3}\sum_{ij} n_{ij}^2 .\]

Since the space of $\ell=4$ spherical harmonics is spanned by
\begin{align}
\begin{split}
&\tilde X^i \tilde X^j \tilde X^k \tilde X^l +\frac{1}{35} \lt( \delta^{ij}\delta^{kl}+ \delta^{ik}\delta^{jl}+\delta^{il}\delta^{jk} \rt)\\&- \frac{1}{7}\lt( \tilde X^i \tilde X^j \delta^{kl}+ \tilde X^i \tilde X^k \delta^{jl} + \tilde X^i \tilde X^l \delta^{jk} + \tilde X^j \tilde X^k \delta^{il} + \tilde X^j \tilde X^l \delta^{ik} + \tilde X^k \tilde X^l \delta^{ij} \rt),
\end{split} \end{align}
the $\ell=2$ component of $n^2$ is $\frac{4}{7}Q$. Putting these together, we obtain (1).

\end{proof}

In the case of quadrupole moments, the CWY angular momentum and center of mass take the form:
\begin{equation}\label{CWY_angmom}
J^k=\int_{S^2}    \epsilon^{AB}\nabla_{B}\tilde{X}^{k}[N_A-\frac{1}{4}C_{A}^{\,\,\,\,D}\nabla^B C_{DB}-c\nabla_A m]
\end{equation}
\begin{equation}\label{CWY_com}
C^k = \int_{S^2} \nabla^{A}\tilde{X}^{k}[N_A- u \nabla_A m-\frac{1}{4}C_{A}^{\,\,\,\,D}\nabla^B C_{DB} - \frac{1}{16}\nabla_{A}(C_{DE}C^{DE})-2 \underline{c} \,\,\epsilon_{AB}\nabla^B m],
\end{equation}

Therefore,
\[\begin{split} J^k&=\tilde J^k-\int_{S^2} \tilde{X}^k \epsilon^{AB} \nabla_A c\nabla_B \hat{m}\\
C^k&=\tilde C^k+2 \int_{S^2} \tilde{X}^k \epsilon^{AB} \nabla_A \underline{c}\nabla_B \hat{m}+\frac{1}{4} \int_{S^2} \tilde{X}^k \epsilon^{AB} \nabla_A \underline{c}\nabla_B \Delta(\Delta+2) c \end{split},\] 
where we use \eqref{modified_mass} and \eqref{ibp1}.

The evolution formulae for ${J}^k$ and ${C}^k$ are thus 
\[\begin{split}\partial_u{J}^k=&\partial_u \tilde J^k-\int_{S^2} \tilde{X}^k \epsilon^{AB} \nabla_A n\nabla_B \hat{m}- \int_{S^2} \tilde{X}^k \epsilon^{AB} \nabla_A c \nabla_B \partial_u \hat{m}\\
\partial_u {C}^k=&\partial_u \tilde C^k+2 \int_{S^2} \tilde{X}^k \epsilon^{AB} \nabla_A \underline{n}\nabla_B \hat{m} + 2 \int_{S^2} \tilde{X}^k \epsilon^{AB} \nabla_A \underline{c}\nabla_B \partial_u \hat{m}\\
&+\frac{1}{4} \int_{S^2} \tilde{X}^k \epsilon^{AB} \nabla_A \underline{n}\nabla_B \Delta(\Delta+2) c +\frac{1}{4} \int_{S^2} \tilde{X}^k \epsilon^{AB} \nabla_A \underline{c}\nabla_B \Delta(\Delta+2) n
\end{split}\]

By Lemma \ref{same_mode}, only the $\ell=2$ mode components of $\hat{m}$ and $\partial_u\hat{m}$ will survive in the above integrals. 

Denote the $\ell=2$ mode of $\hat{m}$ by $\hat{m}_{\ell=2}=\hat{m}_{kl} \tilde{X}^k \tilde{X}^l$. By Lemma \ref{news_square_mode}, we get \[\partial_u \hat{m}_{kl}=\frac{6}{7} \lt[ \sum_{i } (n_{ik}n_{il}+\underline{n}_{ik}\underline{n}_{il}) -\frac{1}{3} \delta_{kl} \sum_{i, j}(n_{ij}^2 +\underline{n}_{ij}^2) \rt]. \]
By \eqref{int X{f,g}}, we obtain the evolution equation of $ J^k$ and $ C^k$: 
\begin{proposition} Suppose 
${c}=\sum c_{ij}(u) \tilde{X}^i \tilde{X}^j,  \underline{c}=\sum \underline{c}_{ij}(u) \tilde{X}^i \tilde{X}^j$, ${n}=\sum n_{ij}(u) \tilde{X}^i \tilde{X}^j,  \underline{n}=\sum \underline{n}_{ij}(u) \tilde{X}^i \tilde{X}^j$, then
\begin{equation}\begin{split}
\partial_u E=&-\frac{8\pi}{5} (\sum_{ij} n_{ij}^2+\sum_{ij} \underline{n}_{ij}^2)\\
\partial_u P^k=&-\frac{32\pi}{15} \sum_{i,j,p} n_{ij} \underline{n}_{jp} \epsilon^{ipk}\\
\partial_u {J}^k=& \frac{16\pi}{15} \sum_{i, j, p} (3c_{ij} {n}_{jp}+ 3\underline{c}_{ij} \underline{n}_{jp}-n_{ij} \hat{m}_{jp}- {c}_{ij} \partial_u\hat{m}_{jp} ) \epsilon^{ipk}\\
\partial_u {C}^k=&\frac{16\pi}{15} \sum_{i, j, p} (   2 \underline{n}_{ij} \hat{m}_{jp} +2      \underline{c}_{ij} \partial_u\hat{m}_{jp} +6\underline{n}_{ij} {c}_{jp}+ 6\underline{c}_{ij} {n}_{jp}) \epsilon^{ipk} \\
&+\frac{32u\pi}{15} \sum n_{ij} \underline{n}_{jp} \epsilon^{ipk}, 
\end{split}\end{equation} where $\hat{m}_{kl}$ is given by 
 \[\partial_u \hat{m}_{kl}=\frac{6}{7} [\sum_{i } (n_{ik}n_{il}+\underline{n}_{ik}\underline{n}_{il}) -\frac{1}{3} \delta_{kl} \sum_{i, j}(n_{ij}^2 +\underline{n}_{ij}^2) ] \]

\end{proposition}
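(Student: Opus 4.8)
The plan is to assemble the four identities from ingredients already in the excerpt, reducing every integral to an application of the bilinear formula \eqref{int X{f,g}}. The starting point is the pair of formulae derived immediately before the Proposition (using \eqref{modified_mass} and \eqref{ibp1}) that express $\partial_u J^k$ and $\partial_u C^k$ in terms of $\partial_u\tilde J^k$, $\partial_u\tilde C^k$ and integrals of the shape $\int_{S^2}\tilde X^k\epsilon^{AB}\nabla_A u\,\nabla_B v$ in which one of $u,v$ is $\hat m$, $\partial_u\hat m$, $\Delta(\Delta+2)c$ or $\Delta(\Delta+2)n$ and the other is one of $n,\underline{n},c,\underline{c}$. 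Into these I would substitute the classical quadrupole evolution formulae from the immediately preceding Proposition, namely $\partial_u\tilde J^k=\frac{16\pi}{5}\sum(c_{ij}n_{jp}+\underline{c}_{ij}\underline{n}_{jp})\epsilon^{ipk}$ and $\partial_u\tilde C^k=\frac{32u\pi}{15}\sum n_{ij}\underline{n}_{jp}\epsilon^{ipk}$; the formulae for $\partial_u E$ and $\partial_u P^k$ are simply restated.

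Next I would identify which Fourier modes survive. By Lemma \ref{same_mode}, only the $\ell=2$ part of $\hat m$ and of $\partial_u\hat m$ contributes to the integrals above, since the companion factor is of mode $\ell=2$; write $\hat m_{\ell=2}=\hat m_{kl}\tilde X^k\tilde X^l$ with $\hat m_{kl}$ symmetric and traceless. To determine $\partial_u\hat m_{kl}$ I would combine $\partial_u\hat m=-\frac18 N_{AB}N^{AB}$ from \eqref{evol_modified_mass} with part (1) of Lemma \ref{news_square_mode}, which gives the $\ell=2$ component of $N_{AB}N^{AB}$ as $-\frac{48}{7}Q-\frac{48}{7}\underline{Q}$; reading off the coefficient of $\tilde X^k\tilde X^l$ in $\frac{6}{7}(Q+\underline{Q})$ (the $\delta_{kl}$ term coming from the trace subtraction built into $Q$ and $\underline{Q}$) yields the stated formula for $\partial_u\hat m_{kl}$. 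Part (2) of that Lemma is not needed here, since the odd-mode content of $N_{AB}N^{AB}$ is annihilated by Lemma \ref{same_mode}.

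Finally I would evaluate each integral explicitly. Because $c$ and $n$ are $(-6)$-eigenfunctions, $\Delta(\Delta+2)c=24c$ and $\Delta(\Delta+2)n=24n$, so the two $\Delta(\Delta+2)$ terms in $\partial_u C^k$ collapse to $6\int_{S^2}\tilde X^k\epsilon^{AB}\nabla_A\underline{n}\,\nabla_B c$ and $6\int_{S^2}\tilde X^k\epsilon^{AB}\nabla_A\underline{c}\,\nabla_B n$. Applying the bilinear identity \eqref{int X{f,g}},
\[
\int_{S^2} \tilde{X}^p \epsilon^{AB}\nabla_A (f_{ij} \tilde{X}^i\tilde{X}^j)\nabla_B (g_{kl} \tilde{X}^k \tilde{X}^l)=\frac{16\pi}{15} \sum_j f_{ij} g_{jk} \epsilon^{ikp},
\]
to each term and collecting coefficients (using $\frac{16\pi}{5}=\frac{48\pi}{15}$ and $\frac14\cdot 24=6$) produces the four claimed identities.

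There is no genuine obstacle: the argument is a bookkeeping exercise built from Lemmas \ref{same_mode} and \ref{news_square_mode}, equation \eqref{evol_modified_mass}, and formula \eqref{int X{f,g}}. The only places that require care are tracking the numerical constants and the index positions in $\epsilon^{ipk}$ when the $\partial_u\tilde J^k$ and $\partial_u\tilde C^k$ terms are combined with the $\hat m$-, $\partial_u\hat m$- and $\Delta(\Delta+2)$-integrals, and ensuring that the trace part of $\partial_u\hat m_{kl}$ is correctly inherited from the definitions of $Q$ and $\underline{Q}$.
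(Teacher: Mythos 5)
Your proposal is correct and follows essentially the same route as the paper: rewrite $J^k$ and $C^k$ in terms of $\tilde J^k$, $\tilde C^k$ and bilinear integrals involving $\hat m$, $\partial_u\hat m$ and $\Delta(\Delta+2)c$, $\Delta(\Delta+2)n$; use Lemma \ref{same_mode} to keep only the $\ell=2$ part of $\hat m$ and $\partial_u\hat m$; extract $\partial_u\hat m_{kl}$ from \eqref{evol_modified_mass} and part (1) of Lemma \ref{news_square_mode}; and evaluate everything with \eqref{int X{f,g}}. The coefficient bookkeeping you describe (e.g.\ $\Delta(\Delta+2)c=24c$ and $\tfrac14\cdot 24=6$) reproduces the stated constants exactly.
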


\appendix

\section{Christodoulou-Klainerman connection coefficients and curvature components in Bondi-Sachs formalism}
We write the limit of connection coefficients and curvature components defined in \cite{CK, Christo1991, ChristoMG9} in terms of the Bondi-Sachs metric coefficients.

We choose the null vector fields $L = \frac{\pl}{\pl r}$ and $\Lb = \frac{2}{U}\lt( \pl_u - W^D \pl_D - \frac{V}{2}\pl_r \rt)$, which satisfy $\langle L,\Lb \rangle = -2$.

\begin{definition}
The second fundamental forms and torsion are defined by
\begin{align*}
\chi_{AB} &= \langle D_A L, \pl_B \rangle = \frac{1}{2} \tr\chi g_{AB} + \hat\chi_{AB}\\
\chib_{AB} &= \langle D_A \Lb, \pl_B \rangle = \frac{1}{2} \tr\chib g_{AB} + \hat\chib_{AB}\\
\zeta_A &= \frac{1}{2} \langle D_A L, \Lb \rangle
\end{align*}
Their limit as $r \rw\infty$ are defined by
\begin{align*}
\Sigma &= \lim_{r\rw\infty} \hat\chi\\
\Xi &= \lim_{r\rw\infty} r^{-1} \hat\chib \\
Z &= \lim_{r\rw\infty} r\zeta .
\end{align*}
\end{definition}
They are related to the metric coefficients in the corresponding Bondi-Sachs coordinate system as follows:
\begin{prop}\label{tensor in BS variable}
\begin{align*}
\Sigma_{AB} &= -\frac{1}{2} C_{AB}\\
\Xi_{AB} &= N_{AB}\\
Z_A &= -\frac{1}{2} \na^B C_{AB}
\end{align*}
\end{prop}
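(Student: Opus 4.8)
The plan is to compute $\Sigma$, $\Xi$, $Z$ directly from the Bondi-Sachs metric \eqref{spacetime_metric}, its inverse, and the asymptotic expansions of $U,V,W^A,h_{AB}$ quoted in Section 2. First I would record $g_{ur}=-U$, $g_{uu}=-UV+r^2h_{AB}W^AW^B$, $g_{uA}=r^2h_{AB}W^B$, $g_{AB}=r^2h_{AB}$, $g_{rr}=g_{rA}=0$, together with the inverse components $g^{ur}=-U^{-1}$, $g^{uu}=g^{uA}=0$, $g^{rr}=V/U$, $g^{rA}=W^A/U$, $g^{AB}=r^{-2}h^{AB}$. Two elementary observations simplify everything: $\langle L,\partial_B\rangle=g_{rB}=0$ and $\langle\underline{L},\partial_B\rangle=0$, so that both $L$ and $\underline{L}$ are normal to the spheres $S_{u,r}$; and lowering the index of $\underline{L}$ gives $\underline{L}_u=-V$, $\underline{L}_r=-2$, $\underline{L}_A=0$ (the last being consistent with $\langle L,\underline{L}\rangle=-2$).

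For $\Sigma$: since $g_{rB}$ and $g_{rA}$ vanish identically, the Koszul formula gives $\chi_{AB}=\langle D_{\partial_A}L,\partial_B\rangle=\tfrac12\partial_r g_{AB}=\tfrac12\partial_r(r^2h_{AB})$. The Bondi gauge condition that $\det h_{AB}$ is $r$-independent forces $g^{AB}\partial_r g_{AB}=\partial_r\log\det(r^2h)=4/r$, so $\tr\chi=2/r$ exactly and $\hat\chi_{AB}=\chi_{AB}-r h_{AB}=\tfrac12 r^2\partial_r h_{AB}$; plugging in $h_{AB}=\sigma_{AB}+r^{-1}C_{AB}+O(r^{-2})$ yields $\hat\chi_{AB}=-\tfrac12 C_{AB}+O(r^{-1})$, hence $\Sigma_{AB}=-\tfrac12 C_{AB}$.

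For $\Xi$ and $Z$ I would go through the spacetime Christoffel symbols. Using $\langle\underline{L},\partial_B\rangle\equiv0$ and metric compatibility, $\underline{\chi}_{AB}=\langle D_{\partial_A}\underline{L},\partial_B\rangle=-\Gamma^\mu_{AB}\underline{L}_\mu=V\Gamma^u_{AB}+2\Gamma^r_{AB}$, and $\zeta_A=\tfrac12\langle D_{\partial_A}L,\underline{L}\rangle=\tfrac12\Gamma^\mu_{Ar}\underline{L}_\mu=-\tfrac12 V\Gamma^u_{Ar}-\Gamma^r_{Ar}$. From the inverse metric one reads $\Gamma^u_{AB}=\tfrac1{2U}\partial_r(r^2h_{AB})=\chi_{AB}/U$, $\Gamma^u_{Ar}=0$ (again because $g_{rr}$, $g_{rA}$ vanish), and the expansions of $\Gamma^r_{AB}$, $\Gamma^r_{Ar}$. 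For $\underline{\chi}_{AB}$ the only contributions of order $r$ are $-\tfrac VU\chi_{AB}=-r\sigma_{AB}+O(1)$ (coming from $\tfrac VU\chi_{AB}$ together with the $-\tfrac{2V}U\chi_{AB}$ piece inside $2\Gamma^r_{AB}$) and $+\tfrac1U\partial_u g_{AB}=\tfrac1U r^2\partial_u h_{AB}=rN_{AB}+O(1)$ (using $\partial_u h_{AB}=r^{-1}N_{AB}+O(r^{-2})$), so $\underline{\chi}_{AB}=-r\sigma_{AB}+rN_{AB}+O(1)$; then $\tr\underline{\chi}=-2/r+O(r^{-2})$, $\tfrac12\tr\underline{\chi}\,g_{AB}=-r\sigma_{AB}+O(1)$, and $\hat{\underline{\chi}}_{AB}=rN_{AB}+O(1)$, giving $\Xi_{AB}=\lim_{r\to\infty}r^{-1}\hat{\underline{\chi}}_{AB}=N_{AB}$. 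For $\zeta_A$, inspecting $\Gamma^r_{Ar}$ one finds every term is $O(r^{-2})$ except $\tfrac{W^C}{U}\chi_{CA}$, which equals $\tfrac1{2r}\nabla^B C_{AB}+O(r^{-2})$ by $W^C=\tfrac1{2r^2}\nabla_D C^{CD}+O(r^{-3})$ and $\chi_{CA}=r\sigma_{CA}+O(1)$; hence $\zeta_A=-\tfrac1{2r}\nabla^B C_{AB}+O(r^{-2})$ and $Z_A=-\tfrac12\nabla^B C_{AB}$.

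The main bookkeeping obstacle is the $\underline{\chi}$ computation, since $\underline{L}$ mixes all four metric coefficients $U,V,W^A,h_{AB}$ and one must carefully separate the $O(r)$ contributions from the $O(1)$ ones; the key point is that the $O(r)$ part of $\tfrac VU\chi_{AB}$, together with the trace subtraction $\tfrac12\tr\underline{\chi}\,g_{AB}$, cancels the $-r\sigma_{AB}$ behaviour, leaving precisely the $rN_{AB}$ coming from $+\tfrac1U r^2\partial_u h_{AB}$. Everything else is a routine asymptotic expansion using the quoted expansions of $U,V,W^A,h_{AB}$.
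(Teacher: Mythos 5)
Your proposal is correct and follows essentially the same route as the paper: a direct asymptotic computation of $\chi$, $\chib$, $\zeta$ from the Bondi--Sachs metric expansion, using the determinant condition to get $\tr\chi = 2/r$ exactly and identifying the $O(r)$ part of $\hat{\chib}$ with $r\,\partial_u C_{AB}$ and the $O(r^{-1})$ part of $\zeta_A$ with $-\frac{1}{2r}\na^B C_{AB}$. The paper's proof simply states the outcomes of these ``direct computations''; your version supplies the Christoffel-symbol bookkeeping behind them, and the details check out.
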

\begin{proof}
Starting with $g_{AB} = r^2 \sigma_{AB} + rC_{AB} + O(1)$, the determinant condition gives $\tr\chi = \frac{2}{r}$ and we compute
\begin{align*}
\chi_{AB} = r\sigma_{AB} + \frac{1}{2}C_{AB} + O(r^{-1})
\end{align*}
to get $\Sigma_{AB} = -\frac{1}{2}C_{AB}$. Direct computation gives
\begin{align*}
\chib_{AB} = r(-\sigma_{AB} + \pl_u C_{AB} ) + O(1)
\end{align*}
and hence $\tr\chib = -\frac{2}{r} + O(r^{-2})$ and $\hat\chib_{AB} = r \pl_u C_{AB} + O(1)$. The limit of torsion follows from $\zeta_A = -\frac{1}{r} W_A^{(-2)} + O(r^{-2})$.
\end{proof}

\begin{definition}
The mass aspect and conjugate mass aspect function of Christodoulou-Klainerman are defined by
\begin{align*}
\mu = K + \frac{1}{4}\tr\chi\tr\chib - \mbox{div}\zeta \qquad \underline{\mu} = K + \frac{1}{4}\tr\chi\tr\chib + \mbox{div}\zeta.
\end{align*}
Here $K$ denotes the Gauss curvature of the two-sphere $r=$const. Their limits are defined by
\begin{align*}
N &= \lim_{r\rw\infty} r^3 \mu \\
\underline{N} &= \lim_{r\rw\infty} r^3 \underline{\mu}.
\end{align*}
\end{definition}
We express them in terms of the Bondi-Sachs metric coefficients as follows:
\begin{prop}
\begin{align*}
N = 2m + \frac{1}{2}\na^A\na^B C_{AB}, \qquad \underline{N} = 2m - \frac{1}{2}\na^A\na^B C_{AB}
\end{align*}
\end{prop}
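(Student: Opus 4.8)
The plan is to read off $N$ and $\underline N$ from the asymptotic expansions of the Bondi--Sachs metric coefficients, the definitions of $\mu$ and $\underline\mu$, and Proposition~\ref{tensor in BS variable}. Since $\mu$ and $\underline\mu$ differ only by the sign of $\mathrm{div}\,\zeta$, it suffices to compute the two limits $A:=\lim_{r\rw\infty} r^3\big(K+\tfrac14\tr\chi\,\tr\chib\big)$ and $D:=\lim_{r\rw\infty} r^3\,\mathrm{div}\,\zeta$ separately, and then conclude $N = A - D$ and $\underline N = A + D$. I expect to obtain $A = 2m$ and $D = -\tfrac12\na^A\na^B C_{AB}$, which yields the stated formulas.

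For $A$, the starting point is that $L=\pl_r$ gives $\tr\chi = \pl_r\log\sqrt{\det g} = 2/r$ \emph{exactly}, by the Bondi determinant condition $\det h_{AB}=\det\sigma_{AB}$. For $\chib$ I would use $\chib_{AB}=\langle D_A\Lb,\pl_B\rangle = V\Gamma^u_{AB}+2\Gamma^r_{AB}$, which follows from $\langle\Lb,\pl_B\rangle=0$ together with the metric pairings of $\Lb$ with $\pl_u$ and $\pl_r$; then $\tr\chib=g^{AB}\chib_{AB}$ is computed by inserting the expansions of $U$, $V$, $W^A$, $h_{AB}$ and of $g^{AB}$, the upshot being $\tr\chib = -\tfrac2r + \tfrac1{r^2}\big(4m-\na^A\na^B C_{AB}\big)+O(r^{-3})$. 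Separately, the Gauss curvature of $g_{AB}=r^2 h_{AB}$ is obtained from the linearization of the scalar curvature around the round metric (using that $C_{AB}$ is trace-free), giving $K = r^{-2}+\tfrac1{2r^3}\na^A\na^B C_{AB}+O(r^{-4})$. The $O(r^{-2})$ terms of $K$ and $\tfrac14\tr\chi\,\tr\chib=\tfrac1{2r}\tr\chib$ then cancel, and the $O(r^{-3})$ terms combine to $2m/r^3$, so $A=2m$. For $D$, Proposition~\ref{tensor in BS variable} gives $Z_A=\lim_{r\rw\infty} r\zeta_A=-\tfrac12\na^B C_{AB}$; since $\mathrm{div}\,\zeta = g^{AB}\na^{g}_A\zeta_B$ with $g^{AB}=r^{-2}\sigma^{AB}+O(r^{-3})$ and the Christoffel symbols of $g_{AB}$ agreeing with those of $\sigma_{AB}$ to leading order, one gets $D=\na^A Z_A=-\tfrac12\na^A\na^B C_{AB}$. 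Combining, $N=A-D=2m+\tfrac12\na^A\na^B C_{AB}$ and $\underline N=A+D=2m-\tfrac12\na^A\na^B C_{AB}$.

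The main obstacle is the bookkeeping in the $r^{-2}$ expansion of $\tr\chib$: one must carry simultaneously the subleading $O(r^{-2})$ terms of $U$ (quadratic in $C$), of $V$ (which carries $2m/r$ and, a priori, $N_A$ at order $r^{-2}$), of $W^A$ and of $h_{AB}$, and distinguish coordinate from $\sigma$-covariant derivatives. Three points deserve care: (i) the $O(r)$ parts of $V\Gamma^u_{AB}$ and $2\Gamma^r_{AB}$ must add up to $r(-\sigma_{AB}+N_{AB})$, matching the proof of Proposition~\ref{tensor in BS variable}; (ii) the news contributions $C^{AB}N_{AB}$ appearing in the $O(1)$ part of $\chib_{AB}$ and from contracting $\chib_{AB}$ against the $O(r^{-1})$ part of $g^{AB}$ must cancel in $\tr\chib$, so that $N_A$ drops out entirely; and (iii) the non-tensorial Christoffel-of-$\sigma$ terms coming from $\Gamma^r_{AB}$ must cancel upon tracing, leaving the covariant quantity $4m-\na^A\na^B C_{AB}$. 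Once this expansion is established, the remainder of the argument is the immediate combination of $A$ and $D$ above.
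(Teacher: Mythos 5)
Your proposal is correct and follows essentially the same route as the paper: the paper's proof likewise computes $K = r^{-2} + \tfrac{1}{2r^3}\na^A\na^B C_{AB} + O(r^{-4})$ and $\tfrac14\tr\chi\tr\chib = -r^{-2} + r^{-3}(2m - \tfrac12\na^A\na^B C_{AB})$, so that the $r^{-3}$ coefficient of their sum is $2m$, and then uses $\lim r^3\,\mathrm{div}\,\zeta = \na^A Z_A = -\tfrac12\na^A\na^B C_{AB}$ from the preceding proposition. Your version simply spells out the bookkeeping for $\tr\chib$ and the divergence that the paper leaves implicit.
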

\begin{proof}
We compute $K = \frac{1}{r^2} + \frac{1}{2r^3} \na^A\na^B C_{AB} + O(r^{-4})$ and $\frac{1}{4}\tr\chi\tr\chib = -\frac{1}{r^2} + \frac{1}{r^3}(2m-\frac{1}{2}\na^A\na^B C_{AB})$ and the assertion follows.
\end{proof}

We turn to curvature components. The convention of Riemann curvature tensor is
\begin{align*}
R(X,Y)Z &= (D_X D_Y -D_Y D_X - D_{[X,Y]} )Z\\
R(X,Y,W,Z) &= \langle R(X,Y)Z,W \rangle.
\end{align*}
\begin{definition}\label{def curv components}
Define the curvature components
\begin{align*}
\underline{\alpha}_{AB} &=  R(\pl_A, \Lb, \pl_B, \Lb)\\
\underline{\beta}_A &= \frac{1}{2} R(\pl_A, \Lb,\Lb, L)\\
\rho &= \frac{1}{4} R(\Lb,L,\Lb,L)\\
\sigma \slashed{\epsilon}_{AB} &= \frac{1}{2} R(\pl_A, \pl_B, \Lb, L)\\
\beta_A &= \frac{1}{2}R(\pl_A,L,\Lb,L)
\end{align*}
Here $\slashed{\epsilon}_{AB} dx^A \wedge dx^B$ is the area form of the two-sphere with respect to $g_{AB}$. Their limits are defined by
\begin{align*}
\underline{A}_{AB} &= \lim_{r\rw\infty} r^{-1}\underline{\alpha}_{AB}\\
\underline{B}_A &= \lim_{r\rw\infty} r\underline{\beta}_A\\
P &= \lim_{r\rw\infty} r^3\rho\\
Q&= \lim_{r\rw\infty}r^3 \sigma\\
B_A &= \lim_{r\rw\infty} r^3 \beta_A
\end{align*}
Note that $(\underline{A},\underline{B})$ were denoted by $(A,B)$ in \cite{Christo1991}.
\end{definition}
We express them in terms of the Bondi-Sachs metric coefficients as follows:
\begin{prop}\label{curv in BS variable}
\begin{align*}
\underline{A}_{AB} &= -2 \pl_u N_{AB}\\
\underline{B}_A &= \na^B N_{AB}\\
P &= -2m - \frac{1}{4}C_{AB}N^{AB}\\
Q&= \epsilon^{AB} \lt( -\frac{1}{4}C_A^D N_{DB} - \frac{1}{2} \na_A \na^D C_{DB} \rt)\\
B_A &= -N_A 
\end{align*}
\end{prop}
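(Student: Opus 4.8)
The plan is to prove Proposition~\ref{curv in BS variable} by a direct computation in Bondi--Sachs coordinates, in the same spirit as the proofs of Proposition~\ref{tensor in BS variable} and the preceding proposition on $N,\underline N$. Concretely: assemble $g_{\alpha\beta}$ and $g^{\alpha\beta}$ from \eqref{spacetime_metric} together with the asymptotic expansions of $U,V,W^A,h_{AB}$ recorded in Section~2, compute the Christoffel symbols $\Gamma^\gamma_{\alpha\beta}$ to the order in $r^{-1}$ demanded by each curvature component, and then evaluate the relevant Riemann components $R(\cdot,\cdot,\cdot,\cdot)$ contracted with the null frame $L=\pl_r$, $\Lb=\frac{2}{U}(\pl_u - W^D\pl_D - \frac{V}{2}\pl_r)$ and the coordinate vectors $\pl_A$.

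First I would expand the frame in the coordinate basis, noting that $\Lb = 2\pl_u - \pl_r + O(r^{-1})$ at leading order, with the subleading pieces carrying $m$, $\na_B C^{AB}$ and $N_A$ through $V$ and $W^A$; the $\det h_{AB}$ gauge condition (which already produced $\tr\chi = 2/r$ and the $r^{-2}$ term of $h_{AB}$) and the outgoing radiation condition are precisely what keep these expansions consistent. Then, using $R^{\delta}_{\ \gamma\alpha\beta} = \pl_\alpha\Gamma^\delta_{\beta\gamma} - \pl_\beta\Gamma^\delta_{\alpha\gamma} + \Gamma^\delta_{\alpha\mu}\Gamma^\mu_{\beta\gamma} - \Gamma^\delta_{\beta\mu}\Gamma^\mu_{\alpha\gamma}$, I would extract: the $r^{-1}$ coefficient of $R(\pl_A,\Lb,\pl_B,\Lb)$, which brings in a $\Lb$-derivative of $\hat\chib_{AB} = rN_{AB}+O(1)$ and hence $\pl_u N_{AB}$, giving $\underline A_{AB} = -2\pl_u N_{AB}$; and the $r^{-3}$ coefficients of $\rho$, $\sigma$ and $\beta_A$, which are the delicate ones since they feed on the subleading terms of $V$, $W^A$ and $h_{AB}$ --- but those are exactly the terms carrying $m$, $N_A$, $C_{AB}$, $N_{AB}$ in the expansions quoted from \cite{CJK}, so this step is bookkeeping rather than new analysis. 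For $\underline\beta_A$ I would instead invoke the Codazzi equation relating $\mbox{div}\,\hat\chib$ to $\underline\beta$, together with $\Xi_{AB}=N_{AB}$ and $Z_A=-\frac12\na^B C_{AB}$ from Proposition~\ref{tensor in BS variable}, reading off $\underline B_A = \na^B N_{AB}$ directly; likewise the Gauss equation expresses $\rho$ (and its dual counterpart $\sigma$) in terms of $K$, $\tr\chi\tr\chib$, $\hat\chi\cdot\hat\chib$ and $\mbox{div}\,\zeta$, so $P$ follows from the already-computed limit $N = 2m + \frac12\na^A\na^B C_{AB}$ of $\mu$, the limit of $\mbox{div}\,\zeta$ coming from $Z_A$, and $\lim_{r\to\infty} r^3(\hat\chi\cdot\hat\chib) = -\frac12 C_{AB}N^{AB}$ (using $\hat\chi_{AB}=-\frac12 C_{AB}+O(r^{-1})$).

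The hard part will be the order-$r^{-3}$ bookkeeping for $P$, $Q$ and $B_A$: the Christoffel symbols must be expanded far enough, the quadratic $\Gamma\Gamma$ cross terms tracked carefully, and the gauge conditions used consistently so that spurious lower-order contributions cancel and the signs come out right --- recall that the Christodoulou--Klainerman angular-momentum aspect differs from ours by a sign ($B_A=-N_A$, while the Chrusciel--Jezierski--Kijowski convention is $-3N_A$). A useful set of consistency checks at the end is provided by the Christodoulou--Klainerman Bianchi identities, which tie $\pl_u N_{AB}$, $\na^B N_{AB}$ and $\pl_u m$ to one another in a manner compatible with the mass-loss formula \eqref{m_aspect_evol}, and by agreement with \cite{Christo1991, ChristoMG9} --- which is exactly the point of the proposition.
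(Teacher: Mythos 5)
Your plan coincides with the paper's proof for four of the five identities: the paper likewise obtains $\underline{A}_{AB}=-2\pl_u N_{AB}$ from the rescaled propagation equation $\hat{\underline{D}}\hat\chib=-\underline\alpha$ (your ``$\Lb$-derivative of $\hat\chib_{AB}=rN_{AB}+O(1)$'' is exactly this), $\underline{B}_A=\na^B N_{AB}$ from the Codazzi equation for $\hat\chib$, and $P$, $Q$ from the Hodge system $\slashed{\mbox{curl}}\,\zeta=\sigma-\frac12\hat\chi\wedge\hat\chib$, $\slashed{\mbox{div}}\,\zeta=\underline{\mu}+\rho-\frac12\hat\chi\cdot\hat\chib$ combined with the limits $\Sigma,\Xi,Z,N,\underline{N}$ already established; your route through the Gauss equation and $\mu$ is the same computation written with the opposite sign of $\mbox{div}\,\zeta$, and both work since the preceding proposition supplies both $N$ and $\underline{N}$. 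The genuine divergence is $B_A=-N_A$: you propose to extract the $r^{-3}$ coefficient of $\beta_A$ by brute-force expansion of the Riemann tensor through Christoffel symbols, whereas the paper reads it off from the subleading $O(r^{-3})$ order of the Codazzi equation $\slashed{\mbox{div}}\,\hat\chi+\hat\chi\cdot\zeta=\frac12(\slashed{\na}\tr\chi+\tr\chi\,\zeta)-\beta$. The paper's route buys a real economy: it only needs the connection coefficient $\zeta_A$ to order $r^{-2}$ (first derivatives of the metric, with $N_A$ entering through the $r^{-3}$ term of $W^A$), plus a pointwise identity for symmetric traceless $2$-tensors to collapse the quadratic $C$-terms to $\frac18\pl_A|C|^2+\frac14 C_{AB}\na_D C^{BD}$; your route needs the metric and the $\Gamma\Gamma$ cross terms one order deeper, which is precisely the step you defer as ``bookkeeping'' without carrying it out, and is where the sign conventions (note $B_A=-N_A$, not $+N_A$) are most easily lost. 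The approach would work, but the consistency checks you list at the end should be treated as mandatory rather than optional for that last identity.
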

\begin{proof}
The formula for $\underline{A}$ is obtained from (6) of \cite{Christo1991}, $2 \frac{\pl \Xi}{\pl u} = -\underline{A}$, which is the rescaled limit of the propagation equation $\hat{\underline{D}} \hat\chib = -\underline{\alpha}$.

The formula for $\underline{B}$ is obtained from (2) of \cite{Christo1991}, $\na^B \Xi_{AB} = \underline{B}_A$, which is the rescaled limit of the Codazzi equation $\slashed{\mbox{div}} \hat\chib - \hat\chib \cdot \zeta = \frac{1}{2} \lt( \slashed{\na} \tr\chib - \tr\chib \zeta \rt) + \underline{\beta}$.

The formula for $P$ and $Q$ are obtained from (3) of \cite{Christo1991},
\[ \epsilon^{AB} \na_A Z_B = Q - \frac{1}{2}\Sigma \wedge \Xi, \qquad \na^A Z_A = \underline{N} + P - \frac{1}{2} \Sigma \cdot \Xi, \]
which is the rescaled limit of the Hodge system
\[ \slashed{curl} \zeta = \sigma -\frac{1}{2} \hat\chi \wedge \hat\chib, \qquad \slashed{\mbox{div}} \zeta = \underline{\mu} + \rho -\frac{1}{2}\hat\chi \cdot \hat\chib. \]

Finally, we consider the Codazzi equation
\[ \slashed{\mbox{div}} \hat\chi + \hat\chi \cdot \zeta = \frac{1}{2} \lt( \slashed{\na} \tr\chi + \tr\chi \zeta \rt) - \beta. \]
Its leading order at $O(r^{-2})$ leads to (1) of \cite{Christo1991} and its subleading order at $O(r^{-3})$ leads to
\[\begin{split} &\lt( -\frac{1}{4} \pl_A |C|^2 + \frac{1}{2} C^{BD} \na_D C_{AB} + \frac{1}{4} \na_A C_D^E C^D_E + \frac{1}{2} \na_D C^{DE}C_{AE} \rt) + \frac{1}{4}C_{AB} \na_D C^{BD}\\
 = &\zeta_A^{(-2)} - B_A. \end{split} \]
We simplify the second term in the parentheses by the identity $\na_{(D}C_{B)A} = \na_A C_{BD} + \na^E C_{AE} \sigma_{BD} - \na^E C_{E(D} C_{B)A}$ and the left-hand side becomes $\frac{1}{8} \pl_A |C|^2 + \frac{1}{4}C_{AB}\na_D C^{BD}$. Direct computation yields $\zeta_A^{(-2)} = -N_A + \frac{1}{8} \pl_A |C|^2 + \frac{1}{4}C_{AB}\na_D C^{BD}$ and the formula for $B$ follows.
\end{proof}
\section{Integration by Part Formula}
In this section, we prove two integration formula that will be used to compute angular momentum and center of mass in spacetime with vanishing news.
\begin{theorem}\label{thm_int_1}
Let $F_{AB} = 2 \na_A\na_B f - \Delta f \sigma_{AB}$ and $P_{BA} = \na_B\na^D C_{DA} - \na_A\na^D C_{DB}$. Then
\begin{align*}
\int_{S^2} Y^A \lt( \frac{1}{4} C_{AB} \na_D F^{DB} + \frac{1}{4}	 F_{AB} \na_D C^{DB} -\frac{3}{4} P_{BA} \na^B f -\frac{1}{4} \na^B P_{BA} f \rt)=0.
\end{align*}
\end{theorem}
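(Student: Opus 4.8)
The plan is to reduce the identity, using the special structure of $Y^A=\epsilon^{AB}\na_B\tilde X^k$ and of the auxiliary tensors $F_{AB}$ and $P_{BA}$, to a handful of the scalar integration-by-parts identities already recorded above, and then to exhibit the cancellation. The two facts about $Y$ we use are that it is a rotation Killing field, so $\na_A Y^A=0$ (hence $Y^A\na_A$ is skew-adjoint on $L^2(S^2)$ and commutes with $\Delta$) and $\na_A Y_B=\tilde X^k\epsilon_{AB}$ by \eqref{hessian_X_k}; we also record the contraction $Y^A\epsilon_{AB}=\na_B\tilde X^k$, which follows from \eqref{contract_epsilon_square}.

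First I would simplify the two derived tensors. Since $F_{AB}=2\na_A\na_B f-\Delta f\,\sigma_{AB}$ is twice the traceless Hessian of $f$, it has closed potential $2f$ and vanishing co-closed potential; in particular $\na_D F^{DB}=\na^B(\Delta+2)f$, as already noted. Next, $P_{BA}=\na_B\na^D C_{DA}-\na_A\na^D C_{DB}$ is antisymmetric, hence on the two-sphere equal to $g\,\epsilon_{BA}$ for a scalar $g$; writing $C_{AB}$ in the closed/co-closed form \eqref{CAB} and computing $\epsilon^{AB}P_{AB}$ with the commutation formulae \eqref{commutation} and the contraction identities \eqref{epsilon_square}, \eqref{contract_epsilon_square}, one finds $P_{BA}=-\tfrac12\,\Delta(\Delta+2)\underline{c}\,\epsilon_{BA}$, and hence $\na^B P_{BA}=-\tfrac12\,(\na^B\Delta(\Delta+2)\underline{c})\,\epsilon_{BA}$. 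Thus the last two terms of the integrand depend only on the co-closed potential $\underline{c}$ of $C$, while $F$ enters only through $f$.

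With these substitutions, each of the four terms becomes the integral over $S^2$ of $Y^A$ contracted with an explicit $1$-form built from derivatives of $\tilde X^k$ together with $c,\underline{c},f$; contracting $Y^A$ into each factor and integrating by parts converts everything into scalar integrals over $S^2$. Because the integrand is bilinear, with one factor coming from $C$ and the other from $f$, I would organize the verification by the dependence on $C$: the part depending on the closed potential $c$ arises only from $C_{AB}$ in the first term and $\na_D C^{DB}$ in the second, and these two contributions cancel against each other directly, using skew-adjointness of $Y^A\na_A$, self-adjointness of $\Delta$, the Hessian equation \eqref{hessian_X_k}, and $\int_{S^2}Y^A\na_A(uv)=0$; the part depending on the co-closed potential $\underline{c}$ collects contributions from all four terms, and these cancel once one invokes \eqref{ibp1}, \eqref{ibp2} and the identity $P_{BA}=-\tfrac12\Delta(\Delta+2)\underline{c}\,\epsilon_{BA}$ established above. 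Lemma \ref{same_mode} may be used to discard off-mode terms wherever convenient.

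The main obstacle is organizational: keeping track of the many $\epsilon$- and curvature-commutator terms generated by moving derivatives around, and matching the numerical coefficients so that the $\underline{c}$-cancellation becomes manifest. The single structural point that makes the bookkeeping tractable is that $P_{BA}$ is captured entirely by $\Delta(\Delta+2)\underline{c}$ --- precisely the combination to which the divergence terms $\na_D C^{DB}$ and the Hessian terms also reduce after enough integration by parts --- so that the $\underline{c}$-dependence of the four pieces must, and does, sum to zero.
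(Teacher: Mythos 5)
Your strategy is sound and its key structural claims check out, but it takes a genuinely different route from the paper. The paper never invokes the potential decomposition: it integrates the two $P$-terms by parts directly at the tensor level, using the antisymmetry of $P_{BA}$ and of $\na_AY_B=\tilde X^k\epsilon_{AB}$ together with the commutation formula \eqref{commutation}, and lands after a few lines on
\[
\int_{S^2}\lt(-\tfrac14\,Y^A\na^DC_D^{\;\;B}F_{AB}+\tfrac14\,Y^A\na^DC_{AD}(\Delta+2)f\rt),
\]
which visibly cancels the first two terms once one uses $\na_DF^{DB}=\na^B(\Delta+2)f$ and $\na^B(Y^AC_{AB})=0$. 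You instead decompose $C_{AB}$ via \eqref{CAB}, observe that $P_{BA}=-\tfrac12\Delta(\Delta+2)\cb\,\epsilon_{BA}$ (correct: it follows from $\epsilon^{PQ}\na_P\na^EC_{EQ}=-\tfrac12\Delta(\Delta+2)\cb$, which the paper uses in the proof of Theorem \ref{evolution1}), so that the $c$-dependence lives only in the first two terms, and split the verification by sector. Your claim that the closed sector cancels on its own is right and is the nicest point of the argument: setting $B(u,v)=\int_{S^2}Y^A(\na_A\na_Bu-\tfrac12\Delta u\,\sigma_{AB})\na^B(\Delta+2)v$, integration by parts with \eqref{commutation} and \eqref{hessian_X_k} gives $B(u,v)=-\tfrac12\int_{S^2}Y^A\na_A\big((\Delta+2)u\big)\,(\Delta+2)v$, which is antisymmetric in $(u,v)$ by skew-adjointness of $Y^A\na_A$, and the closed sector is exactly $\tfrac14\big(B(c,f)+B(f,c)\big)=0$. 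What your route buys is structural insight --- the $P$-terms are purely co-closed in origin, so the identity really splits into two independent cancellations. What it costs is that the co-closed sector collects contributions from all four terms, and you assert rather than carry out that cancellation; it does close with the tools you name (\eqref{ibp1}, \eqref{ibp2}, $Y^A\epsilon_{AB}=\na_B\tilde X^k$), but that bookkeeping alone is at least as long as the paper's entire proof, so the paper's direct tensorial integration by parts remains the more economical argument.
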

\begin{proof}
We integrate by parts the last two terms to get 
\begin{align*}
&\int_{S^2} -\frac{1}{2} Y^A  (\na_B\na^D C_{AD} - \na_A\na^D C_{BD}) \na^B f + \frac{1}{2}\na^B Y^A \na_B\na^D C_{AD} \cdot f.\\
=& \int_{S^2} \frac{1}{2} \na_B Y^A \na^D C_{AD} \na^B f + \frac{1}{2} Y^A \na^D C_{AD} \Delta f - \frac{1}{2} Y^A \na^D C^B_D \na_A\na_B f\\
& + \int_{S^2} \frac{1}{2} Y^A \na^D C_{AD} f - \frac{1}{2} \na^B Y^A \na^D C_{AD} \na_B f\\
= & \int_{S^2} -\frac{1}{2} Y^A \na^D C_D^B (\na_A\na_B f - \frac{1}{2}\Delta f \sigma_{AB}) + \frac{1}{4} Y^A \na^D C_{AD} (\Delta + 2)f
\end{align*}
\end{proof}

\begin{theorem}\label{thm_int_2}
Let $F_{AB} = 2 \na_A\na_B f - \Delta f \sigma_{AB}$ and $P_{BA} = \na_B\na^D C_{DA} - \na_A\na^D C_{DB}$. Then 
\begin{align*}
\int_{S^2} \na^A \tilde X^k \lt( C_{AB} \na_D F^{BD} + F_{AB}\na_D C^{BD} + \frac{1}{2} \na_A (C_{BD}F^{BD} ) - f\na^B P_{BA} - 3 P_{BA} \na^B f \rt)=0. 	
\end{align*}
\end{theorem}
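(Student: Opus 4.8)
The plan is to mimic the proof of Theorem \ref{thm_int_1}, handling the purely shear-and-$f$ terms by repeated integration by parts on $S^2$ and then invoking the special Hessian identity \eqref{hessian_X_k} for $\tilde X^k$. First I would deal with the two terms $-f\na^B P_{BA} - 3P_{BA}\na^B f$ exactly as in Theorem \ref{thm_int_1}: integrating by parts the term $f\na^B P_{BA}$ moves the divergence onto $\na^A\tilde X^k$, and since $\na^B\na^A\tilde X^k = -\tilde X^k\sigma^{AB}$ by \eqref{hessian_X_k}, several contributions collapse. The residual terms, after commuting derivatives using \eqref{commutation}, should combine into an expression of the form $\int_{S^2}\na^A\tilde X^k\,\na^DC_D^{\;B}\big(\text{Hessian of }f\text{ plus lower order}\big)$, precisely mirroring the final line of the proof of Theorem \ref{thm_int_1}.

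Next I would expand $F_{AB} = 2\na_A\na_Bf - \Delta f\,\sigma_{AB}$ throughout. The key observation is that $F_{AB}$ is itself a symmetric traceless $2$-tensor whose ``closed potential'' is $2f$ (and whose co-closed potential vanishes), so $\na_DF^{BD} = \na^B(\Delta+2)f$, exactly as recorded in the main text just before \eqref{Jf_temp}. Thus $C_{AB}\na_DF^{BD} = C_{AB}\na^B(\Delta+2)f$, and after integrating by parts against $\na^A\tilde X^k$ and using $\na^BC_{AB}$'s own potential decomposition (namely $\na^BC_{BD} = \tfrac12\na_D(\Delta+2)c + \tfrac12\epsilon^{B}_{\;D}\na_B(\Delta+2)\cb$, as used in Lemma's proof around \eqref{YCN}), this term should be expressible in the same currency as the others. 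The term $F_{AB}\na_DC^{BD}$ is handled symmetrically, and the term $\tfrac12\na_A(C_{BD}F^{BD})$ integrates by parts to $-\tfrac12\Delta\tilde X^k(C_{BD}F^{BD}) = \tilde X^k(C_{BD}F^{BD})$ using $\Delta\tilde X^k = -2\tilde X^k$; this scalar term is then matched against the $(\Delta+2)$-type boundary contributions coming from the previous integrations by parts.

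The main obstacle will be the bookkeeping: ensuring that all the boundary terms generated by the successive integrations by parts — those involving $(\Delta+2)f$ acting on $\na^DC_{AD}$, the scalar term $\tilde X^k C_{BD}F^{BD}$, and the curvature-commutator remainders from \eqref{commutation} — cancel exactly rather than merely simplify. I expect that the cleanest route is to convert everything to the potentials $c,\cb$ of $C_{AB}$ and $f$ (treating $F_{AB}$ via its potential $2f$), apply the integration-by-parts lemmas \eqref{ibp1} and \eqref{ibp2} together with Lemma \ref{same_mode} to kill mismatched-mode pairings, and observe that the surviving integrand is a total Poisson bracket $[\,\cdot\,,\cdot\,]_1$ of the type appearing in \eqref{YCN_bracket}, which integrates to zero against $\na^A\tilde X^k$ by the antisymmetry already exploited in the proof of Proposition \ref{evolution in terms of potentials}. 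Verifying that the numerical coefficients $1,1,\tfrac12,-1,-3$ in the statement are exactly the ones that force this cancellation is the crux; everything else is routine spherical calculus.
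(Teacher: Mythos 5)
Your setup is on the right track and reproduces the paper's first step: writing $-f\na^B P_{BA} - 3P_{BA}\na^B f = -\na^B(fP_{BA}) - 2P_{BA}\na^B f$ and integrating the first piece by parts, the Hessian identity \eqref{hessian_X_k} kills the $\na^B\na^A\tilde X^k$ contribution (since $P_{BA}$ is antisymmetric and $\sigma^{AB}$ is symmetric), leaving $\int_{S^2}\na^A\tilde X^k(-2P_{BA})\na^B f$; and the identification $\na_D F^{BD} = \na^B(\Delta+2)f$ is correct and relevant. But the proposal stops exactly where the theorem's content begins: you explicitly defer ``verifying that the numerical coefficients $1,1,\tfrac12,-1,-3$ \dots force this cancellation'' as ``the crux,'' so the identity is asserted rather than proved. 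The paper's proof consists precisely of that verification: two further rounds of integration by parts applied to $\int_{S^2}\na^A\tilde X^k(-2P_{BA})\na^B f$, each time invoking $\na_B\na^A\tilde X^k = -\tilde X^k\delta_B^{\;A}$ and the tracelessness of $C_{AB}$, which reassembles the expression into exactly $-\int_{S^2}\na^A\tilde X^k\big(C_{AB}\na_D F^{BD} + F_{AB}\na_D C^{BD} + \tfrac12\na_A(C_{BD}F^{BD})\big)$. No potentials, no mode decomposition, and no bracket identities are needed; this is the part you have not carried out.

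Moreover, the finishing mechanism you propose would not close the gap even if executed. You claim the surviving integrand should be ``a total Poisson bracket $[\cdot,\cdot]_1$ \dots which integrates to zero against $\na^A\tilde X^k$ by the antisymmetry.'' That is not a valid vanishing principle: integrals of the form $\int_{S^2}\tilde X^k\,[u,v]_1$ are generically nonzero --- they are exactly the nonvanishing flux terms of Theorem~\ref{evolution2_bracket}. The only bracket-type vanishing used in the paper is $\int_{S^2}\epsilon^{AB}\na_B\tilde X^k\,\na_A h = 0$ for a single function $h$, which holds because $\epsilon^{AB}\na_B\tilde X^k$ is divergence-free, not because of antisymmetry of the bracket. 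Similarly, Lemma~\ref{same_mode} only kills pairings of \emph{distinct} modes, while here $c$ and $f$ have no mode restriction, so those cancellations cannot be expected to occur termwise. To complete the proof you must either perform the direct tensorial integration by parts as the paper does, or genuinely carry the potential computation through to the end and exhibit the cancellation; neither has been done.
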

\begin{proof}
We integrate by parts the last two terms to get 
\begin{align*}
&\int_{S^2} \na^A \tilde X^k (-2P_{BA})\na^B f\\
=&  \int_{S^2} -2\na^A \tilde X^k (\na_B\na^D C_{DA} - \na_A\na^D C_{DB})\na^B f\\
=& \int_{S^2} -2\tilde X^k \na^D C_{DA}\na^A f + 2 \na^A \tilde X^k \na^D C_{DA}\Delta f + 4 \tilde X^k \na^D C_{DB}\na^B f - 2\na^A \tilde X^k \na^D C_{DB} \na_A\na^B f\\
=& \int_{S^2} 2 \tilde X^k \na^D C_{DA}\na^A f -\na^A \tilde X^k \na^D C_{DB} F_{AB} + \na^A \tilde X^k \na^D C_{DA}\Delta f \\
=& \int_{S^2} -2 \na^D \tilde X^k C_{DA}\na^A f -2 \tilde X^k C_{DA} \na^D\na^A f -\na^A \tilde X^k \na_D C^{DB} F_{AB} + \na^A \tilde X^k \na^D C_{DA}\Delta f \\
=& \int_{S^2} 2\na^A \tilde X^k \na^D C_{DA} \na^A f + \frac{1}{2}\Delta\tilde X^k C_{DA}F^{DA} - \na^A \tilde X^k \na_D C^{DB} F_{AB} + \na^A \tilde X^k \na^D C_{DA}\Delta f\\
=& \int_{S^2} -\na^A \tilde X^k C_{DA}\na^D(\Delta + 2)f -\frac{1}{2}\na^A \tilde X^k \na_A (C_{DB}F^{DB}) - \na^A \tilde X^k \na_D C^{DB} F_{AB}.
\end{align*}
\end{proof}

\end{document}